\title{%
Neutralizing Self-Selection Bias in\\ Sampling for Sortition
}
\author{%
Bailey Flanigan, \\
Computer Science Department\\
Carnegie Mellon University\\
\And
Paul G\"olz \\
Computer Science Department\\
Carnegie Mellon University\\
\AND
Anupam Gupta\\
Computer Science Department\\
Carnegie Mellon University\\
\And
Ariel D. Procaccia\\
School of Engineering and Applied Sciences\\
Harvard University
}
\setlist[itemize]{noitemsep, topsep=0pt}
\theoremstyle{plain}
\newtheorem{theorem}{Theorem}
\crefname{theorem}{Theorem}{Theorems}
\newtheorem{lemma}[theorem]{Lemma}
\crefname{lemma}{Lemma}{Lemmas}
\crefname{conjecture}{Conjecture}{Conjectures}
\crefname{corollary}{Corollary}{Corollaries}
\crefname{fact}{Fact}{Facts}
\crefname{proposition}{Proposition}{Propositions}
\theoremstyle{definition}
\crefname{example}{Example}{Examples}
\newcommand{\Pool}{\mathit{Pool}}
\newcommand{\Population}{\mathit{Population}} %
\newcommand{\Recipients}{\mathit{Recipients}}
\newcommand{\Panel}{\mathit{Panel}}
\newenvironment{customthm}[1]
  {\count@\c@theorem
   \global\c@theorem#1 %
    \global\advance\c@theorem\m@ne
   \theorem}
  {\endtheorem
   \global\c@theorem\count@}
\newenvironment{customlemma}[1]
  {\count@\c@theorem
   \global\c@theorem#1 %
    \global\advance\c@theorem\m@ne
   \lemma}
  {\endlemma
   \global\c@theorem\count@}
\begin{document}
\maketitle

\begin{abstract}
Sortition is a political system in which decisions are made by panels of randomly selected citizens.
The process for selecting a sortition panel is traditionally thought of as uniform sampling without replacement, which has strong fairness properties. 
In practice, however, sampling without replacement is not possible since only a fraction of agents is willing to participate in a panel when invited, and different demographic groups participate at different rates. In order to still produce panels whose composition resembles that of the population, we develop a sampling algorithm that restores close-to-equal representation probabilities for all agents while satisfying meaningful demographic quotas. As part of its input, our algorithm requires probabilities indicating how likely each volunteer in the pool was to participate.
Since these participation probabilities are not directly observable, we show how to learn them, and demonstrate our approach using data on a real sortition panel combined with information on the general population in the form of publicly available survey data. %

\end{abstract}

\section{Introduction} \label{sec:intro} %
What if political decisions were made not by elected politicians but by a randomly selected panel of citizens?
This is the core idea behind \emph{sortition}, a political system originating in the Athenian democracy of the 5th century BC~\cite{VanReybrouck16}.
A \emph{sortition panel} is a randomly selected set of individuals who are appointed to make a decision on behalf of population from which they were drawn. 
Ideally, sortition panels are selected via uniform sampling without
replacement\,---\,that is, if a panel of size $k$ is selected from a
population of size $n$, then each member of the population has a $k/n$
probability of being selected. This system offers appealing
fairness properties for both individuals and subgroups of the
population: First, each individual knows that she has the same
probability of being selected as anyone else, which assures her an
equal say in decision making. The resulting panel is also, in
expectation, \textit{proportionally representative} to all groups in the population: if a group
comprises $x \%$ of the population, they will in expectation comprise
$x \%$ of the panel as well. In fact, if $k$ is large enough,
concentration of measure makes it likely that even a group's \emph{ex post} share of the panel will be close to $x\%$. Both properties stand in contrast to the status quo of electoral democracy, in which the equal influence of individuals and the fair participation of minority groups are often questioned.

Due to the evident fairness properties of selecting decision makers randomly, sortition has seen a recent
surge in popularity around the world. Over the past year, we have
spoken with several nonprofit organizations whose role it is
to sample and facilitate sortition panels~\cite{pc}.
One of these nonprofits, the \emph{Sortition Foundation}, has organized more than 20 panels in about the past year.\footnote{\url{https://www.youtube.com/watch?v=hz2d_8eBEKg} at 8:53.}
Recent high-profile examples of sortition include
the Irish Citizens' Assembly,\footnote{\url{https://2016-2018.citizensassembly.ie/en/}} which led to Ireland's legalization of abortion in 2018,
and the founding of the first permanent sortition chamber of government,\footnote{\url{https://www.politico.eu/article/belgium-democratic-experiment-citizens-assembly/}} which occurred in a regional parliament in the German-speaking community of Belgium in 2019.

The fairness properties of sortition are often presented as we have described them\,---\,in the setting where panels are selected \emph{from the whole population} via uniform sampling without replacement.
As we have learned from practitioners, however, this sampling approach is not applicable in practice due to limited participation:
typically, only between 2 and 5\% of citizens are willing to participate in the panel when contacted.
Moreover, those who do participate exhibit self-selection bias, i.e., they are not representative of the population, but rather skew toward certain groups with certain features.

To address these issues, sortition practitioners introduce additional steps into the sampling process.
Initially, they send a large number of invitation letters to a random subset of the population.
If a recipient is willing to participate in a panel, they can opt into a \emph{pool} of volunteers.
Ultimately, the panel of size $k$ is sampled from the pool.
Naturally, the pool is unlikely to be representative of the population, which means that
uniformly sampling from the pool would yield panels whose demographic composition is unrepresentative of that of the population.
To prevent grossly unrepresentative panels, many practitioners impose quotas on groups based on orthogonal demographic features such as gender, age, or residence inside the country.
These quotas ensure that the ex-post number of panel members belonging to such a group lies within a narrow interval around the proportional share.
Since it is hard to construct panels satisfying a set of quotas, practitioners typically sample using greedy heuristics. One downside of these greedy algorithms is that they may need many attempts to find a valid panel, and thus might take exponential time to produce a valid panel if one exists. More importantly, however, even though these heuristics tend to eventually find valid panels, the probability with which each individual is selected via their selection process is not controlled in a principled way.

By not deliberately controlling individual selection probabilities, existing panel selection procedures fail to achieve basic fairness guarantees to individuals.
Where uniform sampling in the absence of self-selection bias selects each person with equal
probability $k/n$, currently-used greedy algorithms do not
even guarantee a minimum selection probability for members of the
\textit{pool}, let alone fairly distributed probabilities over members of the population. The absence of such a theoretical guarantee has real ramifications in practice: as we show in \Cref{sec:empirical}, in the real-world instance we study, the greedy algorithm selects several population members with probability less than half the magnitude of $k/n$.

This unfairness to individuals, while problematic in its own right, can also lead to unfairness for groups. In particular, current algorithms use quotas to enforce representation of a set of pre-chosen groups delineated by single features, but these quotas do not protect groups defined by \textit{intersections} of these features: for example, proportional representation of women and of young people does not guarantee the proportional representation of young women. By giving some individuals very low probability of selection on the basis of their \textit{combinations} of features, existing algorithms may systematically allocate very low probability to all members of a certain intersectional group, preventing their perspectives from being fairly represented on the panel.

\subsection{Our Techniques and Results}
The main contribution of this paper is a more principled sampling algorithm that, even in the presence of limited participation and self-selection bias, retains the individual fairness properties of uniform sampling in the absence of these challenges, while also allowing the deterministic satisfaction of quotas. In particular, our algorithm satisfies the following desiderata:
\begin{itemize}
    \item[--] \textit{End-to-End Fairness:} The algorithm selects the panel via a process such that all members of the population appear on the panel with probability asymptotically close to $k/n$.
    This also implies that all groups in the population, including those defined by intersections of arbitrarily many features, will be near-proportionally represented in expectation.
    \item[--] \textit{Deterministic Quota Satisfaction:} The selected panel satisfies certain upper and lower quotas enforcing approximate representation for a set of specified features.
    \item[--] \textit{Computational Efficiency:} The algorithm returns a valid panel (or fails) in polynomial time.
\end{itemize}
\textit{End-to-end} fairness refers to the fact that our algorithm is fair to individuals with respect to their probabilities of going from \textit{population} to \emph{panel}, across the intermediate steps of being invited, opting into the pool, and being selected for the panel.
End-to-end fairness can be seen primarily as a guarantee of individual fairness, while proportional representation of all groups in expectation, along with deterministic quota satisfaction, can be seen as two different guarantees of group fairness.  %

The key challenge in satisfying these desiderata is self-selection
bias, which can result in the pool being totally unrepresentative of
the population.
In the worst case, the pool can be so skewed that it
contains no representative panel\,---\,in fact, the pool might not even contain $k$ members.
As a result, no algorithm can produce a valid panel from
every possible pool. However, we are able to give an algorithm that succeeds with high probability, under weak
assumptions mainly relating the number of invitation
letters sent out to $k$ and the minimum participation probability over all agents.

Crucially, any sampling algorithm that gives (near-)equal selection
probability to all members of the population must reverse the
self-selection bias occurring in the formation of the pool. We formalize this self-selection bias by assuming
that each agent $i$ in the population agrees to join the pool with some positive participation probability $q_i$ when invited.
If these $q_i$ values are known for all members of
the pool, our sampling algorithm can use them to neutralize
self-selection bias.
To do so, our algorithm selects agent $i$ for the panel with a probability (close to) proportional to $1/q_i$, conditioned on $i$ being
in the pool.
This compensates for agents' differing likelihoods of entering the pool, thereby giving all agents an equal end-to-end probability.
On a given pool, the algorithm assigns marginal selection probabilities to every agent in the pool.
Then, to find a distribution over valid panels that implements these marginals, the algorithm randomly rounds a linear program using techniques based on discrepancy theory.
Since our approach aims for a fair \emph{distribution} of valid panels rather than just a single panel, we can give probabilistic fairness guarantees.

As we mentioned, our theoretical and algorithmic results, presented in \cref{sec:theory}, take the probabilities $q_i$ of all pool members $i$ as given in the input.
While these values are not observed in practice, we show in \cref{sec:learning} that they can be estimated from available data.
We cannot directly train a classifier predicting participation, however, because practitioners collect data only on those who \textit{do} join the pool, yielding only positively labeled data.
In place of a negatively labeled control group, we use publicly available survey data, which is unlabeled (i.e., includes no information on whether its members would have joined the pool).
To learn in this more challenging setting, we use techniques from \emph{contaminated controls}, which combine the pool data with the unlabeled sample of the population to learn a predictive model for agents' participation probabilities.
In \cref{sec:empirical}, we use data from a real-world sortition panel to show that plausible participation probabilities can be learned and that the algorithm produces panels that are close to proportional across features.
For a synthetic population produced by extrapolating the real data, we show that our algorithm obtains fair end-to-end probabilities.

\subsection{Related Work}
Our work is broadly related to existing literature on fairness in the areas of \emph{machine learning}, \emph{statistics}, and \emph{social choice}.
Through the lens of fair machine learning, our quotas can be seen as enforcing approximate statistical fairness for protected groups, and our near-equal selection probability as a guarantee on individual fairness.
Achieving simultaneous group- and individual-level fairness is a commonly discussed goal in fair machine learning~\cite{binns2020apparent,golz2019paradoxes, hu2020fair}, but one that has proven somewhat elusive.
To satisfy fairness constraints on orthogonal protected groups, we draw upon techniques from discrepancy theory~\cite{bansal2019,BF81}, which we hope to be more widely applicable in this area.

Our paper addresses self-selection bias, %
which is routinely faced in statistics and usually addressed by sample reweighting.
Indeed, our sampling algorithm can be seen as a way of reweighting the pool members under the constraint that weights must correspond to the marginal probabilities of a random distribution.
While reweighting is typically done by the simpler methods of post-stratification, calibration~\cite{HE91}, and sometimes regression~\cite{PEAS}, we use the more powerful tool of learning with contaminated controls~\cite{lancaster1996case,ward2009presence} to determine weights on a more fine-grained level.

Our paper can also be seen as a part of a broader movement towards statistical approaches in social choice~\cite{MX18,MPS+19,SPX14}.
The problem of selecting a representative sortition panel can be seen as a fair division problem, in which $k$ indivisible copies of a scarce resource must be randomly allocated such that an approximate version of the proportionality axiom is imposed.
Our group fairness guarantees closely resemble the goal of apportionment, in which seats on a legislature are allocated to districts or parties such that each district is proportionally represented within upper and lower quotas~\cite{BY10,BGP+20,Grimmett04}.

So far, only few papers in computer science and statistics directly address sortition~\cite{BGP19,ST13, WX12}. Only one of them~\cite{BGP19} considers, like us, how to sample a representative sortition panel. Unfortunately, their stratified sampling algorithm assumes that all agents are willing to participate, which, as we address in this paper, does not hold in practice.

\section{Model} \label{sec:model}

\noindent\textbf{Agents.}
Let $N$ be a set of $n$ agents, constituting the underlying population. Let $F$ be a set of \textit{features}, where feature $f \in F$ is a function $f : N \to V_f$, mapping the agents to a set $V_f$ of possible values of feature $f$. For example, for the feature $\textit{gender}$, we could have $V_{\textit{gender}} = \{\text{\textit{male, female, non-binary}}\}$. %
Let the \textit{feature-value pairs} be $\bigcup_{f \in F} \{(f,v)
\mid v \in V_f\}$. In our example, the feature-value pairs are $(\textit{gender},\textit{male})$, $(\textit{gender},\textit{female}),$ and $(\textit{gender},\textit{non-binary})$.
Denote the number of agents with a particular feature-value pair $(f, v)$ by $n_{f, v}$.

Each agent $i \in N$ is described by her \textit{feature vector} $F(i) := \{ (f, f(i)) \mid f \in F\}$, the set of all feature-value pairs pertaining to this agent. Building on the example instance, suppose we add the feature \textit{education-level}, so $F = \{\textit{gender}, \textit{education level}\}$. If \textit{education level} can take on the values \textit{college} and \textit{no college}, a college-educated woman would have the feature-vector $\{(\textit{gender}, \textit{female}), (\textit{education level}, \textit{college})\}$.

\noindent\textbf{Panel Selection Process.}
Before starting the selection process, organizers of a sortition panel must commit to the panel's parameters.
First, they must choose the number of \emph{recipients} $r$ who will
be invited to potentially join the panel, and the required \emph{panel
  size} $k$. Moreover, they must choose a set of features $F$ and values $\{V_f\}_{f \in F}$ over which quotas will be imposed.
Finally, for all feature-value pairs $(f,v)$, they must choose a \emph{lower
quota} $\ell_{f,v}$ and an \emph{upper quota} $u_{f,v}$, implying that the
eventual panel of $k$ agents must contain \textit{at least}
$\ell_{f,v}$ and \textit{at most} $u_{f,v}$ agents with value $v$ for feature $f$. Once these parameters are fixed, the panel selection process proceeds in three steps:
\begin{center}
     \large
    \emph{population} \Large $\ \xrightarrow[]{\ \text{\textbf{STEP 1}}\ }$  \large \ \emph{recipients} \  \Large $\ \xrightarrow[]{\ \text{\textbf{STEP 2}}\ }$  \large \ \emph{pool} \  \Large $\xrightarrow[]{\ \text{\textbf{STEP 3}}\ }$ \large \ \emph{panel}
\end{center}

In \textbf{STEP 1}, the organizer of the panel sends out $r$ letters, inviting a subset of the population\,---\,sampled with equal probability and without replacement\,---\,to volunteer for serving on the panel. We refer to the random set of agents who receive these letters as $\Recipients$. Only the agents in $\Recipients$ will have the opportunity to advance in the process toward being on the panel.

In \textbf{STEP 2}, each letter recipient may respond affirmatively to the invitation, thereby opting into the pool of agents from which the panel will be chosen.
These agents form the random set $\Pool$, defined as the set of agents who received a letter and agreed to serve on the panel if ultimately chosen.
We assume that each agent $i$ joins the pool with some \emph{participation probability} $q_i > 0$. Let $q^*$ be the lowest value of $q_i$ across all agents $i \in N$.
A key parameter of an instance is $\alpha \coloneqq q^* \, r / k$, which measures how large the number of recipients is relative to the other parameters.
Larger values of $\alpha$ will allow us the flexibility to satisfy stricter quotas.

In \textbf{STEP 3}, the panel organizer runs a \emph{sampling
  algorithm}, which selects the panel from the pool. This panel, denoted as the set $\Panel$, must be of size $k$ and satisfy the predetermined quotas for all feature-value pairs. The sampling algorithm may also fail without producing a panel.
  
  We consider the first two steps of the process to be fully prescribed. The focus of this paper is to develop a sampling algorithm for the third step that satisfies the three desiderata listed in the introduction: end-to-end fairness, deterministic quota satisfaction, and computational efficiency.

\section{Sampling Algorithm} \label{sec:theory} %
In this section, we give an algorithm which ensures, under natural assumptions, that every agent ends up on the panel with probability at least $\big(1 - o(1)\big) \, k / n$ as $n$ goes to infinity.\footnote{We allow $k \geq 1$ and $r \geq 1$ to vary arbitrarily in $n$ and assume that the feature-value pairs are fixed.}
Furthermore, the panels produced by this algorithm satisfy non-trivial quotas, which ensure that the ex-post representation of each feature-value pair cannot be too far from being proportional.

Our algorithm proceeds in two phases: \textit{I. assignment of marginals}, during which the algorithm assigns a marginal selection probability to every agent in the pool, and \textit{II. rounding of marginals}, in which the marginals are dependently rounded to $0/1$ values, the agents' indicators of being chosen for the panel.
As we discussed previously, our algorithm succeeds only with high probability, rather than deterministically; it may fail in phase~I if the desired marginals do not satisfy certain conditions.
We refer to pools on which our algorithm succeeds as \textit{good pools}.
A good pool, to be defined precisely later, is one that is highly representative of the population\,---\,that is, its size and the prevalence of all feature values within it are close to their respective expected values.
We leave the behavior of our algorithm on bad pools unspecified: while the algorithm may try its utmost on these pools, we give no guarantees in these cases, so the probability of representation guaranteed to each agent must come only from good pools and valid panels.
Fortunately, under reasonable conditions, we show that the pool will be good with high probability.
When the pool is good, our algorithm always succeeds, meaning that our algorithm is successful overall with high probability.

Our algorithm satisfies the following theorem, guaranteeing close-to-equal end-to-end selection probabilities for all members of the population as well as the satisfaction of quotas. 
\begin{theorem} \label{thm:main}
Suppose that $\alpha \to \infty$ and $n_{f,v} \geq n/k$ for all feature-value pairs $f,v$.
Consider a sampling algorithm that, on a good pool, selects a random panel, $\Panel$, via the randomized version of \cref{lem:beckfiala_random}, and else does not return a panel. 
This process satisfies, for all $i$ in the population, that
    \[ \mathbb{P}[i \in \Panel] \geq (1 - o(1)) \, k/n. \]
All panels produced by this process  satisfy the quotas $\ell_{f,v} \coloneqq (1 - \alpha^{-.49}) \, k \, n_{f,v} / n - |F|$ and $u_{f,v} \coloneqq (1 + \alpha^{-.49}) \, k \, n_{f,v} / n + |F|$ for all feature-value pairs $f,v$.
\end{theorem}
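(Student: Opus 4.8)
The plan is to analyze the three steps of the selection process separately and multiply the resulting probabilities. Write $q_i$ for participation probabilities and, for a pool $P$, put $S(P) := \sum_{i \in P} 1/q_i$ and $S_{f,v}(P) := \sum_{i \in P,\, f(i) = v} 1/q_i$. The design principle is that Phase~I should assign agent $i \in P$ the marginal $\pi_i := (k/S(P)) \cdot (1/q_i)$: this makes $\sum_{i \in P} \pi_i = k$ exactly, and since $\pi_i$ is (nearly) proportional to $1/q_i$ it (nearly) cancels the bias introduced by $i$ joining the pool only with probability $q_i$. I would first carve out a high-probability ``good'' event on which these marginals are both legitimate and roundable, then show that on this event the end-to-end probability is $(1-o(1))\,k/n$ and the rounded panel meets the quotas.

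\emph{Step 1 (defining and controlling good pools).} Call a pool $P$ \emph{good} if $S(P) \in (1 \pm \tfrac12 \alpha^{-.49})\, r$ and, for every feature-value pair $(f,v)$, $S_{f,v}(P) \in (1 \pm \tfrac12 \alpha^{-.49})\, (r/n)\, n_{f,v}$. Note $\Exp{S(\Pool)} = \sum_{i \in N} (r/n)\, q_i \cdot (1/q_i) = r$ and $\Exp{S_{f,v}(\Pool)} = (r/n)\, n_{f,v}$. Moreover $S(\Pool)$ and each $S_{f,v}(\Pool)$ are weighted sums of indicators $\mathbb{1}[i \in \Pool]$, each of which is a monotone function of the (negatively associated) without-replacement recipient indicators and the independent participation coins, hence negatively associated, with weights $1/q_i \le 1/q^\ast$. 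A Chernoff bound for negatively associated variables then gives a deviation probability at most $\exp(-\Theta(\alpha^{-.98}\, q^\ast r)) = \exp(-\Theta(\alpha^{.02} k))$ for $S$, and, using $n_{f,v} \ge n/k$, at most $\exp(-\Theta(\alpha^{-.98}\, q^\ast (r/n) n_{f,v})) = \exp(-\Theta(\alpha^{.02}))$ for each of the $O(1)$ feature-value constraints; a union bound and $\alpha \to \infty$ give $\Pro{\Pool\ \text{not good}} = o(1)$. The delicate point is that the end-to-end argument needs $\Pro{\Pool\ \text{good} \mid i \in \Pool} = 1 - o(1)$ for each fixed $i$: rather than dividing by the possibly tiny $\Pro{i \in \Pool}$, I would rerun the concentration after conditioning on $i \in \Recipients$ and on $i$'s coin, observing that this perturbs each relevant mean additively by $O(1/q^\ast)$, an $O(1/\alpha)$ relative change absorbed into the slack.

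\emph{Step 2 (Phases~I and~II on a good pool).} On a good pool, $S(P) \ge (1 - \tfrac12\alpha^{-.49})\, r = (1-o(1))\,\alpha\, k / q^\ast$, so $\pi_i \le \big((1 - \tfrac12\alpha^{-.49})\,\alpha\big)^{-1} < 1$ once $\alpha$ is large, and $|P| \ge q^\ast S(P) \ge k$; thus the $\pi_i$ are valid marginals summing to $k$. For each $(f,v)$, $\sum_{i \in P,\, f(i)=v} \pi_i = k\, S_{f,v}(P)/S(P)$, which lies within a factor $1 \pm \alpha^{-.49}$ of $k\, n_{f,v}/n$ by goodness (the $\tfrac12$ in the definition makes the combined one-sided errors sum to at most $\alpha^{-.49}$). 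Hence $\pi$ is a feasible fractional point for the polytope of size-$k$ vectors $x$ with $(1 - \alpha^{-.49})\, k\, n_{f,v}/n \le \sum_{i:\, f(i)=v} x_i \le (1 + \alpha^{-.49})\, k\, n_{f,v}/n$. Applying the randomized version of \cref{lem:beckfiala_random} to this polytope --- where every agent lies in exactly one value-class per feature, so each variable appears in exactly $|F|$ of the feature constraints --- returns a random $0/1$ panel $\Panel$ with $\Pro{i \in \Panel \mid \Pool = P} = \pi_i$, with $|\Panel| = k$, and with each feature-value count shifted from $\sum_{i:\, f(i)=v}\pi_i$ by at most $|F|$. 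Combined with the feasibility of $\pi$, this last property means every panel the process outputs satisfies the stated quotas $\ell_{f,v}$ and $u_{f,v}$, giving the deterministic half of the theorem.

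\emph{Step 3 (end-to-end probability).} Fix $i$. Since $i \in \Pool$ implies $i \in \Recipients$ and the participation coin is independent of the recipient draw, $\Pro{i \in \Panel} = \Pro{i \in \Recipients}\, \Pro{i \in \Pool \mid i \in \Recipients}\, \Pro{i \in \Panel \mid i \in \Pool} = (r/n)\, q_i \cdot \Pro{i \in \Panel \mid i \in \Pool}$. Conditioning on $\Pool$ and using that the process returns a panel only on good pools, on which $\Pro{i \in \Panel \mid \Pool = P} = \pi_i = k/(S(P)\, q_i) \ge k / \big((1 + \alpha^{-.49})\, r\, q_i\big)$, yields $\Pro{i \in \Panel \mid i \in \Pool} \ge \frac{k}{(1 + \alpha^{-.49})\, r\, q_i}\, \Pro{\Pool\ \text{good} \mid i \in \Pool} = (1 - o(1))\, k/(r\, q_i)$. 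Multiplying through, $\Pro{i \in \Panel} \ge (r/n)\, q_i \cdot (1 - o(1))\, k/(r\, q_i) = (1-o(1))\, k/n$. I expect the main obstacle to be Step~1: pinning down the negative-association/Chernoff estimate for without-replacement sampling composed with independent coins, and especially the conditioning-on-$i$ refinement needed so that the good event survives conditioning on $i \in \Pool$ uniformly in $i$; once the good event is in hand, Steps~2 and~3 are essentially algebra plus a black-box invocation of the rounding lemma.
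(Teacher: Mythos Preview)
Your proposal is correct and follows essentially the same approach as the paper: define a ``good pool'' via concentration of the weighted sums $S(P)$ and $S_{f,v}(P)$, prove the conditional version $\Pro{\text{good}\mid i\in\Pool}\to 1$ by rerunning concentration with $i$ fixed in the pool, then combine the rounding lemma with the chain $\Pro{i\in\Panel}=(r/n)\,q_i\cdot\Pro{i\in\Panel\mid i\in\Pool}$. One cosmetic slip: with slack $\tfrac12\alpha^{-.49}$ you get $\frac{1+\tfrac12\alpha^{-.49}}{1-\tfrac12\alpha^{-.49}}>1+\alpha^{-.49}$, so the combined error slightly overshoots the stated quota; the paper handles this by using an intermediate exponent (e.g., $\alpha^{-.492}$) in the good-pool definition, which you can do as well.
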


The guarantees of the theorem grow stronger as the parameter $\alpha =
q^* \, r/k$ tends toward infinity, i.e., as the number $r$ of
invitations grows. Note that, since $r \leq n$, this assumption requires that $q^* \gg k/n$.
We defer all proofs to \cref{sec:app_theory} and discuss the preconditions in
\cref{sec:preconditions}.

\subsection{Algorithm Part I: Assignment of Marginals} \label{sec:theory-comp_marginals}
To afford equal probability of panel membership to each agent $i$, we would like to select agent $i$ with probability inversely proportional to her probability $q_i$ of being in the pool.
For ease of notation, let $a_i \coloneqq 1/q_i$ for all $i$.
Specifically, for agent $i$, we want $\mathbb{P}[i \in \Panel \mid i \in \Pool]$ to be proportional to $a_i$.
Achieving this exactly is tricky, however, because each agent's \textit{selection probability} from pool $P$, call it $\pi_{i,P}$, must depend on those of all other agents in the pool, since their marginals must add to the panel size $k$.
Thus, instead of reasoning about an agent's probability across all possible pools at once, we take the simpler route of setting agents' selection probabilities for each pool separately, guaranteeing that $\mathbb{P}[i \in \Panel \mid i \in P]$ is proportional to $a_i$ across all members $i$ of a good pool $P$.
For any good pool $P$, we select each agent $i \in P$ for the panel with probability 
\[\textstyle{\pi_{i, P} \coloneqq k \, a_i / \sum_{j \in P} a_j.} \]
Note that this choice ensures that the marginals always sum up to $k$.

\noindent\textbf{Definition of Good Pools.}
For this choice of marginals to be reasonable and useful for giving end-to-end guarantees, the pool $P$ must satisfy three conditions, whose satisfaction defines a \emph{good pool} $P$.
First, the marginals do not make much sense unless all $\pi_{i,P}$ lie in $[0,1]$:
\begin{equation}
    0 \leq \pi_{i, P} \leq 1 \quad \forall i \in P. \label{eq:good1}
\end{equation}
Second, the marginals summed up over all pool members of a feature-value pair $f,v$ should not deviate too far from the proportional share of the pair:
\begin{equation}
    \textstyle{(1 - \alpha^{-.49})\, k \, n_{f, v} / n \leq \sum_{i \in P : f(i) = v} \pi_{i, P} \leq (1 + \alpha^{-.49}) \, k \, n_{f, v} / n \quad \forall f, v.} \label{eq:good2}
\end{equation}
Third, we also require that the term $\sum_{i \in P} a_i$ is not much larger than $\mathbb{E}[\sum_{i \in \Pool} a_i] = r$, which ensures that the $\pi_{i,P}$ do not become to small:
\begin{equation}
    \textstyle{\sum_{i \in P}a_i \leq r/(1-\alpha^{-.49})}. \label{eq:good3}
\end{equation}
Under the assumptions of our theorem, pools are good with high probability, even if we condition on any agent $i$ being in the pool:
\begin{lemma} \label{lem:goodpools}
Suppose that $\alpha \to \infty$ and $n_{f,v} \geq n/k$ for all $f,v$.
Then, for all agents $i \in \Population$, $\mathbb{P}[\text{$\Pool$ \emph{is good}} \mid i \in \Pool] \to 1$.
\end{lemma}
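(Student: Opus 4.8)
The plan is to show that, conditional on $i \in \Pool$, each of the three conditions \eqref{eq:good1}--\eqref{eq:good3} defining a good pool fails with probability $o(1)$, and then to union-bound over the constantly many conditions. First I would record some consequences of the hypotheses: writing $\alpha = q^* r / k$ with $q^* \le 1$ and $r \le n$ gives $\alpha \le n/k$, so $\alpha \to \infty$ forces $k = o(n)$, $r \to \infty$, and $n_{f,v} \ge n/k \to \infty$ for every feature-value pair. Also, since $j \in \Pool$ exactly when $j$ is among the $r$ uniformly drawn recipients and then opts in independently with probability $q_j$, we have $\Pro{j \in \Pool} = (r/n)\,q_j$, hence $\Exp{\sum_{j \in \Pool} a_j} = \sum_j (r/n)\,q_j\,a_j = r$ and $\Exp{\sum_{j \in \Pool:\, f(j) = v} a_j} = (r/n)\,n_{f,v}$; and every summand is bounded, $a_j = 1/q_j \le 1/q^* = r/(\alpha k)$. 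All three good-pool conditions are statements about the random quantities $S \coloneqq \sum_{j \in \Pool} a_j$ and $S_{f,v} \coloneqq \sum_{j \in \Pool:\, f(j) = v} a_j$, so the proof reduces to concentration of these around their conditional means.

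Next I would fix the probabilistic structure. The indicators $\mathds{1}[j \in \Recipients]$ are negatively associated (uniform sampling without replacement), and multiplying each by its independent opt-in coin preserves negative association; hence $\{a_j\,\mathds{1}[j \in \Pool]\}_j$ is a family of bounded, negatively associated random variables, to which standard Chernoff--Hoeffding bounds apply. This persists after conditioning on $i \in \Pool$ for the fixed agent $i$: given that event, $\Recipients \setminus \{i\}$ is a uniform $(r-1)$-subset of $N \setminus \{i\}$ and all other opt-in coins are untouched, so $\{a_j\,\mathds{1}[j \in \Pool]\}_{j \ne i}$ is again bounded and negatively associated, with
\[
\Exp{S \mid i \in \Pool} = a_i + (r-1), \qquad \Exp{S_{f,v} \mid i \in \Pool} = a_i\,\mathds{1}[f(i)=v] + \big(n_{f,v} - \mathds{1}[f(i)=v]\big)\tfrac{r-1}{n-1}.
\]
Using $a_i \le r/(\alpha k)$, the bound $k/n \le 1/\alpha$, $r \to \infty$, and $n_{f,v} \to \infty$, a short calculation shows these conditional means equal $(1 \pm o(\alpha^{-.49}))$ times the unconditional means $r$ and $(r/n)\,n_{f,v}$ respectively; in particular $\Exp{S \mid i \in \Pool} = \Theta(r)$ and $\Exp{S_{f,v} \mid i \in \Pool} \ge (1-o(1))\,r/k$.

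Now I would invoke the Chernoff--Hoeffding bound with relative deviation $\varepsilon \coloneqq \alpha^{-.495}$. Since $\max_j a_j \le r/(\alpha k)$ while $\Exp{S \mid i \in \Pool} = \Theta(r)$ and $\Exp{S_{f,v} \mid i \in \Pool} = \Omega(r/k)$, the exponent in each bound is $\Omega(\varepsilon^2\,\alpha) = \Omega(\alpha^{.01}) \to \infty$. Thus, conditional on $i \in \Pool$, with probability $1 - o(1)$ we simultaneously have $S \in (1 \pm \varepsilon)\,\Exp{S \mid i \in \Pool}$ and $S_{f,v} \in (1 \pm \varepsilon)\,\Exp{S_{f,v} \mid i \in \Pool}$ for all $f,v$ (union bound over $O(1)$ events). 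On this event I would verify the three conditions for all large $\alpha$: condition \eqref{eq:good3} holds since $S \le (1+\varepsilon)(a_i + r - 1) \le (1 + o(\alpha^{-.49}))\,r \le r/(1 - \alpha^{-.49})$; condition \eqref{eq:good1} holds since $\pi_{i,P} = k\,a_i/S \le \frac{k\,r/(\alpha k)}{(1-\varepsilon)(r-1)} = O(1/\alpha) \le 1$ and $\pi_{i,P} \ge 0$ trivially; and condition \eqref{eq:good2} holds because $\sum_{j \in P:\, f(j)=v}\pi_{j,P} = k\,S_{f,v}/S$ is a ratio of two quantities each within a $(1\pm\varepsilon)$ factor of its mean, so it equals $k\,(1 \pm o(\alpha^{-.49}))\,n_{f,v}/n$, which lies in $\big[(1-\alpha^{-.49})\,k\,n_{f,v}/n,\ (1+\alpha^{-.49})\,k\,n_{f,v}/n\big]$.

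The essential difficulty is not any single estimate but the bookkeeping: the multiplicative $(1 \pm \cdot)$ factors coming from conditioning on $i \in \Pool$, from the passage $(r,n) \mapsto (r-1,n-1)$, and from the two Chernoff deviations sitting inside a ratio in \eqref{eq:good2} must all compose to something still within the \emph{exact} tolerance $\alpha^{-.49}$. This is exactly why the Chernoff deviation is taken to be $\varepsilon = \alpha^{-.495}$ (any exponent strictly between $.49$ and $.5$ works): each contributed error is then $o(\alpha^{-.49})$, so the composite error is $o(\alpha^{-.49})$ and hence at most $\alpha^{-.49}$ for large $\alpha$, while still $\varepsilon^2 \alpha \to \infty$ keeps the failure probability $o(1)$. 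The only other point needing care is that negative association survives conditioning on a single pool-membership event, which it does because that conditioning merely removes one element from the recipient sample and leaves all opt-in coins independent.
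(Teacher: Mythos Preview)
Your proposal is correct and follows the same overall strategy as the paper: establish concentration of $S = \sum_{j \in \Pool} a_j$ and $S_{f,v} = \sum_{j \in \Pool:\,f(j)=v} a_j$ around their means via Chernoff-type bounds, then check that the three good-pool conditions follow by composing the multiplicative $(1\pm\cdot)$ errors into the target tolerance $\alpha^{-.49}$.

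The one genuine difference is in how the randomness is handled. The paper works in two stages: for the global sum it fixes the recipient set and applies Chernoff only to the independent opt-in coins; for the per-feature sums it first proves concentration of $|\Recipients_{f,v}|$ via negative association of the sampling-without-replacement indicators, and then, conditioning on that, applies a second Chernoff bound to the opt-in stage. You instead collapse both layers of randomness into a single negative-association argument for the products $a_j\,\mathds{1}[j\in\Pool]$, invoking the closure of NA under independent unions and monotone functions of disjoint blocks. This is a legitimate simplification: it replaces the paper's cascade of intermediate exponents $.497 \to .495 \to .493 \to .492$ with a single deviation $\varepsilon = \alpha^{-.495}$ and a uniform $o(\alpha^{-.49})$ bookkeeping step. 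The paper's staged approach, on the other hand, makes each concentration step use only fully independent variables (after conditioning on the recipients), which some readers may find more transparent than the NA machinery. Both routes yield the same bound.

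Two minor points worth tightening in a write-up: condition~\eqref{eq:good1} must hold for \emph{every} $j\in P$, not only the distinguished $i$, so state the bound as $\max_j \pi_{j,P} \le k\,a^*/S$; and when you apply Chernoff, the NA sum is $S - a_i$ (respectively $S_{f,v} - a_i\,\mathds{1}[f(i)=v]$), with the deterministic term $a_i$ added back afterward --- your conditional-mean computation already accounts for this, but it should be made explicit.
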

Note that only constraint~\eqref{eq:good1} prevents Phase~II of the algorithm from running; the other two constraints just make the resulting distribution less useful for our proofs.
In practice, if it is possible to rescale the $\pi_{i,P}$ and cap them at $1$ such that their sum is $k$, running phase~II on these marginals seems reasonable.

\subsection{Algorithm Part II: Rounding of Marginals}
The proof of \cref{thm:main} now hinges on our ability to implement the chosen $\pi_{i,P}$ for a good pool $P$ as marginals of a distribution over panels.
This phase can be expressed in the language of randomized dependent rounding: we need to define random variables $X_i = \mathds{1}\{i \in \Panel\}$ for each $i \in \Pool$ such that $\mathbb{E}[X_i] = \pi_{i,P}$.
This difficulty of this task stems from the ex-post requirements on the pool, which require that $\sum_i X_i = k$ and that $\sum_{i:f(i)=v} X_i$ is close to $k \, n_{f,v} / n$ for all feature-value pairs $f,v$.
While off-the-shelf dependent rounding~\cite{CVZ10} can guarantee the marginals and the sum-to-$k$ constraint, it cannot simultaneously ensure small deviations in terms of the representation of all $f,v$.

Our algorithm uses an iterative rounding procedure based on a celebrated theorem by Beck and Fiala~\cite{BF81}.
We sketch here how to obtain a deterministic rounding satisfying the ex-post constraints; the argument can be randomized using results by Bansal~\cite{bansal2019} or via column generation~(\cref{sec:beckfialarandom}).\footnote{Bansal~\cite{bansal2019} gives a black-box polynomial-time method
for randomizing our rounding procedure. We found
column-generation-based algorithms to be faster in practice, with guarantees
that are at least as tight.}
The iterated rounding procedure manages a variable $x_i \in [0, 1]$ for each $i \in \Pool$, which is initialized as $\pi_{i,P}$.
As the $x_i$ are repeatedly updated, more of them are fixed as either 0 or 1 until the $x_i$ ultimately correspond to indicator variables of a panel.
Throughout the rounding procedure, it is preserved that $\sum_i x_i = \sum_i \pi_{i,P} = k$, and the equalities $\sum_{i : f(i)=v} x_i = \sum_{i : f(i)=v} \pi_{i,P}$ are preserved until at most $|F|$ variables $x_i$ in the sum are yet to be fixed.
As a result, the final panel has exactly $k$ members, and the number of members from a feature-value pair $f,v$ is at least $\sum_{i : f(i)=v} \pi_{i,P} - |F| \geq (1 - \alpha^{-.49}) \, k \, n_{f,v}/n - |F|$ (symmetrically for the upper bound).\footnote{Observe that our Beck-Fiala-based rounding procedure only increases the looseness of the quotas by a constant additive term beyond the losses to concentration.
The concentration properties of standard dependent randomized rounding
do not guarantee such a small gap with high probability.
Moreover, our bound does not directly depend on the number of quotas (i.e., twice the number of feature-value pairs) but only depends on the number of features, which are often much fewer.}
As we show in \cref{sec:app_bf},
\begin{lemma}
    \label{lem:beckfiala_random}
    There is a polynomial-time sampling algorithm that, given a good pool $P$, produces a random panel $\Panel$ such that (1) $\mathbb{P}[i \in \Panel] = \pi_{i,P}$ for all $i \in P$, (2) $|\Panel| = k$, and (3) $\sum_{i : f(i)=v} \pi_{i,P} - |F| \leq |\{i \in \Panel \mid f(i)=v\}| \leq \sum_{i : f(i)=v} \pi_{i,P} + |F|$.
\end{lemma}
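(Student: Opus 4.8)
The plan is to phrase Phase~II as a randomized rounding of the linear system in the variables $x_i$, $i \in P$, consisting of the single \emph{global} equality $\sum_{i \in P} x_i = k$, the \emph{feature} equalities $\sum_{i \in P:\, f(i)=v} x_i = c_{f,v}$ with $c_{f,v} \coloneqq \sum_{i \in P:\, f(i)=v} \pi_{i,P}$, and the box constraints $x_i \in [0,1]$; the fractional point $x = (\pi_{i,P})_{i \in P}$ is feasible and is where the rounding starts. The structural fact that does all the work is that each agent $i$ lies in \emph{exactly one} feature-value constraint per feature, so every column of the feature-constraint matrix has at most $|F|$ nonzeros; this is the degree bound driving the Beck--Fiala argument.

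First I would recall the deterministic Beck--Fiala iteration in the language of \emph{alive} variables ($x_i \in (0,1)$). Call a feature-value constraint \emph{dangerous} if it contains more than $|F|$ alive variables, and keep the global constraint permanently active. A double-counting argument --- summing ``alive variables per dangerous feature constraint'' in two ways, once as at least $|F|+1$ per such constraint and once as at most $|F|$ per alive variable --- shows that, as long as more than $|F|+1$ variables are alive, the number of active constraints is strictly less than the number of alive variables, so the active rows restricted to the alive coordinates have a nontrivial kernel. Moving $x$ along a kernel direction (choosing the sign and magnitude that first drives an alive coordinate to $0$ or $1$) freezes at least one variable while preserving every active constraint exactly, in particular $\sum_i x_i = k$; iterate until nothing is alive. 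A feature-value constraint is held exactly until it has at most $|F|$ alive variables, after which its running value can move by at most $|F|$ more (at most one per remaining alive variable), so the final integral vector is the indicator of a panel of size exactly $k$ with every feature-value count within $|F|$ of $c_{f,v}$; this is precisely (2) and (3). The endgame, where only $O(|F|)$ variables remain alive and the kernel argument is no longer automatic, needs a short separate check that the remaining coordinates can be rounded consistently with the global equality.

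To obtain (1), $\mathbb{P}[i \in \Panel] = \pi_{i,P}$, I would randomize this iteration. The direct route is Bansal's algorithm~\cite{bansal2019}: replace the deterministic step length by a random mean-zero amount along the same kernel direction that reaches the box boundary in one of the two directions. Each coordinate process $(x_i^{(t)})_t$ is then a bounded martingale started at $\pi_{i,P}$, so $\mathbb{E}[X_i] = \pi_{i,P}$, while the invariants ``stay in the kernel of the active constraints'' and ``only freeze at $\{0,1\}$'' hold along every sample path, so (2) and (3) still hold surely. An alternative, faster in practice, is column generation: the deterministic rounding (applied from $\pi$ and from nearby fractional points) certifies that $\pi$ lies in the convex hull of the $0/1$ indicators of panels satisfying (2) and (3), so one solves the LP expressing $\pi$ as such a convex combination, using the Beck--Fiala rounding itself as the pricing oracle, and then samples a panel from the returned distribution. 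Either way the running time is polynomial: each Beck--Fiala iteration freezes a variable and solves one linear system, giving at most $|P|$ iterations, and the column-generation LP terminates after polynomially many pricing calls.

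The main obstacle I expect is making the randomized version simultaneously (a) preserve the marginals $\pi_{i,P}$ exactly, (b) preserve the hard equality $\sum_i X_i = k$ exactly, and (c) retain the deterministic-quality slack $|F|$ on the feature counts --- i.e., arguing that the random walk never leaves the correct affine subspace yet still has the right conditional expectations, together with the degree bookkeeping needed to land on $|F|$ rather than $|F|+1$ or $2|F|$. The secondary subtlety is the endgame described above, where the counting argument for a nonempty kernel no longer applies by itself and one must round the last few coordinates by hand consistently with all active equalities.
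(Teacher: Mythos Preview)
Your approach is essentially the paper's: Beck--Fiala iterated rounding with the global constraint kept throughout and each feature-value constraint dropped once at most $|F|$ of its variables remain alive, followed by a randomization to recover the marginals. Two differences are worth flagging.

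First, the endgame you leave open at $n' \le |F|+1$ is handled in the paper not by a separate hand-rounding but by one extra dropping rule: whenever a remaining feature constraint ranges over \emph{all} currently alive variables, it is linearly implied by the global constraint (restricted to the alive coordinates, both say ``sum of alive variables equals a constant'', and since both are currently satisfied the constants coincide) and may be dropped without affecting the final discrepancy. With this rule in place the number of remaining equalities is always at most $n'-1$ whenever $n' \ge 2$, so a kernel direction always exists and no endgame case split is needed; this is also precisely what yields the slack $m \le (1-1/n)\,n'$ that the paper feeds into Bansal's black-box Theorem~1.2 to obtain the polynomial-time randomized version.

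Second, the paper does not spell out the martingale walk you sketch but simply invokes Bansal's theorem as a black box once the slack condition above is verified. Your pipage-style randomization (mean-zero step along the current kernel direction until a box face is hit) is a valid and arguably more elementary alternative that gives the same three guarantees. However, your claim that the column-generation route ``terminates after polynomially many pricing calls'' is not established; the paper is explicit that na\"ive column generation is \emph{not} guaranteed to run in polynomial time and presents it only as a practically faster, non-polynomial alternative. The polynomial-time guarantee in the lemma rests on the Bansal route alone.
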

Our main theorem follows from a simple argument combining \cref{lem:goodpools,lem:beckfiala_random} (\cref{sec:app_thmmain}).

While the statement of \cref{thm:main} is asymptotic in the growth of $\alpha$, the same proof gives bounds on the end-to-end probabilities for finite values of $\alpha$. If one wants bounds for a specific instance, however, bounds uniquely in terms of $\alpha$ tend to be loose, and one might want to relax Condition~\eqref{eq:good2} of a good pool in exchange for more equal end-to-end probabilities. In this case, plugging the specific values of $n,r,k,q^*,n_{f,v}$ into the proof allows to make better trade-offs and to extract sharper bounds.

\section{Learning Participation Probabilities}
\label{sec:learning}
The algorithm presented in the previous section relies on knowing $q_i$ for all agents $i$ in the pool.
While these $q_i$ are not directly observed, we can estimate them from data available to practitioners.

First, we assume that an agent $i$'s participation probability $q_i$ is a function of her feature vector $F(i)$.
Furthermore, we assume that $i$ makes her decision to participate through a specific generative model known as \emph{simple independent action}~\cite[as cited in \cite{Weinberg86}]{Finney71}. %
First, she flips a coin with probability $\beta_0$ of landing on heads.
Then, she flips a coin for each feature $f \in F$, where her coin pertaining to $f$ lands on heads with probability $\beta_{f,f(i)}$.
She participates in the pool if and only if all coins she flips land on heads, leading to the following functional dependency:
\[\textstyle{q_i = \beta_0 \, \prod_{f \in F} \beta_{f,f(i)}}.\]
We think of $1 - \beta_{f,v}$ as the probability that a reason specific to the feature-value pair $f,v$ prevents the agent from participating, and of $1 - \beta_0$ as the baseline probability of her not participating for reasons independent of her features.
The simple independent action model assumes that these reasons occur independently between features, and that the agent participates iff none of the reasons occur.

If we had a representative sample of agents\,---\,say, the recipients of the invitation letters\,---\,labeled according to whether they decided participate (``positive'') or not (``negative''), learning the parameters $\beta$ would be straightforward.
However, sortition practitioners only have access to the features of those who enter the pool, and not of those who never respond.
Without a control group, it is impossible to distinguish a feature that is prevalent in the population and associated with low participation rate from a rare feature associated with a high participation rate.
Thankfully, we can use additional information: in place of a negatively-labeled control group, we use a \emph{background sample}\,---\,a dataset containing the features for a uniform sample of agents, but without labels indicating whether they would participate.
Since this control group contains both positives and negatives, this setting is known as \textit{contaminated controls}.
A final piece of information we use for learning is the fraction $\overline{q} \coloneqq |\Pool|/r$, which estimates the mean participation probability across the population.
In other applications with contaminated controls, including $\overline{q}$ in the estimation increased model identifiability~\cite{ward2009presence}.

To learn our model, we apply methods for maximum likelihood estimation~(MLE) with contaminated controls introduced by Lancaster and Imbens~\cite{lancaster1996case}.
By reformulating the simple independent action model in terms of the logarithms of the $\beta$ parameters, their estimation (with a fixed value of $\overline{q}$) reduces to maximizing a concave function.
\begin{theorem}\label{thm:concave} The log-likelihood function for the simple independent action model under contaminated controls is concave in the model parameters.
\end{theorem}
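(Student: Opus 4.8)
The plan is to pass to the logarithmic parametrization hinted at just above, write the Lancaster--Imbens contaminated-controls log-likelihood in closed form, and observe that it decomposes into affine terms and negated log-sum-exp terms, each of which is concave. First I would reparametrize: set $\theta_0 \coloneqq \log\beta_0$, set $\theta_{f,v} \coloneqq \log\beta_{f,v}$ for every feature-value pair $f,v$, and stack these into one parameter vector $\theta$ ranging over the convex set $\{\theta \le 0\}$. Writing $z_i$ for the $0/1$ vector that is $1$ in the intercept coordinate and in each coordinate $(f, f(i))$, the identity $q_i = \beta_0\prod_{f\in F}\beta_{f,f(i)}$ becomes $\log q_i = \langle z_i,\theta\rangle$, so that $q_i = \exp(\langle z_i,\theta\rangle)$ is log-affine in $\theta$. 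This is the only place the structure of simple independent action is used.

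Next I would write down the likelihood. Following Lancaster and Imbens, treat the pool $\Pool$ as the ``case'' sample (drawn from the participating subpopulation) and the background sample, call it $B$, as the ``contaminated control'' sample (drawn from the whole population), and take the likelihood of the case/control labels conditional on the observed feature vectors, so that the nonparametric population feature distribution never enters. Let $H \coloneqq |\Pool|/(|\Pool|+|B|)$ be the fixed design fraction. By Bayes' rule, an individual whose feature vector gives participation probability $q$ belongs to $\Pool$ (rather than to $B$) with probability $Hq/(Hq+(1-H)\,\overline q)$ and to $B$ with probability $(1-H)\,\overline q/(Hq+(1-H)\,\overline q)$. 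Summing the logarithms of these probabilities over the two samples and absorbing the $\theta$-independent pieces into a constant $c$ gives
\[
 \ell(\theta) \;=\; c \;+\; \sum_{i\in\Pool}\log q_i \;-\; \sum_{i\in\Pool\cup B}\log\!\big(H\,q_i + (1-H)\,\overline q\big).
\]

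Finally I would check concavity term by term. The constant $c$ and each $\log q_i = \langle z_i,\theta\rangle$ are affine, hence concave. For the remaining summands, substituting $q_i = \exp(\langle z_i,\theta\rangle)$ rewrites $\log(Hq_i+(1-H)\,\overline q)$ as $\log\!\big(e^{\langle z_i,\theta\rangle + \log H} + e^{\log((1-H)\overline q)}\big)$, a log-sum-exp of two affine functions of $\theta$ (here one uses $H>0$ and $(1-H)\,\overline q > 0$) and hence convex in $\theta$; its negation is therefore concave. A sum of concave functions is concave, so $\ell$ is concave on all of its domain, in particular on $\{\theta\le 0\}$, as claimed.

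The main obstacle is not the concavity computation, which is immediate once $\ell$ is in the displayed form, but obtaining that form and understanding the role of the hypothesis. Two points need care. First, the standard retrospective-likelihood argument must be invoked to justify that conditioning on the feature vectors legitimately removes the population feature distribution as a nuisance parameter. Second, one must keep track of signs so that the $q_i$-dependent terms appear as $+\log q_i$ and $-\log(\text{positive affine combination})$ and not the other way around, since $+\log(Hq_i+(1-H)\overline q)$ would be convex, not concave. The assumption that $\overline q$ is held fixed is exactly what is needed: otherwise $\overline q$ is tied to $\theta$ through the moment identity $\overline q = \mathbb{E}[q_i]$ (or must be profiled jointly with the nonparametric feature distribution), and then neither the feasible set nor $\ell$ need remain convex/concave.
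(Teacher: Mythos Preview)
Your proposal is correct and follows essentially the same strategy as the paper: reparametrize by $\theta=\log\beta$, write the Lancaster--Imbens contaminated-controls log-likelihood as an affine part plus terms of the form $-\log(c+e^{\langle z_i,\theta\rangle})$, and check concavity term by term. The only cosmetic difference is that the paper verifies concavity of $-\log(c+e^{\theta x})$ via a direct Hessian computation, while you invoke the standard convexity of log-sum-exp; these are equivalent.
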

By this theorem, proven in \cref{sec:app_learning}, we can directly and efficiently estimate $\beta$.
Logistic models, by contrast, require more involved techniques for efficient estimation~\cite{ward2009presence}.

\section{Experiments} \label{sec:empirical}
\label{sec:empir_data}
\noindent\textbf{Data.} We validate our $q_i$ estimation and sampling algorithm on pool data from \emph{Climate Assembly UK},\footnote{\url{https://www.climateassembly.uk/}} a national-level sortition panel organized by the Sortition Foundation in 2020.
The panel consisted of $k=110$ many UK residents aged 16 and above.
The Sortition Foundation invited all members of 30\,000 randomly selected households, which reached an estimated $r = 60\,000$ eligible participants.\footnote{Note that every person in the population has equal probability $(30\,000/\text{\#households})$ of being invited. We ignore correlations between members of the same household.}
Of these letter recipients, 1\,715
participated in the pool,\footnote{Excluding 12 participants with gender ``other'' as no equivalent value is present in the background data.} corresponding to a mean participation probability of $\overline{q} \approx 2.9\%$.
The feature-value pairs used for this panel can be read off the axis of \cref{fig:realized_representation}.
We omit an additional feature \textit{climate concern level} in our main analysis because only 4 members of the pool have the value \textit{not at all concerned}, whereas this feature-value pair's proportional number of panel seats is 6.5.
To allow for proportional representation of groups with such low participation rates, $r$ should have been chosen to be much larger.
We believe that the merits of our algorithm can be better observed in parameter ranges in which proportionality can be achieved.
For the background sample, we used the 2016 European Social Survey~\cite{ess}, %
which contains 1\,915 eligible individuals, all with features and values matching those from the panel.
Our implementation is based on PyTorch and Gurobi, runs on consumer hardware, and its code is available on \href{https://github.com/pgoelz/endtoend}{github}.
\Cref{app:empirical} contains details on Climate Assembly UK, data processing, the implementation, and further experiments (including the climate concern feature).

\noindent\textbf{Estimation of $\bm{\beta}$ Parameters.}
We find that the baseline probability of participation is $\beta_0 = 8.8\%$.
Our $\beta_{f,v}$ estimates suggest that (from strongest to weakest effect) highly educated, older, urban, male, and non-white agents participate at higher rates. These trends reflect these groups' respective levels of representation in the pool compared to the underlying population, suggesting that our estimated $\beta$ values fit our data well.
Different values of the remaining feature, region of residence, seem to have heterogeneous effects on participation, where being a resident of the South West gives substantially increased likelihood of participation compared to other areas.
The lowest participation probability of any agent in the pool, according to these estimates, is $q^* = 0.78\%$, implying that $\alpha \approx 4.25$. See \cref{sec:app_betavalidation} for detailed estimation results and validation.

\begin{figure}[tbp]
    \centering
    \hspace*{0.15cm}\includegraphics[width=0.8\textwidth]{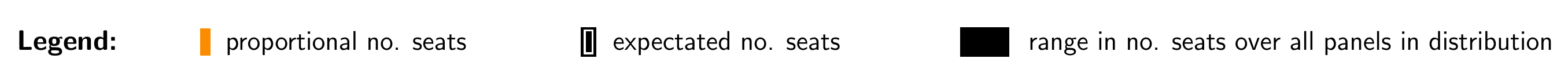}\\[-.5em]
    \includegraphics[width=\textwidth]{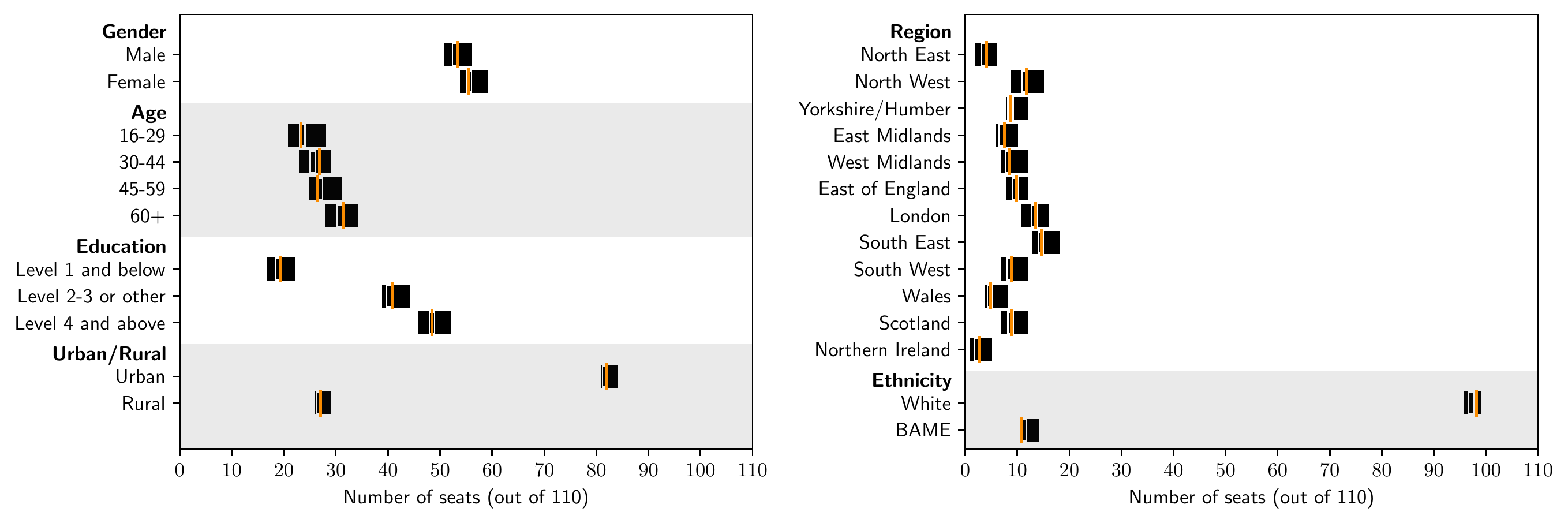}
    \caption{Expected and realized numbers of panel seats our algorithm gives each feature-value pair in the Climate Assembly pool.}%
    \label{fig:realized_representation}
\end{figure}
\noindent\textbf{Running the Sampling Algorithm on the Pool.} The estimated $q_i$ allow us to run our algorithm on the Climate Assembly pool and thereby study its fairness properties for non-asymptotic input sizes.
We find that the Climate Assembly pool is good relative to our $q_i$ estimates, i.e., that it satisfies \cref{eq:good1,eq:good2,eq:good3}.
As displayed in \cref{fig:realized_representation}, the marginals produced by Phase~I of our algorithm give each feature-value pair $f,v$ an expected number of seats, $\sum_{i \in P,f(i) = v} \pi_{i,P}$, within \textit{one seat} of its proportional share of the panel, $k \, n_{f,v}/n$.
By \cref{lem:beckfiala_random}, Phase~II of our algorithm then may produce panels from these marginals in which $f,v$ receives up to $|F|=6$ fewer or more seats than its expected number.
However, as the black bars in \cref{fig:realized_representation} show, the actual number of seats received by any $f,v$ across \emph{any panel} produced by our algorithm on this input never deviates from its expectation by more than 4 seats.
As a result, while \cref{thm:main} only implies lower quotas of $.51 \, k \, n_{f,v} / n - |F|$ and upper quotas of $1.49 \, k \, n_{f,v} / n + |F|$ for this instance, the shares of seats our algorithm produces lie in the much narrower range $k \, n_{f,v}/n \pm 5$ (and even $k \, n_{f,v}/n \pm 3$ for 18 out of 25 feature-value pairs).
This suggests that, while the quotas guaranteed by our theoretical results are looser than the quotas typically set by practitioners, our algorithm will often produce substantially better ex-post representation than required by the quotas.

\looseness=-1
\noindent\textbf{End-to-End Probabilities.}
In the previous experiments, we were only able to argue about the algorithm's behavior on a single pool.
To validate our guarantees on individual end-to-end probabilities, we construct a synthetic population of size 60 million by duplicating the ESS participants, assuming our estimated $q_i$ as their true participation probabilities.
Then, for various values of $r$, we sample a large number of pools. %
By computing $\pi_{i,P}$ values for all agents $i$ in each pool, we can estimate each agent's end-to-end probability of ending up on the panel.
Crucially, we assume that our algorithm does not produce any panel for bad pools, analogously to \cref{thm:main}.
As shown in the following graph, for $r = 60\,000$ (as was used in Climate Assembly UK), all agents in our synthetic population, across the full range of $q_i$, receive probability within $.1\,k/n$ of $k/n$ (averaged over 100\,000 random pools): \\[7pt]
\includegraphics[width=\textwidth]{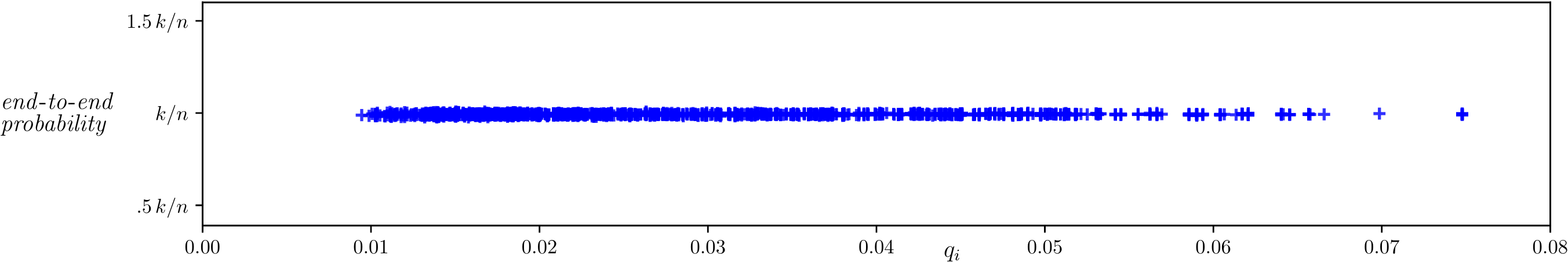}
That these end-to-end probabilities are so close to $k/n$ also implies that bad pools are exceedingly rare for this value of $r$.
As we show in \cref{sec:app_experiments}, we see essentially the same behavior for values of $r$ down to roughly $15\,000$, when $\alpha \approx 1$.
For even lower $r$, most pools are bad, so end-to-end probabilities are close to zero under our premise that no panels are produced from bad pools.

To demonstrate that our algorithm's theoretical guarantees lead to realized improvements in individual fairness over the state-of-the-art, we re-run the experiment above, this time using the Sortition Foundation's greedy algorithm to select a panel from each generated pool. 
Since their algorithm requires explicit quotas as input, we set the lower and upper quotas for each feature-value group to be the floor and ceiling of that group's proportional share of seats. This is a popular way of setting quotas in current practice.

\includegraphics[width=\textwidth]{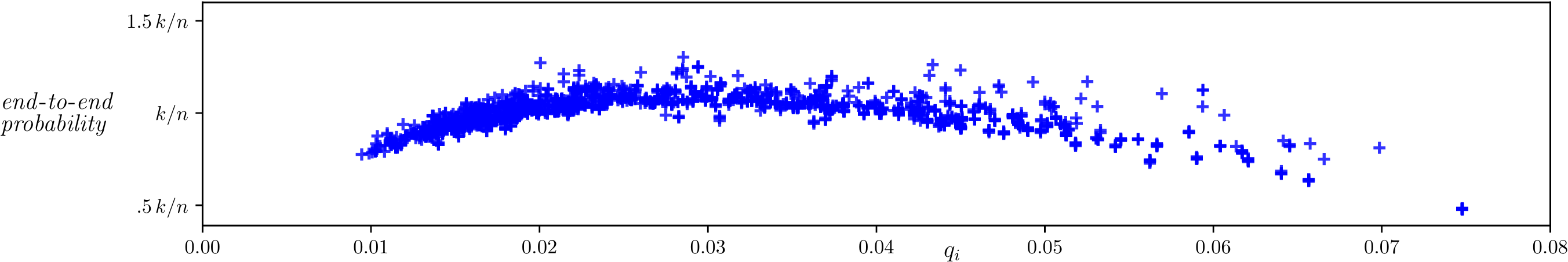}

The results of this experiment show that the individual end-to-end probabilities generated by the currently-used greedy algorithm range from
below $0.5 \, k/n$ up to $1.3 \, k/n$.
In comparison to %
the end-to-end probabilities generated by our algorithm,
those generated by the greedy algorithm are substantially skewed, and tend to disadvantage individuals with either low or high participation probabilities. 
One might argue that the comparison between our algorithm and the greedy is not quite fair, since the greedy algorithm is required to satisfy stronger quotas.
However, looser quotas do not improve the behavior of the greedy algorithm; they simply make it behave more similarly to uniform sampling from the pool, which further disadvantages agents with low participation probability (%
for details, see \cref{app:endtoenddetails}).

Taken together, these results illustrate that, although greedy algorithms like the one we examined achieve proportional representation of a few pre-specified groups via quotas, they do not achieve fairness to individuals or to groups unprotected by quotas.
Compared to the naive solution of uniform sampling from the pool, greedily striving for quota satisfaction does lead to more equal end-to-end probabilities, as pool members with underrepresented features are more likely to be selected for the panel than pool members with overrepresented features.
However, this effect does not neutralize self-selection bias when there are multiple features, even when selection bias acts through the independent-action model as in our simulated population.
Indeed, in this experiment, the greedy algorithm insufficiently boosts the probabilities of agents in the intersection of multiple low-participation groups (the agents with lowest $q_i$), while also too heavily dampening the selection probability of those in the intersection of multiple high-participation groups (with highest $q_i$). These observations illustrate the need for panel selection algorithms that explicitly control individual probabilities.

\section{Discussion} \label{sec:discussion}
In a model in which agents \emph{stochastically} decide whether to participate, our algorithm guarantees similar end-to-end probabilities to all members of the population.
Arguably, an agent's decision to participate when invited might not be random, but rather \emph{deterministically} predetermined.

From the point of view of such an agent $i$, does our algorithm, based on a model that doesn't accurately describe her (and her peers') behavior, still grant her individual fairness?
If $i$ deterministically \emph{participates}, the answer is yes (if not, of course she cannot be guaranteed anything).
To see why, first observe that, insofar as it concerns $i$'s chance of ending up on the panel, all other agents might as well participate randomly.\footnote{Fix a group of agents who, assuming the stochastic model, will participate if invited with probability $q$. Then, sampling letter recipients from this set of agents in the stochastic model is practically equivalent to sampling recipients from this group in the deterministic model, if a $q$ fraction of the group deterministically participate.}
Indeed, from agent~$i$'s perspective, the process looks like the stochastic process where every other agent $j$ participates with probability $q_j$, where $i$ herself always participates, and where the algorithm erroneously assumes that $i$ joins only with some probability $q_i$.
Therefore, the pool is still good with high probability conditioned on $i$ being in it, as argued in \cref{lem:goodpools}. Even if the algorithm knew that $q_i=1$, $i$'s end-to-end probability would be at least $\big(1 - o(1)\big) \, k/n$, and the fact that the algorithm underestimates her $q_i$ only increases her probability of being selected from the pool. It follows that $i$'s end-to-end probability in this setting still must be at least around $k/n$.

Thus, in a deterministic model of participation, our individual guarantees are reminiscent of the axiom of population monotonicity in fair division:
\emph{If the whole population always participated when invited, every agent would reach the panel with probability $k/n$. The fact that some agents do not participate cannot (up to lower-order terms) decrease the selection probabilities for those who do.}

\section*{Broader Impact}

As we discussed in the paper, sortition is becoming increasingly widespread as a method for making collective decisions and gauging public opinion. Based on our experiences, practitioners seem interested in the possibility of fairer sampling algorithms, so our research in this area has the potential to influence how sortition panels are sampled in the real world.

From a \textbf{fairness} standpoint, our algorithm represents an improvement over currently-used algorithms on several fronts. We maintain the approximate satisfaction of proportional quotas while giving provable fairness guarantees to individuals, which in turn safeguard against systematic under-representation of demographic groups unprotected by quotas. Currently-used greedy algorithms give no such guarantees, and as we show in a real-world example, these theoretical differences can translate to substantial practical differences in the equality of individual selection probabilities. Our model is also robust to bias in the first stage of sampling: if a group is unfairly undersampled in the invitation stage, an effective machine-learning method will estimate lower $q_i$ for this group, and the algorithm will then correct this bias in the second stage. For these reasons, sortition panels selected via our algorithm will in general be more representative of the population, hopefully resulting in their decisions more fairly reflecting the interests and views of the entire population.

Of course, the methods we present do still come with fairness concerns. One main concern would be that the learned $q_i$ values could erroneously overestimate the participation probabilities of those in a certain group, thereby resulting in our sampling algorithm giving members of that group lower probability they deserve in the second step, and thereby giving them unfairly low end-to-end selection probability. We identify three main potential sources of inaccuracy in our MLE estimator: unreliable or unrepresentative background data, small sample sizes from which to learn, and failure of our model to accurately capture people's participation decisions. While each of these factors could potentially introduce inaccuracies in the $q_i$ values, it is unclear that any but the first issue is likely to result in systematic bias against certain groups.

Another important consideration in panel selection is \textbf{transparency} about the selection process to constituents on whose behalf the panel will make decisions. Increasing the transparency of how these panels are selected can increase public trust in the fairness of this method of decision-making, thus expanding sortition's legitimacy and reach. Toward the goal of transparency, our approach offers improvements on the state-of-the-art in several ways. First, this work gives formal theoretical guarantees of multiple types of fairness, where currently-used methods give none. Our algorithm also follows an explicit and explainable process for setting marginal selection probabilities, rather than having them accidentally arise from a greedy process.

Despite these improvements, our methods still present transparency challenges.
Since an individual's probability of selection from the pool depends on their estimated $q_i$ value, the fairness of the process hinges on the entire machine-learning pipeline\,---\,data used, choice of model, and estimation methods\,---\,multiple elements of which might be opaque to most of the population. Secondly, while the use of protected attributes to counteract inequality is an accepted practice in the fairness in classification literature, there may still be public discomfort about this idea. In particular, it may be a source of discomfort that, in order to equalize end-to-end probabilities, our algorithm must explicitly decrease the probability of certain people being selected from the pool due to their attributes. One potentially comforting and easily-understood feature of our algorithm, however, is described in \Cref{sec:discussion}: that if someone participates in the pool with probability 1, they are guaranteed an end-to-end selection probability of at least $k/n$, regardless of their attributes.

\begin{ack}
We thank Sivaraman Balakrishnan, Nikhil Bansal, and David Wajc for helpful technical discussions, and Terry Bouricious, Adam Cronkright, Linn Davis, Adela G\k{a}siorowska, Marcin Gerwin, Brett Hennig, David Schecter, and Robin Teater for sharing their insights on practical sortition.
We would also like to express our gratitude to the Sortition Foundation for supplying the data used in our experiments. 

This work was partially supported by the National Science Foundation under grants CCF-1907820, CCF1955785, CCF-2006953, CCF-1525932, CCF-1733556, CCF-2007080, and IIS-2024287; and by the Office of Naval Research under grant N00014-20-1-2488.
Bailey Flanigan is supported by the National Science Foundation Graduate Research Fellowship and the Fannie and John Hertz Foundation.
\end{ack}

\bibliographystyle{plainurl}
\bibliography{bibliography}

\newpage
\appendix
\section*{\Large Appendix}
\section{Notation Glossary}
\begin{center}
        \begin{tabular}{ll}
    \toprule
    \multicolumn{2}{l}{\textbf{Sets of Agents}}\\
    $N$ & Set of agents in the population\\
    $\Recipients$ & Set of agents who receive invitation letters (random variable)\\
    $\Pool$ & Set of agents in the pool (random variable)\\
    $\Panel$ & Set of agents on the panel (random variable)\\
    \midrule
    \multicolumn{2}{l}{\textbf{Sortition Panel Parameters}}\\
     $n$ & Size of the population \\
     $r$ & Number of invitation letters sent out \\
     $k$ & Size of the panel\\
     $F$ & Set of all features\\
     $V_f$ & Set of possible values for a specific feature $f \in F$\\
     $F(i)$ & Feature vector of agent $i$\\
     $n_{f,v}$  & Number of agents in the population with value $v$ of feature $f$\\
     $\ell_{f,v}, u_{f,v}$  & Lower and upper quotas for every feature-value pair\\
     $q_i$  & Probability that agent $i \in N$ enters the pool, conditioned on being invited\\
     $q^*$ & Minimum value of $q_i$ over all agents ($q^* := \min_{i \in N} q_i$)\\
     $\alpha$ & Parameter defined as $\alpha := q^*\,r/k$\\
     \bottomrule\\
\end{tabular}
\end{center}

\section{Supplementary Material for \cref{sec:theory}} \label{sec:app_theory}
\subsection{Discussion of Theorem Preconditions}
\label{sec:preconditions}

We show that pools are good with high probability under two preconditions: that each feature-value group constitutes at least $1/k$ fraction of the population (so $n_{f,v}/n \geq 1/k$ for all $f,v$), and that the number of recipients is sufficiently high relative to the participation probabilities and the panel size ($\alpha = q^* \, r / k \to \infty$).

The first condition is natural because if a group should proportionally receive less than one seat on the panel, any positive lower bound on selection probabilities for agents in groups would violate proportionality.

The second condition enforces that the number of agents invited $r$ is large enough relative to the minimum participation probability $q^*$ and the size of the panel. Without this condition, there can be a constant probability that the pool will feature zero agents with a certain feature-value: Suppose that $\alpha$ is an arbitrary positive constant, set all $q_i \coloneqq \alpha \, k / r$, and consider a feature-value pair $f,v$ with $n_{f,v} = n/k$ agents. 
In expectation, there will be $(r / n) \, (n / k) = r / k$ agents with feature-value $f,v$ among the recipients.
If $r \in \omega(k)$, there are at most $2 \, r / k$ such recipients with high probability. 
Then, the probability that the pool contains no agent with $f,v$ is at least 
\[(1 - \alpha \, k / r)^{2 \, r/k} = (1 - q_i)^{2 \, \alpha / q_i} = \big( \underbrace{(1-q_i)^{1/q_i}}_{\smash{\text{$\to 1/e$ as $q_i \to 0$}}}\big)^{2 \, \alpha} \to e^{-2 \, \alpha} > 0.\]

\subsection{Discussion of Ties to Discrepancy Theory}

In rounding agents' marginal selection probabilities to select a panel, we round fractional variables to 0 or 1 such that the sum of certain sets of variables changed only by a small amount.
This problem is closely connected to \emph{combinatorial discrepancy}~\cite{Chazelle01,Spencer85}, which can be summarized in the same words, by additionally assuming that the initial fractional values are $1/2$.
In fact, the original Beck-Fiala theorem arises in the context of discrepancy, showing that, if each variable appears in a bounded number $t$ of sets, discrepancy $\Theta(t)$ can be achieved (where in our setting, $t$ corresponds to $|F|$, the number of features).
Beck and Fiala~\cite{BF81} conjectured that it is actually possible to achieve discrepancy in $\mathcal{O}(\sqrt{t})$.
Should this conjecture be true, similar ideas might translate to our setting to guarantee the satisfaction of quotas closer to exact proportionality.
To this day, however, the best known bound in $t$ is still in $\Theta(t)$~\cite{Bukh16}. In accordance with this result, we guarantee a relaxation of $|F|$ from proportional representation of groups.

We note that there do exist other disrepancy results that give sub-linear dependencies on $|F|$, but at the cost of introducing dependencies on other parameters. One such result is Theorem 5.3 in \cite{bansal2019}, which guarantees discrepancy a square-root dependency on $|F|$. However, subject to our requirement that the per-person marginal probability must deviate from $k/n$ by only $\pm \delta k/n$ where $\delta \in o(1)$, Bansal's result guarantees a discrepancy bound of $O(\sqrt{|F| \log(k n/\delta)})$, which grows in $n$, making it unfavorable in our setting.

\subsection{Proof of \cref{lem:goodpools}}
\label{sec:app_goodpools}
The results in this section allow $k \geq 1$ and $r \geq 1$ to vary arbitrarily in $n$; they just require that $\alpha := q^* r / k \to \infty$ as $n \to \infty$ (without requiring $\alpha$ to grow at a specific minimum rate relative to $n$).
All convergences are relative to $n$ going to infinity.
\begin{customlemma}{2}
Suppose that $\alpha \to \infty$ and $n_{f,v} \geq n/k$ for all $f,v$.
Then, for all agents $i \in \Population$, $\mathbb{P}[\text{$\Pool$ \emph{is good}} \mid i \in \Pool] \to 1$.
\end{customlemma}

In the following proofs, it is convenient to refer to $1/q^*$, the largest possible value of $a_i$, as $a^*$. Note that $a^* = \frac{r}{\alpha \, k}$. We will refer to the random set of recipients with a certain feature-value pair $f,v$ as $\Recipients_{f,v} \coloneqq \{i \in \Recipients \mid f(i)=v\}$.

We begin by showing in \cref{lem:conditionalab,lem:conditionalc} that, conditioned on $i$ being in the pool, the following three events occur with high probability:
\begin{enumerate}[label=\textbf{\Alph*.}]
    \itemsep0em
    \item $k \, a^* \leq \sum_{j \in \Pool} a_{j}$
    \item $\sum_{j \in \Pool} a_j \in [(1 - \alpha^{-.492}) \, r, (1 + \alpha^{-.492}) \, r]$
    \item $\sum_{j \in \Pool : f(j) = v} a_j \in [(1 - \alpha^{-.492}) \, \frac{n_{f, v}}{n} \, r, (1 + \alpha^{-.492}) \, \frac{n_{f,v}}{n} \, r] \quad \forall f,v$
\end{enumerate}
We then show in \cref{lem:abcimplygood} that, when these events occur on some pool, the pool must be good, which concludes the proof of \cref{lem:goodpools}.

\begin{lemma}
\label{lem:conditionalab}
Under the assumptions of \cref{lem:goodpools}, $\mathbb{P}\left[\textbf{Event A} \land \textbf{Event B} \,\middle|\, i \in \Pool \right] \to 1$.
\end{lemma}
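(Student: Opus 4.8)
The plan is to prove \textbf{Event A} and \textbf{Event B} by studying the random variable $S \coloneqq \sum_{j \in \Pool} a_j$ conditioned on $i \in \Pool$, and showing it concentrates around its mean $r$ tightly enough. First I would set up the conditional probability space: conditioning on $i \in \Pool$ means $i$ received a letter and flipped into the pool, so $S = a_i + \sum_{j \in \Pool \setminus \{i\}} a_j$. The contribution $a_i \le a^*$ is a fixed bounded additive term, which I can absorb into the error (it is $O(a^*) = O(r/(\alpha k))$, and since we want deviations on the order of $\alpha^{-.492} r$, this is lower order as long as we handle the scaling correctly — note $a^* = r/(\alpha k) \le r/\alpha$ using $k \ge 1$, so $a^*$ is indeed $o(\alpha^{-.492} r)$ when $\alpha \to \infty$). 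Then the residual sum $\sum_{j \in \Pool \setminus \{i\}} a_j$ is, by symmetry of letter recipients and independence of participation coins, distributed exactly as the unconditional $\sum_{j \in \Pool'} a_j$ where $\Pool'$ is a pool drawn over the population minus $\{i\}$ (or more carefully, over recipients drawn from $N \setminus \{i\}$). This is a sum of independent bounded terms: writing it as $\sum_{j \in N \setminus \{i\}} a_j \, \mathds{1}[j \in \Recipients] \, \mathds{1}[j \text{ participates}]$, the summands are independent across $j$ (given the recipient structure; one can either first condition on $\Recipients$ or use that sampling-without-replacement concentration is no worse than with replacement), each lies in $[0, a^*]$, and the total expectation is $(1 \pm o(1)) \, r$ since $\mathbb{E}[\sum_{j \in \Pool} a_j] = \sum_j q_j^{-1} \cdot (r/n) \cdot q_j = r$.

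Next I would apply a multiplicative Chernoff/Hoeffding bound scaled to the range $[0,a^*]$. After dividing through by $a^*$, I have a sum of independent $[0,1]$ random variables with mean $\mu = r/a^* = \alpha k$ (approximately), and I want to bound $\mathbb{P}[|S/a^* - \mu| > \alpha^{-.492} \mu]$. The standard bound gives something like $2\exp(-\Omega(\alpha^{-.984} \mu)) = 2\exp(-\Omega(\alpha^{-.984} \alpha k)) = 2\exp(-\Omega(\alpha^{.016} k))$, which tends to $0$ since $\alpha \to \infty$ and $k \ge 1$. This simultaneously yields \textbf{Event B} directly and \textbf{Event A}, since $(1 - \alpha^{-.492}) r \ge k a^*$ is equivalent (up to the $o(1)$ slack from the $a_i$ term and the exact-mean computation) to $(1 - \alpha^{-.492}) \alpha k \ge k$, i.e. $(1-\alpha^{-.492})\alpha \ge 1$, which holds for all large $\alpha$. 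So once $S \ge (1-\alpha^{-.492}) r$ is established (the lower tail of Event B), Event A follows deterministically for large $\alpha$ — I would phrase it as: Event B's lower bound implies Event A. A union bound over the two tail events (or really just invoking the one concentration inequality) completes the argument.

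The main obstacle I anticipate is handling the conditioning on $i \in \Pool$ cleanly together with the exponent bookkeeping: I must verify that the gap between the exponent $.492$ appearing in Events A/B and the exponent $.49$ appearing in the definition of a good pool (Equations~\eqref{eq:good2} and~\eqref{eq:good3}) provides enough slack to absorb (i) the additive $a_i$ term from conditioning, (ii) the discrepancy between the true mean of $S$ and exactly $r$, and (iii) the error terms from sampling without replacement vs.\ with replacement. The choice $.492 < .49$... — wait, actually $.492 > .49$, so the good-pool constraints are the \emph{looser} ones and Events A–C are the \emph{tighter} intermediate claims; this is the right direction, and \cref{lem:abcimplygood} is what cashes this slack in. I expect the slack $\alpha^{-.49} - \alpha^{-.492} = \Theta(\alpha^{-.49})$, which dominates $a^*/r = \Theta(1/\alpha)$, so the bookkeeping works, but it requires care. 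A secondary subtlety is that $k$ and $r$ may grow arbitrarily with $n$, so I cannot treat them as constants; fortunately all bounds above only use $k \ge 1$, $r \ge 1$, and $\alpha \to \infty$, so the concentration exponents degrade gracefully. I would also need to make sure the independence structure is stated correctly — the cleanest route is to first condition on the set $\Recipients$ (and on $i \in \Recipients$), observe that participation decisions are then fully independent Bernoullis, apply the concentration bound conditionally, and note the resulting bound is uniform over all valid $\Recipients$, so it survives taking expectation over $\Recipients$.
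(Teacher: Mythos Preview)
Your proposal is correct and follows essentially the same approach as the paper: separate off the deterministic contribution $a_i$ of the conditioned-on agent, fix the recipient set, apply a Chernoff bound to the remaining independent $[0,a^*]$-valued summands $a_j\,\mathds{1}\{j\in\Pool\}$ (each with mean $1$, total mean $r-1$), and then observe that the lower bound of Event~B implies Event~A for large $\alpha$. The paper uses the slightly tighter intermediate exponent $\alpha^{-.495}$ in the Chernoff step before relaxing to $\alpha^{-.492}$, but your direct route with $\alpha^{-.492}$ works just as well.
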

\begin{proof}
    Fix the set of recipients $R$ (including $i$).
With respect to the randomness in the pool self-selection, the random variables $a_{j} \cdot \mathds{1}\{j \in \Pool\}$ across all $j \in R \setminus \{i\}$ are independent, bounded in $[0, a^*]$, and have expected value $a_j \, q_j = 1$.
Thus, by a Chernoff bound, and using that $a^* = r/(\alpha \, k)$,
\begin{align*}\mathbb{P}\left[\left|\sum_{j \in \Pool \setminus \{i\}} a_{j} - (r - 1)\right| \geq \alpha^{-.495} \, (r - 1)\right] &\leq 2 \, e^{- \alpha^{-.99} \, \frac{r-1}{a^*} / 3} \\
&= 2 \, e^{- \alpha^{-.99} \, \frac{r-1}{r} \, \alpha \, k / 3} \\
&\leq 2 \, e^{- \Omega(\alpha^{.01})} \to 0,
\end{align*}
where the last inequality uses the fact that $r \geq 2$ for large enough $n$\footnote{Since $r = \alpha \, k / q^* \geq \alpha / q^* \geq \alpha \to \infty$.} and that $k \geq 1$.

Conditioning on this high-probability event, it follows that, for large enough $n$,
\[ \sum_{j \in \Pool} a_{j} \geq 1 + \sum_{j \in \Pool \setminus \{i\}} a_{j} \geq 1 + (1 - \alpha^{-.495}) \, (r - 1) \geq (1 - \alpha^{-.492}) \, r, \]
which shows the lower bound in Event~B.
For the upper bound,
\begin{align*}
    \sum_{j \in \Pool} a_j &\leq a^* + \sum_{j \in \Pool \setminus \{i\}} a_j \leq a^* + (1 + \alpha^{-.495}) \, (r - 1) \leq r/(\alpha \, k) + (1 + \alpha^{-.495}) \, r \\
    &\leq (1 + \alpha^{-.495} + 1/\alpha) \, r \leq (1 + \alpha^{-.492}) \, r \leq 1/(1 - \alpha^{-.492}) \, r.
\end{align*}
This establishes Event~B.

For large enough $n$, the lower bound on $\sum_{j \in \Pool} a_{j}$ can be extended as
\[\sum_{j \in \Pool} a_{j}  \geq (1 - \alpha^{-.492}) \, r \geq r/\alpha \geq k \, a^*,\]
which shows Event~A.
\end{proof}

For Event~C, we need to show that $\sum_{j\in \Pool : f(j)=v} a_j$ is concentrated for a feature-value pair $f,v$.
As an intermediate step, we first show that the \emph{number} of pool members (``$\sum_{j\in \Pool : f(j)=v} 1$'') with this feature-value pair is concentrated: 
\begin{lemma} \label{lemma:concentration_rfv}
    Under the assumptions of \cref{lem:goodpools}, for each $f,v$,
    \[\mathbb{P}\left[(1 - \alpha^{-.495}) \, \frac{n_{f,v}}{n} \, r \leq \left|\Recipients_{f,v}\right| \leq (1 + \alpha^{-.495}) \, \frac{n_{f,v}}{n} \, r \,\middle|\, i \in \Pool\right] \to 1.\]
\end{lemma}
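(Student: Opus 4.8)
The plan is to reduce the statement to concentration of a hypergeometric random variable. First I would observe that, since agent~$i$'s decision to opt in is --- conditioned on $i \in \Recipients$ --- an independent coin flip that does not influence who else receives a letter, conditioning on $i \in \Pool$ is equivalent to conditioning on $i \in \Recipients$. Hence, given $i \in \Pool$, the set $\Recipients \setminus \{i\}$ is a uniformly random $(r-1)$-subset of $N \setminus \{i\}$, and $|\Recipients_{f,v}| = \mathds{1}\{f(i)=v\} + Y$, where $Y$ is the number of agents with feature-value $(f,v)$ in that sample --- a hypergeometric random variable.

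Next I would pin down the conditional mean and show it is large. We have $\mu := \mathbb{E}[\,|\Recipients_{f,v}| \mid i \in \Pool\,] = \mathds{1}\{f(i)=v\} + (r-1)\,\frac{n_{f,v} - \mathds{1}\{f(i)=v\}}{n-1}$, and a one-line computation gives $|\mu - \tfrac{n_{f,v}}{n}\,r| \le c$ for an absolute constant $c$. Using the precondition $n_{f,v} \ge n/k$ together with $r/k = \alpha/q^* \ge \alpha$, this yields $\mu \ge \tfrac{n_{f,v}}{n}\,r - c \ge \alpha - c \to \infty$; in particular $\mu \ge \alpha/2$ for $n$ large enough.

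Then I would apply a multiplicative Chernoff bound. Since a hypergeometric is at least as concentrated as a binomial with the same mean (equivalently, the sampling-without-replacement indicators are negatively associated), for any $\varepsilon \in (0,1]$ we get $\mathbb{P}\big[\, \big|\,|\Recipients_{f,v}| - \mu\,\big| \ge \varepsilon\,\mu \ \big|\ i \in \Pool \,\big] \le 2\,e^{-\varepsilon^2 \mu/3}$. Choosing $\varepsilon := \tfrac12\,\alpha^{-.495}$ makes the right-hand side at most $2\,e^{-\alpha^{-.99}\mu/12} \le 2\,e^{-\Omega(\alpha^{.01})} \to 0$. On the complementary event I would convert the relative deviation around $\mu$ into a relative deviation around $\tfrac{n_{f,v}}{n}\,r$: since the target half-width $\alpha^{-.495}\,\tfrac{n_{f,v}}{n}\,r \ge \alpha^{.505}$ dominates both $\varepsilon\,\mu$ and the additive gap $c$ for $n$ large, a short estimate shows $|\Recipients_{f,v}|$ lies in $[(1-\alpha^{-.495})\,\tfrac{n_{f,v}}{n}\,r,\ (1+\alpha^{-.495})\,\tfrac{n_{f,v}}{n}\,r]$, as desired.

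There is no deep obstacle here; the only care needed is (i) the bookkeeping that conditioning on $i \in \Pool$ collapses to conditioning on $i \in \Recipients$ (so that $i$ can be peeled off and the remainder treated as a clean uniform sample, as already done in the proof of \cref{lem:goodpools}), and (ii) absorbing the $O(1)$ slack between $\mu$ and the nominal mean $\tfrac{n_{f,v}}{n}\,r$ into the multiplicative window --- which is exactly why the preconditions $n_{f,v}\ge n/k$ and $\alpha\to\infty$ are invoked, since together they force the window $\alpha^{-.495}\,\tfrac{n_{f,v}}{n}\,r$ to grow without bound.
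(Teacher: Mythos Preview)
Your proposal is correct and follows essentially the same route as the paper: condition on $i$ being a recipient, treat $\Recipients \setminus \{i\}$ as a uniform $(r-1)$-sample, bound the conditional mean close to $\tfrac{n_{f,v}}{n}r$, and apply a multiplicative Chernoff bound via negative association of sampling-without-replacement indicators. The only cosmetic difference is that the paper bounds $\mu$ multiplicatively within $[(1\pm 1/\alpha)\,\tfrac{n_{f,v}}{n}r]$ and uses deviation $\alpha^{-.497}$ before absorbing both factors into $\alpha^{-.495}$, whereas you bound $\mu$ additively by an $O(1)$ constant and use $\varepsilon = \tfrac12\alpha^{-.495}$ directly; both bookkeeping choices lead to the same $e^{-\Omega(\alpha^{.01})}$ tail and the same final window.
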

\begin{proof}
    Conditioned on $i \in \Pool \subseteq \Recipients$, $\Recipients \setminus \{i\}$ is distributed as if $r - 1$ members of $\Population \setminus \{i\}$ were drawn with equal probability and without replacement.
    Thus, 
    \[ \mathbb{E}\left[\left|\Recipients_{f,v}\right| \,\middle|\, i \in \Pool\right] = \begin{cases} n_{f,v} \, \frac{r - 1}{n - 1} & \text{if $f(i) \neq v$} \\
    1 + (n_{f,v} - 1) \, \frac{r-1}{n-1} & \text{if $f(i) = v$.}\end{cases}\]
    In both cases, we show that $\mathbb{E}\left[\left|\Recipients_{f,v}\right| \,\middle|\, i \in \Pool\right] \in [(1 - k/r) \, n_{f,v} \, \frac{r}{n}, (1 + k/r) \, n_{f,v} \, \frac{r}{n}]$.
    Indeed, for the upper bound,
    \begin{align*}
        \mathbb{E}\left[\left|\Recipients_{f,v}\right| \,\middle|\, i \in \Pool\right] &\leq 1 + (n_{f,v} - 1) \, \frac{r-1}{n-1} \leq 1 + n_{f,v} \, \frac{r}{n} = \left(1 + \frac{n}{n_{f,v}} / r\right) \, n_{f,v} \, \frac{r}{n} \\
        &\leq (1 + k/r) \, n_{f,v} \, \frac{r}{n} \leq (1 + 1/\alpha) \, n_{f,v} \, \frac{r}{n}.
    \end{align*}
    For the lower bound,
    \begin{align*}
        \mathbb{E}\left[\left|\Recipients_{f,v}\right| \,\middle|\, i \in \Pool\right] &\geq n_{f,v} \, \frac{r - 1}{n - 1} = \frac{r-1}{r} \, n_{f,v} \, \frac{r}{n} = (1 - 1/r) \, n_{f,v} \, \frac{r}{n} \\
        &\geq (1 - k/r) \, n_{f,v} \, \frac{r}{n} \geq (1 - 1/\alpha) \, n_{f,v} \, \frac{r}{n}.
    \end{align*}

    As the (independent) union of the deterministic set $\{i\}$ and indicator variables for sampling without replacement, the variables $\mathds{1}\{j \in \Recipients\}$ satisfy negative association and therefore Chernoff inequalities~\cite{Wajc17}.
    Thus, for the upper tail bound,
\begin{align*}
    &\mathbb{P}\left[\left|\Recipients_{f,v}\right| \geq (1 + \alpha^{-.497}) \, (1 + 1/\alpha) \, n_{f,v} \, \frac{r}{n} \middle| \, i \in \Pool \right] \leq e^{- \alpha^{-.994} \, (1 + 1/\alpha) \, n_{f,v} \, \frac{r}{n} / 3} \\
    &\leq e^{- \alpha^{-.994} \,  n_{f,v} \, \frac{r}{n} / 3} 
\leq e^{- \alpha^{-.994} \, \frac{r}{k} / 3} \leq e^{- \alpha^{-.994} \, \alpha / 3} \leq e^{- \alpha^{.006} / 3} \to 0.
\end{align*}
Similarly, for the lower tail bound,
\begin{align*}
    &\mathbb{P}\left[\left|\Recipients_{f,v}\right| \leq (1 - \alpha^{-.497}) \, (1 - 1/\alpha) \, n_{f,v} \, \frac{r}{n} \middle| \, i \in \Pool \right] \leq e^{- \alpha^{-.994} \, (1 - 1/\alpha) \, n_{f,v} \, \frac{r}{n} / 2} \\
    &\stackrel{\mathclap{(\alpha \geq 3)}}{\leq} \;\; e^{- \alpha^{-.994} \, n_{f,v} \, \frac{r}{n} / 3} 
        \leq e^{- \alpha^{.006} / 3} \to 0.
\end{align*}

The claim follows from observing that, for $r/k$ large enough,
\[ (1 - \alpha^{-.497}) \, (1 - 1/\alpha) \geq 1 - \alpha^{-.497} - \alpha^{-1} \geq 1 - \alpha^{-.495}\]
and
\[ (1 + \alpha^{-.497}) \, (1 + 1/\alpha) = 1 + \alpha^{-.497} + \alpha^{-1} + \alpha^{-1.497} \leq 1 - \alpha^{-.495}. \qedhere\]
\end{proof}

\begin{lemma}
\label{lem:conditionalc}
Under the assumptions of \cref{lem:goodpools}, $\mathbb{P}\left[\textbf{Event C} \,\middle|\, i \in \Pool \right] \to 1$.
\end{lemma}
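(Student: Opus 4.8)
\textbf{Proof proposal for \cref{lem:conditionalc}.}

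The plan is to reduce the concentration of the feature-value-weighted sum $\sum_{j \in \Pool : f(j)=v} a_j$ to the combination of two facts already established: the concentration of the \emph{count} $|\Recipients_{f,v}|$ (\cref{lemma:concentration_rfv}) and a Chernoff bound on the self-selection step restricted to the recipients with feature-value $f,v$. First I would fix a feature-value pair $f,v$ and condition on the set of recipients $R$ (with $i \in R$), so that the only remaining randomness is the independent self-selection of each recipient into the pool. For $j \in R_{f,v} \setminus \{i\}$ the variables $a_j \, \mathds{1}\{j \in \Pool\}$ are independent, bounded in $[0, a^*]$, and have expectation $a_j \, q_j = 1$; hence $\mathbb{E}\big[\sum_{j \in \Pool \cap R_{f,v}} a_j \mid R\big]$ equals $|R_{f,v}|$ up to the $O(1)$ contribution of $i$ itself.

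Next I would apply a Chernoff bound exactly as in the proof of \cref{lem:conditionalab}: conditioned on $R$, the probability that $\sum_{j \in (\Pool \setminus \{i\}) \cap R_{f,v}} a_j$ deviates from $|R_{f,v}| - \mathds{1}\{f(i)=v\}$ by more than a $\alpha^{-.495}$-fraction is at most $2 \, e^{-\Omega(\alpha^{-.99} \, |R_{f,v}| / a^*)}$. Using $a^* = r/(\alpha k)$ and the lower bound $|R_{f,v}| \geq (1 - \alpha^{-.495}) \, (n_{f,v}/n) \, r \geq (1-o(1)) \, r/k$ from \cref{lemma:concentration_rfv} (valid on the high-probability event from that lemma), the exponent is $\Omega(\alpha^{-.99} \cdot (r/k) \cdot (\alpha k / r)) = \Omega(\alpha^{.01}) \to \infty$, so this failure probability vanishes. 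One subtlety is that this tail bound is conditional on $R$, and $|R_{f,v}|$ itself is random; I would handle this by intersecting with the good event of \cref{lemma:concentration_rfv} and noting that the Chernoff exponent is uniformly large over all $R$ in that event (since we only used a lower bound on $|R_{f,v}|$), and that for the \emph{upper} tail we also need an upper bound on $|R_{f,v}|$, again supplied by \cref{lemma:concentration_rfv}.

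Combining, on the intersection of the two good events we get
\[ \sum_{j \in \Pool : f(j)=v} a_j \in \big[(1 - \alpha^{-.495})^2 \, \tfrac{n_{f,v}}{n} \, r - a^*, \ (1 + \alpha^{-.495})^2 \, \tfrac{n_{f,v}}{n} \, r + a^*\big], \]
and since $a^* = r/(\alpha k) \leq (n_{f,v}/n) \, r / \alpha$ (using $n_{f,v} \geq n/k$) and $(1 \pm \alpha^{-.495})^2 \pm 1/\alpha$ is within $(1 \pm \alpha^{-.492})$ for large $n$, this gives Event~C for the fixed pair $f,v$. Finally I would take a union bound over the constantly many feature-value pairs — this is where fixing the feature-value pairs (rather than letting $|F|$ grow) is used — so that $\mathbb{P}[\textbf{Event C} \mid i \in \Pool] \to 1$. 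The main obstacle is bookkeeping the conditioning correctly: the Chernoff bound is naturally stated conditional on $R$, while the target statement conditions only on $i \in \Pool$, so care is needed to ensure the per-$R$ failure probabilities integrate to something vanishing and that the count concentration from \cref{lemma:concentration_rfv} is invoked under the same conditioning; none of this is deep, but it is the step most easily gotten wrong.
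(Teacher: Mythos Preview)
Your proposal is correct and follows essentially the same route as the paper's proof: condition on the recipient set, invoke \cref{lemma:concentration_rfv} for the count $|R_{f,v}|$, apply a Chernoff bound to the independent self-selection variables $a_j\,\mathds{1}\{j\in\Pool\}$ over $R_{f,v}\setminus\{i\}$, absorb the $O(a^*)$ contribution of agent~$i$, compound the two $(1\pm\alpha^{-.495})$ errors into $(1\pm\alpha^{-.492})$, and finish with a union bound over the finitely many feature-value pairs. The only cosmetic difference is that the paper handles $i$'s contribution by a case split on $f(i)=v$ and a multiplicative $(1+2/\alpha)$ correction to the expectation, whereas you use an additive $\pm a^*$ and the inequality $a^*\le (n_{f,v}/n)\,r/\alpha$; both are equivalent up to lower-order terms.
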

\begin{proof}
    Fix a single feature-value pair $f,v$.
    By \cref{lemma:concentration_rfv}, with high probability, the number of recipients $r_{f,v}$ with feature-value pair $f,v$ is in
    \[\left[(1 - \alpha^{-.495}) \, \frac{n_{f,v}}{n} \, r, (1 + \alpha^{-.495}) \, \frac{n_{f,v}}{n} \, r\right].\]
    Going forward, we will fix a set of recipients $R$, and we assume that $r_{f,v}$ indeed falls in this range.
    For large enough $n$, this implies that $r_{f,v}$ is positive.
    For ease of notation, we will implicitly condition on $i \in \Pool$ and these high-probability events.

    The self-selection process of agents with feature-value pair $f, v$ might look a bit different depending on whether $f(i)=v$.
    If $f(i) \neq v$, the self selection of agents with feature-value pair $f,v$ is independent from our knowledge about $i$ being in the pool.
    Thus, the random variable $\sum_{\substack{j \in \Pool,\\f(j) = v}} a_j$ is the sum of independent random variables $a_j \, \mathds{1}\{j \in \Pool\}$ for each $j \in R, f(j)=v$, where each variable is bounded in $[0, a^*]$ and has expectation $1$.
    In particular, $\mathbb{E}\left[\sum_{\substack{j \in \Pool,\\f(j) = v}} a_j\right] = r_{f,v}$.

    Else, if $f(i) \neq v$, $\sum_{\substack{j \in \Pool,\\f(j) = v}} a_j$ is still the sum of independent random variables $a_j \, \mathds{1}\{j \in \Pool\}$ and each variable is bounded in $[0, a^*]$. However, the specific variable $a_i \, \mathds{1}\{i \in \Pool\}$ is deterministically $a_i$ (all other variables still have expectation $1$).
    Thus, $\mathbb{E}\left[\sum_{\substack{j \in \Pool,\\f(j) = v}} a_j\right] = r_{f,v} - 1 + a_i$.
    \begin{align*}
    r_{f,v} - 1 + a_i &= \left(1 + \frac{a_i - 1}{r_{f,v}}\right) \, r_{f,v} \leq \left(1 + \frac{a^*}{r_{f,v}}\right) \, r_{f,v} \leq \left(1 + \frac{r/(\alpha \, k)}{(1 - \alpha^{-.495}) \, r \, n_{f,v}/n}\right) \, r_{f,v} \\
    &\leq \left(1 + \frac{r/(\alpha \, k)}{(1 - \alpha^{-.495}) \, r / k}\right) \, r_{f,v} = \left(1 + \frac{1}{(1 - \alpha^{-.495}) \, \alpha}\right) \, r_{f,v} \\
    &\leq (1 + 2/\alpha) \, r_{f,v}. \tag{for $\alpha^{.495} \geq 2$}
\end{align*}
Thus, across both cases, the expectation  $\mathbb{E}\left[\sum_{\substack{j \in \Pool,\\f(j) = v}} a_j\right]$ is at least $r_{f,v} \geq (1 - \alpha^{-.495}) \, \frac{n_{f,v}}{n} \, r$ and at most $(1 + 2 / \alpha) \, r_{f,v} \leq (1 + 2/\alpha) \, (1 + \alpha^{-.495}) \, \frac{n_{f,v}}{n} \, r \leq (1 + \alpha^{-.493})\, \frac{n_{f,v}}{n} \, r$ for large $n$, and we can use Chernoff bounds.
    
For bounding the lower tail,
\begin{align*}
    &\mathbb{P}\left[ \sum_{\substack{j \in \Pool,\\f(j) = v}} a_j \leq (1 - \alpha^{-.495}) \, (1 - \alpha^{-.495}) \, \frac{n_{f,v}}{n} \, r\right] \leq e^{- \alpha^{-.99} \, (1 - \alpha^{-.495}) \, \frac{n_{f,v}}{n} \, r / (2 \, a^*)} \\
    &\stackrel{\mathclap{(\alpha^{.495} \geq 3)}}{\leq} \;\;\;\; e^{- \alpha^{-.99} \, \frac{n_{f,v}}{n} \, r / (3 \, a^*)} 
    = e^{- \alpha^{-.99} \, \frac{n_{f,v}}{n} \, r / (3 \, r / (\alpha \, k))} 
    \leq e^{- \alpha^{-.99} \, \frac{n_{f,v}}{n} \, \alpha \, k / 3} \\
    &\leq e^{- \alpha^{-.99} \, \alpha / 3}  \\
    &\leq e^{- \alpha^{.01} / 3}  \to 0.
\end{align*}
For bounding the upper tail,
\begin{align*}
    &\mathbb{P}\left[ \sum_{\substack{j \in \Pool,\\f(j) = v}} a_j \geq (1 + \alpha^{-.495}) \, (1 + \alpha^{-.493}) \, \frac{n_{f,v}}{n} \, r\right] \leq e^{- \alpha^{-.99} \, (1 + \alpha^{-.493}) \, \frac{n_{f,v}}{n} \, r / (3 \, a^*)} \\
    &\leq e^{- \alpha^{-.99} \, \frac{n_{f,v}}{n} \, r / (3 \, a^*)} 
    = e^{- \alpha^{-.99} \, \frac{n_{f,v}}{n} \, \alpha \, k / 3} 
    \leq e^{- \alpha^{-.99} \, \alpha / 3} 
    \leq e^{- \alpha^{.01} / 3} \to 0.
\end{align*}

Note that, for large $n$,
$(1 - \alpha^{-.495}) \, (1 - \alpha^{-.495}) \geq 1 - 2 \, \alpha^{-.495} \geq 1 - \alpha^{-.492}$.
Similarly, $(1 + \alpha^{-.495}) \, (1 + \alpha^{-.493}) \in 1 + \mathcal{O}(\alpha^{-.493})) \leq 1 + \alpha^{-.492}$.

This shows that, for each $f,v$, $(1 - \alpha^{-.492}) \, \frac{n_{f,v}}{n} \, r \leq \sum_{\substack{j \in \Pool,\\f(j) = v}} a_j \leq (1 + \alpha^{-.492}) \, \frac{n_{f,v}}{n} \, r$ with high probability.
The claim follows by a union bound over all (finitely many) feature-value pairs.
\end{proof}

\begin{lemma}
\label{lem:abcimplygood}
For large enough $n$, if Events A, B, and C occur for a pool $P$, $P$ is good.
\end{lemma}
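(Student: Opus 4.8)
The plan is to verify each of the three defining conditions \eqref{eq:good1}, \eqref{eq:good2}, \eqref{eq:good3} of a good pool directly from the three events, using the identity $\pi_{i,P} = k\,a_i / \sum_{j\in P} a_j$ and the facts $a_i \le a^* = r/(\alpha k)$ for all $i$. First I would dispatch condition \eqref{eq:good3}: by Event~B, $\sum_{i\in P} a_i \le (1+\alpha^{-.492})\, r$, and for large $n$ we have $(1+\alpha^{-.492}) \le 1/(1-\alpha^{-.492})$ (since $\alpha\to\infty$ and for small $x>0$, $(1+x)(1-x) = 1-x^2 < 1$), so $\sum_{i\in P} a_i \le r/(1-\alpha^{-.492}) < r/(1-\alpha^{-.49})$ because $\alpha^{-.492} < \alpha^{-.49}$ for large $\alpha$. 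That gives \eqref{eq:good3}.

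Next, condition \eqref{eq:good1}: nonnegativity is immediate since all $a_i > 0$. For the upper bound $\pi_{i,P}\le 1$, write $\pi_{i,P} = k\,a_i / \sum_{j\in P} a_j \le k\,a^* / \sum_{j\in P}a_j$. Event~A states exactly $k\,a^* \le \sum_{j\in P} a_j$, so $\pi_{i,P}\le 1$. (Event~A was designed for precisely this purpose.)

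Finally, condition \eqref{eq:good2}: for a feature-value pair $f,v$,
\[
\sum_{i\in P:\,f(i)=v}\pi_{i,P} \;=\; k\,\frac{\sum_{i\in P:\,f(i)=v} a_i}{\sum_{j\in P} a_j}.
\]
Event~C bounds the numerator in $\big[(1-\alpha^{-.492})\tfrac{n_{f,v}}{n}r,\ (1+\alpha^{-.492})\tfrac{n_{f,v}}{n}r\big]$, and Event~B bounds the denominator in $\big[(1-\alpha^{-.492})r,\ (1+\alpha^{-.492})r\big]$. Dividing the extremes, the ratio lies in $\big[\tfrac{1-\alpha^{-.492}}{1+\alpha^{-.492}}\tfrac{n_{f,v}}{n},\ \tfrac{1+\alpha^{-.492}}{1-\alpha^{-.492}}\tfrac{n_{f,v}}{n}\big]$, so multiplying by $k$,
\[
\frac{1-\alpha^{-.492}}{1+\alpha^{-.492}}\,\frac{k\,n_{f,v}}{n}\;\le\;\sum_{i\in P:\,f(i)=v}\pi_{i,P}\;\le\;\frac{1+\alpha^{-.492}}{1-\alpha^{-.492}}\,\frac{k\,n_{f,v}}{n}.
\]
It then remains to check that for large $n$ these sandwich factors are contained in $[\,1-\alpha^{-.49},\ 1+\alpha^{-.49}\,]$; this is a routine estimate since $\tfrac{1+\alpha^{-.492}}{1-\alpha^{-.492}} = 1 + \tfrac{2\alpha^{-.492}}{1-\alpha^{-.492}} = 1 + O(\alpha^{-.492})$ and $\alpha^{-.492} = o(\alpha^{-.49})$ as $\alpha\to\infty$, and symmetrically for the lower factor. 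This establishes \eqref{eq:good2} and completes the proof.

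The main obstacle, such as it is, is purely bookkeeping: tracking the slack between the $\alpha^{-.492}$ exponents appearing in Events A–C and the $\alpha^{-.49}$ exponents in the good-pool definition, and confirming the "for large enough $n$" threshold at which all the inequalities $(1\pm\alpha^{-.492})$-type bounds collapse into the $(1\pm\alpha^{-.49})$ bounds. No probabilistic argument is needed here — that work was all done in \cref{lem:conditionalab,lem:conditionalc,lemma:concentration_rfv} — so this lemma is essentially a deterministic implication.
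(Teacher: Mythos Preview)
Your proposal is correct and follows essentially the same route as the paper: condition~\eqref{eq:good1} from Event~A via $a_i\le a^*$, condition~\eqref{eq:good2} from the ratio of Events~C and~B, and condition~\eqref{eq:good3} from the upper bound in Event~B. The only cosmetic differences are that the paper gives explicit inequalities $(1+x)/(1-x)\le 1+3x$ and $(1-x)/(1+x)\ge 1-2x$ where you invoke $1+O(\alpha^{-.492})$ and $\alpha^{-.492}=o(\alpha^{-.49})$, and it reaches \eqref{eq:good3} via the algebraic identity $(1+\alpha^{-.49})=\tfrac{1-\alpha^{-.98}}{1-\alpha^{-.49}}$ rather than your two-step bound through $1/(1-\alpha^{-.492})$.
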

\begin{proof}
Suppose that Events~A, B, and C occur in a pool $P$.

\paragraph{Condition~\eqref{eq:good1}: $\forall j \in P. \; 0 \leq \pi_{j, P} \leq 1$.}
Clearly, $\pi_{j, P}$ is nonnegative, and Event~A implies that $\pi_{j,P} = k \, a_j / \sum_{j' \in P} a_{j'} \leq k \, a^* / \sum_{j' \in P} a_{j'} \leq 1$.

\paragraph{Condition~\eqref{eq:good2}: $\forall f, v.\;(1 - \alpha^{-.49})\, k \, n_{f, v} / n \leq \sum_{j \in P : f(j) = v} \pi_{j, P} \leq (1 + \alpha^{-.49}) \, k \, n_{f, v} / n$.}
Fix any feature-value pair $f,v$.
Recall that, by Event~B, 
\[\sum_{j \in P} a_j \in [(1 - \alpha^{-.492}) \, r, (1 + \alpha^{-.492}) \, r],\]
and, by Event~C, 
\[\sum_{j \in P : f(j) = v} a_j \in [(1 - \alpha^{-.492}) \, \frac{n_{f, v}}{n} \, r, (1 + \alpha^{-.492}) \, \frac{n_{f,v}}{n} \, r].\]

Observe that, for any $x \in [0, 1/3]$,
\[\frac{1 + x}{1-x} \leq \frac{1 + x + x \, (1 - 3 \, x)}{1-x} = \frac{1 + 2 \, x - 3 \, x^2}{1 - x} = 1 + 3\,x. \]
Then, if $n$ is large enough such that $\alpha^{-.492} \leq 1/3$, it follows that
\begin{align*}
\sum_{j \in P : f(j) = v} \pi_{j,P} &= k \, \frac{\sum_{j \in P : f(j)=v} a_j}{\sum_{j \in P} a_j} \leq k \, \frac{(1 + \alpha^{-.492}) \, \frac{n_{f,v}}{n} \, r}{(1 - \alpha^{-.492}) \, r} \leq (1 + 3 \, \alpha^{-.492}) \, k \, \frac{n_{f,v}}{n} \\
&\leq (1 + \alpha^{-.49}) \, k \, \frac{n_{f,v}}{n}.
\end{align*}

Next, observe that, for any $x$,
\[ \frac{1 - x}{1 + x} \geq \frac{1 - x - 2 \, x^2}{1 + x} = 1 - 2 \, x. \]
Thus,
\begin{align*}
\sum_{j \in P : f(j) = v} \pi_{j,P} &= k \, \frac{\sum_{j \in P : f(j)=v} a_j}{\sum_{j \in P} a_j} \geq k \, \frac{(1 - \alpha^{-.492}) \, \frac{n_{f,v}}{n} \, r}{(1 + \alpha^{-.492}) \, r} \geq (1 - 2 \, \alpha^{-.492}) \, k \, \frac{n_{f,v}}{n} \\
&\geq (1 - \alpha^{-.49}) \, k \, \frac{n_{f,v}}{n}.
\end{align*}

\paragraph{Condition~\eqref{eq:good3}: $\sum_{i \in P}a_i \leq r/(1-\alpha^{-.49})$.}
This follows from Event~B since $\sum_{j \in P} a_j \leq (1 + \alpha^{-.492}) \, r \leq (1 + \alpha^{-.49}) \, r = \frac{1 - \alpha^{-.98}}{1 - \alpha^{-.49}} \, r \leq r/(1 - \alpha^{-.49})$ for large enough $n$.
\end{proof}

\subsection{Proof of \cref{lem:beckfiala_random}}
\label{sec:app_bf}
\subsubsection{Rounding the Linear Program Using Discrepancy Methods}
In Part~II of the algorithm, we need to implement the marginal probabilities $\pi_{i, P}$ from Part~I by randomizing over panels of size $k$.
Additionally, the panels produced by this procedure should guarantee that the number of panel members of a feature-value pair $(f,v)$ lies in a narrow interval around the proportional number of panel members $k \, n_{f,v} / n$.
Technically, this corresponds to randomly rounding the fractional solution $x_i \coloneqq \pi_{i, P}$ of an LP, %
such that afterwards all variables are 0 or 1, i.e., indicator variables for membership in a random panel.

Formally, we prove the following lemma:
\begin{customlemma}{3}
    There is a polynomial-time sampling algorithm that, given a good pool $P$, produces a random panel $\Panel$ such that (1) $\mathbb{P}[i \in \Panel] = \pi_{i,P}$ for all $i \in P$, (2) $|\Panel| = k$, and (3) $\sum_{i : f(i)=v} \pi_{i,P} - |F| \leq |\{i \in \Panel \mid f(i)=v\}| \leq \sum_{i : f(i)=v} \pi_{i,P} + |F|$.
\end{customlemma}

To round the linear program, we use an iterative rounding procedure
based on the famous Beck-Fiala theorem~\cite{BF81}.
For ease of exposition, we first describe an algorithm for deterministic rounding and describe in the subsequent subsection how to turn it into a randomized rounding procedure.
From here on, we drop the index ``$P$'' from the marginal probabilities $\pi_{i,P}$, both for ease of notation and to emphasize that the lemma applies to any set of marginal probabilities adding up to $k$ (such other marginals might arise, say, from clipping and rescaling the $\pi_{i,P}$ if some of them are greater than 1).

\begin{lemma}
    \label{lem:beckfiala}
    For a pool $P$, let $(\pi_i)_{i \in P}$ be \emph{any} collection of variables in $[0,1]$ such that $\sum_{i \in P} \pi_i = k$.
    Then, we can efficiently compute a deterministic 0/1 rounding $(x_i)_{i \in P}$ such that $\sum_{i \in P} x_i = k$ and such that, for each feature-value pair $f,v$, \[\sum_{i \in P : f(i)=v} \pi_i - |F| \leq \sum_{i \in P : f(i)=v} x_i \leq \sum_{i \in P : f(i)=v} \pi_i + |F|.\]
\end{lemma}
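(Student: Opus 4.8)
The plan is to obtain the rounding by the iterated-rounding proof of the Beck--Fiala theorem, adapted to also carry the panel-size constraint. Work with the linear system in unknowns $(x_i)_{i\in P}$ whose equations are $C_0\colon \sum_{i\in P}x_i=k$ together with one equation $C_{f,v}\colon \sum_{i\in P:\,f(i)=v}x_i=\sum_{i\in P:\,f(i)=v}\pi_i$ for every feature-value pair, and initialize $x\coloneqq\pi$, so every equation holds. Call a coordinate \emph{frozen} once it equals $0$ or $1$ and \emph{floating} otherwise; keep $C_0$ permanently \emph{active}, and keep $C_{f,v}$ active exactly as long as strictly more than $|F|$ of its variables are floating. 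Each iteration: restrict the active equations to the floating coordinates, homogenize them, take a nonzero vector $y$ in their common kernel (its existence is the heart of the argument, below), and move $x$ to $x+\theta y$ with $|\theta|$ maximal subject to $x+\theta y\in[0,1]^P$; since $y$ is supported on floating coordinates and $y\neq 0$ while floating coordinates lie strictly inside $[0,1]$, such a $\theta\neq 0$ exists, the move freezes at least one further coordinate, and every active equation stays satisfied exactly because $y$ is in its kernel. After at most $|P|$ iterations every coordinate is frozen; each iteration only solves a linear system and minimizes over ratios, so the procedure runs in polynomial time.

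The step I expect to be the main obstacle is showing that a nonzero kernel vector always exists, since we insist on carrying the always-active constraint $C_0$ (which pins the panel size to exactly $k$) without enlarging the per-group slack past $|F|$. The Beck--Fiala counting handles this, using that each agent lies in exactly one value class per feature, hence in exactly $|F|$ of the equations $C_{f,v}$. With $m$ floating coordinates and $a$ active equations $C_{f,v}$, counting floating-coordinate/active-equation incidences two ways gives $(|F|+1)\,a\le |F|\,m$, so $a\le m-1$ and the restricted system has at most $a+1\le m$ equations in $m$ unknowns. If $a+1<m$, a nonzero kernel vector exists trivially. In the boundary case $a+1=m$, the same inequality forces $m=|F|+1$, so each active $C_{f,v}$, having $\ge|F|+1$ floating variables, must contain all $m$ floating coordinates; since value classes of one feature are disjoint, the $a=|F|$ active equations lie in distinct features, and restricted to the floating coordinates they all collapse to $\sum_{\text{floating }i}y_i=0$, as does $C_0$, so the restricted system has rank $1<m$ and again has a nonzero kernel vector. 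The remaining case, $a=0$, leaves only $C_0$, which has a nonzero kernel whenever $m\ge 2$; and $m=1$ cannot occur, since a single floating $x_j$ with everything else frozen would force $x_j=k-\sum_{i\ne j}x_i\in\mathbb{Z}$, contradicting $x_j\in(0,1)$. So a move is possible at every state with $m\ge 2$, and the run terminates at $m=0$.

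Finally I would read off the two conclusions. Because $C_0$ holds after every step and the terminal $x$ is $0/1$-valued, it is the indicator of a panel of size exactly $k$. Fix a pair $(f,v)$: while $C_{f,v}$ was active its sum equalled $\sum_{i:f(i)=v}\pi_i$; from the moment it turned inactive — or from the outset, if it was never active — at most $|F|$ of its variables were still floating, and over the remainder of the run each such variable moves from a value in $(0,1)$ to a value in $\{0,1\}$, i.e. changes by strictly less than $1$. Hence the integer $\sum_{i:f(i)=v}x_i$ differs from $\sum_{i:f(i)=v}\pi_i$ by less than $|F|$, giving the claimed two-sided bound. (This lemma asserts no marginal guarantees; the randomized variant with $\mathbb{P}[i\in\Panel]=\pi_{i,P}$ is taken up in the following subsection.)
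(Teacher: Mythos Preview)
Your proof is correct and follows essentially the same Beck--Fiala iterated-rounding argument as the paper: both carry the sum-to-$k$ constraint throughout, drop a feature-value constraint once at most $|F|$ of its variables remain floating, and use the degree-$|F|$ incidence count to guarantee room for a nonzero move. The only organizational difference is that the paper interleaves explicit ``drop redundant'' and ``drop small'' steps with moves, whereas you fold ``drop small'' into the definition of activity and dispatch the redundant case via your rank-$1$ analysis of the boundary $a+1=m$; your strict $<|F|$ bound is even marginally sharper than what the lemma asks.
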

\begin{proof}
We initialize $x_i \leftarrow \pi_{i, P}$, and the following inequalities are therefore satisfied:
\begin{align}
    \sum_{i \in P} x_i &= k \label{eq:bf1}\\
\sum_{i \in P : f(i)=v} x_i &= \sum_{i \in P : f(i)=v} \pi_{i,P} &&\forall f,v. \label{eq:bf2}
\end{align}
We then iteratively update the $x_i$ and maintain a set of equations that starts as the equations in \cref{eq:bf1,eq:bf2}, but from which we will iteratively drop some equations of type~\eqref{eq:bf2}.
Throughout this process, we maintain that the $x_i$ satisfy all remaining (i.e., not dropped) equations and that $x_i \in [0, 1]$ for all $i$.
We call $x_i \in (0, 1)$ \emph{active}; once an $x_i$ stops being active, it stays at its value $0$ or $1$ to the end of the rounding.
We continue our iterative process until no more active variables remain, at which point we return our 0/1 rounding.

Whenever the number of remaining equalities is lower than the number of active agents, the values $x_i$ for the active variables must be underdetermined by the equalities.
More precisely, after considering all inactive $x_i$ as constants, the space of remaining $x_i$ that satisfies the remaining equalities forms an affine subspace of non-zero dimension.
Since this subspace must intersect the boundary of the unit hypercube, there is a way of updating the $x_i$ such that all equalities are preserved, such that no inactive variable gets changed, and such that at least one additional variable becomes inactive (progress).\footnote{This step can be implemented in polynomial time by solving systems of linear equations.}

Else, we know that the number of active agents $n'$ is at most the number of remaining equalities $m$.
If $m = 1$, i.e., if \cref{eq:bf1} is the only remaining equation, there cannot be any active agents since \cref{eq:bf1} can only be satisfied if no $x_i$ or at least two $x_i$ are non-integer.
Thus, in the following, $m \geq 2$.
For any remaining equality of type~\eqref{eq:bf2} corresponding to some feature-value pair $f,v$, say that it \emph{ranges over} $t$ many active variables if there are $t$ many active variables $x_i$ such that $f(i)=v$.
Should any of the remaining constraints range over all $n'$ many active variables, then this constraint must be implied by constraint~\eqref{eq:bf1} and the values of the inactive variables.
We can thus drop the redundant constraint without consequences (progress), and repeat the iterative process.

If none of these steps apply, we show that some constraint of type~\eqref{eq:bf2} ranges over at most $|F|$ active variables:
Clearly, this is the case if $n' \leq |F|$, and furthermore if $n' = |F| + 1$ because we removed constraints of type~\eqref{eq:bf2} ranging over all active variables.
If $n' > |F| + 1$, note that every active agent appears in at most $|F|$ many equations of type~\eqref{eq:bf2}, at most one per feature.
It follows that the total number of active agents summed up over all remaining equalities of this type is at most $n' \, |F| < n' \, |F| - (|F| + 1) + n' = (n' - 1) \, (|F| + 1) \leq (m - 1) \, (|F| + 1)$, which implies that one of the $m-1$ equalities of type~\eqref{eq:bf2} ranges over less than $|F| + 1$ active variables.
Drop all such equalities (progress) and repeat.

Since $n' + m$ decreases in every iteration, this algorithm will produce a deterministic panel in polynomial time.
Since constraint~\eqref{eq:bf1} is never dropped, the panel size must be exactly $k$.
By how much might the equations of type~\eqref{eq:bf2} for a feature-value pair $f,v$ be violated in the result?
Clearly, they are maintained exactly up to the point where they are dropped.\footnote{We do not count if the equality was dropped because it was implied by constraint~\eqref{eq:bf1}, in which case it is preserved exactly throughout the rounding.}
From this point on, however, only $|F|$ many active variables could still change the value of $\sum_{i \in P: f(i)=v} x_i$.
Since each of these variables remains in its range $[0, 1]$ throughout the rounding process, the final $x_i$ must satisfy
\[\sum_{i \in P : f(i)=v} \pi_i - |F| \leq \sum_{i \in P : f(i)=v} x_i \leq \sum_{i \in P : f(i)=v} \pi_i + |F|.\qedhere \]
\end{proof}

\subsubsection{Randomizing the Beck-Fiala rounding}
\label{sec:beckfialarandom}
We give two methods of transforming the previous deterministic rounding algorithm into a randomized rounding algorithm.
To prove \cref{lem:beckfiala_random}, we can directly apply a result by Bansal~\cite{bansal2019} to our deterministic rounding procedure:
\begin{customlemma}{3}
    There is a polynomial-time sampling algorithm that, given a good pool $P$, produces a random panel $\Panel$ such that (1) $\mathbb{P}[i \in \Panel] = \pi_{i,P}$ for all $i \in P$, (2) $|\Panel| = k$, and (3) $\sum_{i : f(i)=v} \pi_{i,P} - |F| \leq |\{i \in \Panel \mid f(i)=v\}| \leq \sum_{i : f(i)=v} \pi_{i,P} + |F|$.
\end{customlemma}
\begin{proof}
We apply Theorem~1.2 by Bansal~\cite{bansal2019} to the deterministic rounding procedure of \cref{lem:beckfiala}.
To apply the theorem, we need to give a $\delta > 0$ such that, when there are $n'$ many active variables left, the number of remaining equalities in the next iteration is at most $(1 - \delta) \, n'$ constraints.
In \cref{lem:beckfiala}, we showed that $m$ is always set to a value of at most $n' - 1$.
Thus, for $\delta \coloneqq 1/n$, we get that $m \leq n' - 1 = (1 - 1/n') \, n' \leq (1 - 1/n) \, n'$ and can apply the theorem.
\end{proof}

While the previous algorithm runs in polynomial time, we found an alternative way of randomizing the rounding to be more efficient in practice.
This technique is based on na\"ive column generation, which is not guaranteed to run in polynomial time, but has the following advantages:
\begin{itemize}
    \item it uses linear programs rather than semi-definite programs,
    \item instead of a single random panel, the column generation (deterministically) generates a \emph{distribution} over panels, which allows us to analyze the distribution after a single run, and
    \item there is a continuous progress measure that allows us to stop the optimization process once we implement the $\pi_{i}$ with sufficient accuracy.
\end{itemize}
We describe this algorithm in the proof of the following version of \cref{lem:beckfiala_random}, which does not require polynomially-bounded runtime:
\begin{lemma}
    There is a sampling algorithm that, given a good pool $P$, produces a random panel $\Panel$ such that (1) $\mathbb{P}[i \in \Panel] = \pi_{i,P}$ for all $i \in P$, (2) $|\Panel| = k$, and (3) $\sum_{i : f(i)=v} \pi_{i,P} - |F| \leq |\{i \in \Panel \mid f(i)=v\}| \leq \sum_{i : f(i)=v} \pi_{i,P} + |F|$.
\end{lemma}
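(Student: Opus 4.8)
The plan is to implement the marginals $(\pi_{i,P})_{i\in P}$ as a mixture of $0/1$ panels via \emph{column generation}, using the deterministic Beck--Fiala rounding of \cref{lem:beckfiala} as the pricing/separation oracle. Concretely, let $\mathcal{B}$ be the (exponentially large) set of all panels $B\subseteq P$ with $|B|=k$ that additionally satisfy the ex-post representation bounds $\sum_{i:f(i)=v}\pi_{i,P}-|F|\le |\{i\in B: f(i)=v\}|\le \sum_{i:f(i)=v}\pi_{i,P}+|F|$ for every feature-value pair. By \cref{lem:beckfiala}, $\mathcal{B}$ is nonempty, and moreover any vector of marginals summing to $k$ can itself be rounded into $\mathcal{B}$. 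I would set up the feasibility LP asking for a distribution $(\lambda_B)_{B\in\mathcal{B}}$, $\lambda\ge 0$, $\sum_B\lambda_B=1$, with $\sum_{B\in\mathcal{B}}\lambda_B\,\mathds{1}\{i\in B\}=\pi_{i,P}$ for every $i\in P$; the algorithm outputs a panel by sampling $B$ with probability $\lambda_B$, which immediately gives properties (2) and (3) since every $B\in\mathcal{B}$ has them, and property (1) by the marginal constraints.

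The key steps, in order, are: (i) argue feasibility of this LP — the point is that $(\pi_{i,P})$ lies in the convex hull of $\mathcal{B}$, which follows because \cref{lem:beckfiala} can be applied not just to $(\pi_{i,P})$ but to \emph{any} fractional point in the relevant polytope, so a standard ``round-and-recurse'' / Carathéodory argument expresses $(\pi_{i,P})$ as a convex combination of at most $|P|+1$ members of $\mathcal{B}$; (ii) solve the LP by column generation: start with a small set of columns (panels), solve the restricted master LP for $\lambda$ together with dual prices $y_i$ on the marginal constraints (and $y_0$ on the normalization), and ask the pricing oracle whether there is a panel $B\in\mathcal{B}$ with reduced cost violating optimality, i.e.\ whether $\sum_{i\in B} y_i$ exceeds the current bound; (iii) implement the pricing oracle using the deterministic Beck--Fiala rounding — run \cref{lem:beckfiala} on a suitably shifted fractional solution to produce a panel in $\mathcal{B}$ that (approximately) maximizes $\sum_{i\in B}y_i$, or certify that no improving column exists; (iv) bound the violation of the marginal constraints by a continuous progress measure (e.g.\ $\ell_1$ or $\ell_\infty$ distance between the implemented marginals and $(\pi_{i,P})$), stopping once it is below any desired threshold; since the statement here does not demand polynomial runtime, it suffices to note the process terminates (or converges) and, at an exact optimum of the feasibility LP, properties (1)--(3) hold exactly.

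I expect the main obstacle to be step (iii): making the Beck--Fiala rounding serve as a genuine optimization oracle for the pricing step, rather than merely a feasibility oracle. The natural fix is the standard one — add the dual objective $\sum_i y_i x_i$ as an extra linear constraint or perturb the target marginals in the direction of $y$ before rounding, and use the fact that \cref{lem:beckfiala} preserves one additional linear equality exactly (up to when it is dropped) the same way it preserves the $f,v$ constraints, so the resulting $B$ has $\sum_{i\in B}y_i$ within an additive slack of the LP value; this slack feeds directly into the stopping criterion of step (iv). A secondary subtlety is ensuring the shifted fractional point passed to the oracle still has coordinates in $[0,1]$ and the same feature-value sums, so that \cref{lem:beckfiala} applies verbatim; this is handled by only ever rounding points that are convex combinations of the current columns, which inherit membership in the defining polytope. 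The remaining verifications — feasibility via Carathéodory, correctness of the sampled panel, and the three claimed properties — are then routine given \cref{lem:beckfiala}.
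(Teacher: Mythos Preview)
Your overall plan---column generation over the panels in $\mathcal{B}$ with a Beck--Fiala-based pricing oracle---is exactly the paper's approach, but two of your technical steps do not go through as stated.

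First, the ``round-and-recurse / Carath\'eodory'' feasibility argument in step~(i) breaks: after rounding $\pi$ to a panel $B_1\in\mathcal{B}$ and writing $\pi=\lambda\,\mathds{1}_{B_1}+(1-\lambda)\,\pi'$, the residual $\pi'$ generally has \emph{different} feature-value sums than $\pi$ (since $B_1$'s sums may already differ from $\pi$'s by up to $|F|$). Running \cref{lem:beckfiala} on $\pi'$ then yields a panel within $|F|$ of $\pi'$'s sums, hence up to $2|F|$ away from $\pi$'s, so not necessarily in $\mathcal{B}$. The recursion leaks out of $\mathcal{B}$, and you have not shown $\pi\in\mathrm{conv}(\mathcal{B})$. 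Your ``secondary subtlety'' fix---only rounding convex combinations of current columns---does not help for the same reason: such combinations already have sums ranging over $[\sum\pi-|F|,\sum\pi+|F|]$, and a further Beck--Fiala rounding adds another $\pm|F|$.

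Second, treating the dual direction $\sum_i y_i x_i$ as one more droppable equality gives only $\sum_{i\in B}y_i\ge\langle y,\pi\rangle-\text{slack}$, i.e.\ an approximate separation oracle; with additive slack the column generation can stall at a strictly positive objective, so you do not recover the exact marginals required by~(1). The paper's fix handles both issues at once: strengthen \cref{lem:beckfiala} so that, for any given $\vec{c}$, every update additionally maintains the \emph{half-space} $\langle\vec{c},\vec{x}\rangle\ge\langle\vec{c},\vec{\pi}\rangle$. This costs nothing---intersecting the affine subspace of active equalities with a half-space through the current point still leaves an unbounded set, so one can still hit the hypercube boundary---and yields an \emph{exact} oracle. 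Feasibility then follows directly by a separating-hyperplane argument (if $\pi\notin\mathrm{conv}(\mathcal{B})$, some $\vec{c}$ separates, but the modified rounding with that $\vec{c}$ returns a panel in $\mathcal{B}$ with $\langle\vec{c},B\rangle\ge\langle\vec{c},\pi\rangle$, contradiction), and the same observation shows that each column-generation step with the current dual $\vec{z}$ produces a genuinely new panel, so the loop terminates at objective~$0$.
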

\begin{proof}%
First, note that we can strengthen \cref{lem:beckfiala} slightly by giving it an arbitrary vector $\vec{c} \in \mathbb{R}^{|P|}$ as part of its input and additionally requiring that $\langle\vec{c}, \vec{x}\rangle \geq \langle \vec{c}, \vec{\pi}\rangle$, where $\vec{x}$ is the vector of $x_i$ and $\vec{\pi}$ the vector of $\pi_i$.
This stronger statement follows from the same proof if we require every update of the $x_i$ to additionally maintain that $\langle\vec{c}, \vec{x}\rangle \geq \langle \vec{c}, \vec{\pi}\rangle$.
Since this intersects the non-zero dimensional affine subspace formed by the constraints with a half space that contains at least the current point $\vec{x}$, the resulting intersection is still unbounded, which means that we can find an intersection with the boundary of the hypercube.
We refer to this procedure as the ``modified \cref{lem:beckfiala}.''

Now, let $\mathfrak{B} \neq \emptyset$ be any set of panels satisfying the constraints of the lemma, possibly exponentially many. Consider the following linear program and its (simplified) dual:
\begin{align*}
    \text{PRIMAL($\mathfrak{B}$):} &&\text{DUAL($\mathfrak{B}$):}& \\
    \textit{minimize}~&\delta & \textit{maximize}~&\left(\sum_{i \in P} \pi_{i} \, z_i\right) - \hat{z} \\
    \textit{s.t.}~& \left| \pi_{i} - \sum_{B \in \mathfrak{B} : i \in B} \lambda_{B} \right| \leq \delta \quad \forall{i \in P} & \textit{s.t.}~&\sum_{i \in B} z_i \leq \hat{z} \quad \forall B \in \mathfrak{B} \\
    & \sum_{B \in \mathfrak{B}} \lambda_B = 1 && |z_i| \leq 1 \quad \forall i \in P \\
    & \delta \geq 0, \lambda_B \geq 0 \quad \forall B \in \mathfrak{B} &&
\end{align*}
The primal LP searches for a distribution over the panels $\mathfrak{B}$ such that the largest absolute deviation between the marginal $\sum_{B \in \mathfrak{B} : i \in B} \lambda_{B}$ and the target value $\pi_i$ of any $i \in P$ is as small as possible.
Let $\overline{\mathfrak{B}}$ denote the set of panels that can be returned by the modified \cref{lem:beckfiala}, for any vector $\vec{c}$ in its input.

\textbf{Observation 1: For any $\mathfrak{B} \neq \emptyset$, the LP has an objective value $\mathit{obj}(\mathfrak{B}) \geq 0$.} Indeed, in the primal, the objective value is clearly bounded below by $0$, and the LP is feasible for any distribution over $\mathfrak{B}$ and large enough $\delta$.
By strong duality, the dual LP must have the same objective value.

\textbf{Observation 2: $\mathit{obj}(\overline{\mathfrak{B}}) = 0$.} For the sake of contradiction, suppose that the objective value was strictly positive, i.e., that $\vec{\pi}$ does not lie in the convex hull of $\overline{\mathfrak{B}}$.
Then, there must be a plane separating $\vec{\pi}$ from this convex hull, and an orthogonal vector $\vec{c}$ such that $\langle \vec{c}, \vec{\pi}\rangle > \langle \vec{c}, \vec{x} \rangle$ for any $\vec{x}$ corresponding to a panel in $\overline{\mathfrak{B}}$. 
Applying the modified \cref{lem:beckfiala} with this vector $\vec{c}$ would lead to a contradiction.

Consider \cref{alg:col}, which iteratively generates a subset $\mathfrak{B} \subseteq \overline{B}$ by column generation.
\begin{algorithm}[h]
\DontPrintSemicolon

$\mathfrak{B} \leftarrow \{\text{result of running modified \cref{lem:beckfiala} with arbitrary $\vec{c}$}\}$\;
\While{$\mathit{obj}(\mathfrak{B}) > 0$}{\label{lin:while}
fix optimal values $z_i, \hat{z}$ for DUAL($\mathfrak{B}$)\;
$B \leftarrow \text{result of running modified \cref{lem:beckfiala} with $\vec{c}$ as the vector of $z_i$}$\;\label{lin:sep}
$\mathfrak{B} \leftarrow \mathfrak{B} \cup \{B\}$
}
\Return $\mathfrak{B}$
\caption{Column generation}\label{alg:col}
\end{algorithm}

\textbf{Observation 3: \cref{alg:col} terminates.}
It suffices to show that, in \cref{lin:sep}, the generated panel $B$ is not yet contained in $\mathfrak{B}$ since, then, the size of $\mathfrak{B}$ grows in every iteration and is always upper-bounded by the finite cardinality of $\overline{\mathfrak{B}}$.
By the definition of the modified \cref{lem:beckfiala}, $B$ always satisfies $\sum_{i \in B} z_i \geq \sum_{i \in P} \pi_{i} \, z_i$.
However, since the objective value is positive, any $B' \in \mathfrak{B}$ satisfies $\sum_{i \in P} \pi_i \, z_i > \hat{z} \geq \sum_{i \in B'}z_i$, which shows that $B \notin \mathfrak{B}$.

Once \cref{alg:col} terminates with a set $\mathfrak{B}$, we know that $\mathit{obj}(\mathfrak{B}) = 0$, which means that, by solving PRIMAL($\mathfrak{B}$), we obtain a distribution over valid panels that implements the marginals $\pi_i$, which concludes the proof.
\end{proof}

In practice, it makes sense to exit the while loop in \cref{lin:while} already when $\mathit{obj}(\mathfrak{B})$ is smaller than some small positive constant, which guarantees a close approximation to the marginal probabilities while reducing running time and preventing issues due to rounding errors.

\subsection{Proof of \cref{thm:main}} \label{sec:app_thmmain}
\begin{customthm}{1}
Suppose that $\alpha \to \infty$ and $n_{f,v} \geq n/k$ for all feature-value pairs $f,v$.
Consider a sampling algorithm that, on a good pool, selects a random panel, $\Panel$, via the randomized version of \cref{lem:beckfiala_random}, and else does not return a panel. 
This process satisfies, for all $i$ in the population, that
    \[ \mathbb{P}[i \in \Panel] \geq (1 - o(1)) \, k/n. \]
All panels produced by this process  satisfy the quotas $\ell_{f,v} \coloneqq (1 - \alpha^{-.49}) \, k \, n_{f,v} / n - |F|$ and $u_{f,v} \coloneqq (1 + \alpha^{-.49}) \, k \, n_{f,v} / n + |F|$ for all feature-value pairs $f,v$.
\end{customthm}
\begin{proof}
    The claim about the quotas immediately follows from \cref{lem:beckfiala_random} and the definition of a good pool. Concerning the selection probabilities,
    \begin{align*}
        &\mathbb{P}[i \in \Panel] = \sum_{\mathclap{\substack{\text{good pools $P$}\\i \in P}}} \mathbb{P}[i \in \Panel \mid \Pool = P] \, \mathbb{P}[\Pool = P] 
        = \!\!\!\!\!\!\!\!\sum_{\substack{\text{good pools $P$}\\i \in P}} \frac{k \, a_i}{\sum_{j \in P} a_j} \, \mathbb{P}[\Pool = P]. \\
        \intertext{Since $\sum_{j \in P} a_j \leq r / (1 - \alpha^{-.49})$ for good pools, we continue}
        \geq{} &(1 - \alpha^{-.49}) \, k / (r \, q_i) \sum_{\mathclap{\substack{\text{good pools $P$}\\i \in P}}} \mathbb{P}[\Pool = P] = (1 - \alpha^{-.49}) \, \frac{k}{r \, q_i} \, \mathbb{P}[i \in \Pool \land \text{$\Pool$ is good}] \\
        ={} &(1 - \alpha^{-.49}) \, \frac{k}{r \, q_i} \, \underbrace{\mathbb{P}[\text{$\Pool$ is good} \mid i \in \Pool]}_{\text{$\in 1 - o(1)$ by \cref{lem:goodpools}} } \, \underbrace{\mathbb{P}[i \in \Pool]}_{= q_i \, r/n}
        \in (1 - o(1)) \, \frac{k}{n}. \qedhere
    \end{align*}
\end{proof}

\section{Supplementary Material for \cref{sec:learning}} \label{sec:app_learning}
\paragraph{Participation Model} Let $y_i = 1$ for agents who would join the pool if invited, and $y_i = 0$ for agents who would not. We want to predict $q_i = \mathbb{P}[y_i = 1]$ for all agents in the pool. To do so, we learn the following parametric model, which describes the relationship between an agent's feature vector $F(i)$ and value of $q_i$.
\[q_i = \beta_0 \prod_{f \in F} \beta_{f,f(i)}
\]

This type of generative model describes a decision process known as \emph{simple independent action}~\cite[as cited in \cite{Weinberg86}]{Finney71}.
To express this model in a more standard form, let $x_i$ be a vector describing agent $i$'s values for all features in $F$, where each index $j$ of $x_i$ corresponds to a feature-value $f,v$ and contains a binary indicator of whether agent $i$ has value $v$ for feature $f$. Let $M$ be the length of $x_i$, where $M =  1+ \# feature$-$values$. We then reshape parameters $\beta_0$, $\beta_{f,v}$ for all $f,v$ into a parameter vector $\boldsymbol{\beta}$ of length $M$, and correspondingly, $x_i$ must have value 1 at its first index for all agents $i$, corresponding to the parameter $\beta_0$. We can then write an equivalent version of our model in more standard form. Note that $q_i$ is technically a function of $x_i,\boldsymbol{\beta}$, but we omit this notation for simplicity.
\[q_i = \prod_{j \in [M]} \boldsymbol{\beta}_j^{x_{i,j}}\]

\paragraph{Maximum Likelihood Estimation with Contaminated Controls}
To estimate the parameters $\boldsymbol{\beta}$ of this model on fixed pool $P$ and fixed background sample $B$, we apply the estimation methods in Section 3 of Lancaster and Imbens \cite{lancaster1996case}. We use the objective function in Equation 3.3, which is designed to perform maximum-likelihood estimation (MLE) in the setting of contaminated controls. Let $z_i$ be an indicator such that $z_i = 1$ for $i \in P$ and $z_i = 0$ for $i \in B$. Let $w_i$ be the weight of agent $i \in B$ (for details on these weights, see \Cref{app:empirical}). Recall that $\overline{q}$ is the average participation probability in the underlying population. Then, the likelihood function $L(\boldsymbol{\beta})$ that we would maximize to directly learn our model is
\[L(\boldsymbol{\beta}) = \sum_{i \in B \cup P} \left(z_i \sum_{j \in [M]}\left(x_{i,j} \log \boldsymbol{\beta}_j\right) - w_i \log\left(\overline{q} \, |B|/|P| + \prod_{j \in [M]} \boldsymbol{\beta}_j^{x_{i,j}}\right)\right)\]
Unfortunately, $L(\boldsymbol{\beta})$ is not obviously concave in $\boldsymbol{\beta}$. To get around this, we re-parameterize our model such that we can instead learn the \textit{logarithms} of our parameters. Defining a new parameter vector $\theta$ such that $\theta_j = \log(\boldsymbol{\beta}_j)$ for all $j \in [M]$, we can rewrite our model equivalently as the exponential model.
\[q_i = \prod_{j \in [M]} \boldsymbol{\beta}_j^{x_{i,j}} = \exp{\left(\log \left(\prod_{j \in [M]} \boldsymbol{\beta}_j^{x_{i,j}}\right)\right)}  = \exp{\left(\sum_{j \in [M]} x_{i,j} \log(\boldsymbol{\beta_j}) \right)} = e^{\theta x_i}\]
By Equation 3.3 in Lancaster and Imbens \cite{lancaster1996case}, the likelihood function $L'(\theta)$ we maximize is now the following. By \Cref{thm:app_concave}, this objective function is concave, so it can therefore be maximized efficiently (under the constraint that $\theta \leq 0$).
\begin{equation}L'(\theta) = \sum_{i} \left(z_i \theta x_i - w_i \log\left(\overline{q} \, |B|/|P| + e^{\theta x_i}\right)\right)\label{eq:mle}\end{equation}
\begin{theorem}\label{thm:app_concave} The log-likelihood function for the simple independent action model under contaminated controls is concave in the model parameters.
\end{theorem}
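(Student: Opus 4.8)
The plan is to show that $L'(\theta)$ as defined in \eqref{eq:mle} is a sum of concave functions of $\theta$, hence concave. Writing $L'(\theta) = \sum_i g_i(\theta)$ with $g_i(\theta) = z_i\,\theta x_i - w_i \log\big(\overline{q}\,|B|/|P| + e^{\theta x_i}\big)$, I would handle each summand in turn. The term $z_i\,\theta x_i$ is linear in $\theta$, hence concave (and $z_i \geq 0$, $w_i \geq 0$, so signs are consistent). The nontrivial part is the term $-w_i \log\big(c + e^{\theta x_i}\big)$ where $c \coloneqq \overline{q}\,|B|/|P| \geq 0$ is a constant; since $w_i \geq 0$, it suffices to show that $\theta \mapsto \log\big(c + e^{\theta x_i}\big)$ is \emph{convex} in $\theta$.

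For that step, I would first note that $t \mapsto \theta x_i$ is an affine (indeed linear) map of $\theta$, and convexity is preserved under composition with affine maps, so it reduces to showing the single-variable function $h(s) \coloneqq \log(c + e^{s})$ is convex in $s \in \mathbb{R}$. This is a routine computation: $h'(s) = e^{s}/(c + e^{s})$ and $h''(s) = c\,e^{s}/(c+e^{s})^2 \geq 0$ since $c \geq 0$. (When $c = 0$ this degenerates to the linear function $h(s) = s$, still convex; when $c > 0$ it is strictly convex.) Alternatively, one can observe $\log(c + e^s)$ is a log-sum-exp of the two affine functions $\log c$ and $s$, which is a standard convex function — but the direct second-derivative argument is cleanest here. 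Composing back, $\theta \mapsto \log(c + e^{\theta x_i})$ is convex, so $-w_i$ times it is concave, and adding the linear term keeps $g_i$ concave.

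Finally I would assemble the pieces: a finite sum of concave functions is concave, so $L'(\theta) = \sum_i g_i(\theta)$ is concave on $\mathbb{R}^M$ (in particular on the feasible region $\{\theta \leq 0\}$, which is convex as an intersection of half-spaces), which is exactly the claim. I do not anticipate a genuine obstacle here; the only thing to be careful about is that the reparametrization $\theta_j = \log \boldsymbol{\beta}_j$ is what makes $\theta x_i$ appear \emph{linearly} inside the exponential — in the original $\boldsymbol{\beta}$-coordinates the product $\prod_j \boldsymbol{\beta}_j^{x_{i,j}}$ is multilinear but not affine, so $L(\boldsymbol{\beta})$ need not be concave, and the whole point of the statement is that passing to log-parameters fixes this. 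I would make sure to state explicitly that the constant $c = \overline{q}\,|B|/|P|$ is treated as fixed (the theorem is about concavity "in the model parameters" with $\overline{q}$ held fixed), since the argument relies on $c \geq 0$ being a constant rather than an optimization variable.
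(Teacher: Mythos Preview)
Your proposal is correct and follows essentially the same approach as the paper: split each summand into the linear term $z_i\,\theta x_i$ and the term $-w_i\log(c+e^{\theta x_i})$, and show the latter is concave via a second-derivative computation yielding $c\,e^{\theta x_i}/(c+e^{\theta x_i})^2 \geq 0$. The only cosmetic difference is that you reduce to the one-variable function $h(s)=\log(c+e^s)$ via precomposition with the affine map $\theta\mapsto\theta x_i$, whereas the paper computes the full Hessian in $\theta$ directly and observes it equals $-(\psi X)(\psi X)^T$ for a suitable scalar $\psi$; both arrive at the same expression.
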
 
\begin{proof}
The first term of the sum is linear, so both concave and convex. The second term is concave by \Cref{lemma:concave_mle}, 
\end{proof}

\begin{lemma} \label{lemma:concave_mle}
Let function $f(\theta) = -\log(c + e^{\theta X})$, where $c > 0$ is a constant. $f$ is concave. 
\end{lemma}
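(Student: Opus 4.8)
The plan is to reduce the claim to the one-dimensional fact that $t \mapsto -\log(c + e^t)$ is concave on $\mathbb{R}$, and then invoke closure of concavity under precomposition with affine maps. First I would note that $\theta X$ denotes the linear map $\theta \mapsto \langle \theta, X \rangle$, so $f = g \circ L$ where $L(\theta) = \langle \theta, X\rangle$ is affine and $g(t) := -\log(c + e^t)$. Since the composition of a concave function with an affine function is concave, it suffices to show that the scalar function $g$ is concave.

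To establish concavity of $g$, I would take the direct route of differentiating twice: $g'(t) = -e^t/(c+e^t)$ and hence $g''(t) = -\,c\,e^t/(c+e^t)^2$, which is $\le 0$ for every $t\in\mathbb{R}$ because $c > 0$ (indeed $g'' < 0$, so $g$ is strictly concave). Alternatively, and perhaps more transparently, I would observe that $\log(c + e^t) = \log\!\left(e^{\log c} + e^t\right)$ is the log-sum-exp of the two affine functions $t \mapsto \log c$ and $t \mapsto t$, hence convex, so its negation $g$ is concave. Either argument completes the proof, and combined with \Cref{thm:app_concave} it finishes the concavity of the full log-likelihood, since the first term $z_i\,\theta x_i$ is linear.

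The only point requiring a moment of care is the positivity of $c$: it is used both to make $\log c$ well-defined in the log-sum-exp view and to sign $g''$ in the derivative computation. In our application $c = \overline{q}\,|B|/|P| > 0$, so the hypothesis is met. Beyond checking this, the statement is a textbook convexity fact, so I do not anticipate any genuine obstacle; the slight subtlety is merely bookkeeping the affine-composition step and the interpretation of $\theta X$ as an inner product rather than a scalar product.
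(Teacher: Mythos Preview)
Your proof is correct. The paper takes a slightly different but ultimately equivalent path: it computes the full multivariate Hessian $H_{i,j} = -X_iX_j\, c\, e^{\theta X}/(c+e^{\theta X})^2$ directly and then observes that $H = -(\psi X)(\psi X)^T$ with $\psi = \sqrt{c\,e^{\theta X}}/(c+e^{\theta X})$, a negative rank-one outer product and hence negative semidefinite. You instead factor $f = g\circ L$ through the affine map $L(\theta)=\langle\theta,X\rangle$ and reduce to the scalar second derivative $g''(t) = -c\,e^t/(c+e^t)^2 \le 0$; note that the paper's Hessian is exactly $g''(\langle\theta,X\rangle)\,XX^T$, so the two computations agree in content. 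What your route buys is economy---no matrix bookkeeping, and the log-sum-exp remark gives a computation-free alternative---while the paper's route makes the rank-one structure of the Hessian explicit. Both are fine here. One cosmetic remark: your sentence ``combined with \Cref{thm:app_concave} it finishes the concavity of the full log-likelihood'' reads as if the lemma depends on the theorem rather than feeding into its proof; you may want to rephrase.
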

\begin{proof}
The $i,j$th term of the Hessian matrix  $H$ of $f$ can be written as
    \[H_{i,j} = -X_iX_j  \frac{c e^{\theta X}}{\left(c + e^{\theta X}\right)^2}\]
Now, let $\psi = \frac{\sqrt{c e^{\theta X}}}{c + e^{\theta X}}$.
Noting that $X$ is considered a column vector, we can then rewrite the Hessian in terms of $\psi$ as $H = -(\psi X)(\psi X)^T$. In words, the negative Hessian can be written as the outer product of the vector $\psi X$ with itself. Therefore, the negative Hessian is positive semi-definite, and the Hessian is negative semi-definite, implying that $f$ is concave.
\end{proof}

\paragraph{Discussion of Methods}
The reader may note that we treat $\overline{q}$ as a known constant in our estimation, but the objective function we use from Lancaster and Imbens is designed for the setting in which $\overline{q}$ is a variable. There is precedent in the literature for doing so \cite{ward2009presence}. As Lancaster and Imbens discuss, using $\overline{q}$ as a constant rather than a variable when maximizing Equation 3.3 introduces issues of over-parameterization, because it is not enforced that the average $q_i$ over the population be $\overline{q}$.
While we cannot estimate $q_i$ values for the entire population for lack of data, it would be a worrying sign if the average $q_i$ over the \emph{background sample}, a uniform sample from the population, was far from our assumed $\overline{q}$.
However, we find that the average of our estimated $q_i$ values over the background sample is 2.9\%, which matches $\overline{q} = 2.9\%$.

\section{Supplementary Material for \cref{sec:empirical}}
\label{app:empirical}

For estimation, we use two datasets. For our positively-labeled data, we use the set of pool members from the UK Climate Assembly (for details, see \Cref{sec:app_ukassembly}). For our background sample, we use the European Social Survey (ESS), which serves as an unlabeled uniform sample of the population. 
\subsection{Climate Assembly UK Details \& Pool Dataset} \label{sec:app_ukassembly} 
Our pool dataset contains the agents from the pool of the \textit{Climate Assembly UK}, a national-level sortition panel on climate change held in the UK in 2020. We use ``panel'' to refer to the group of people who deliberate, and ``assembly'' to refer to the actual deliberation step. The panel for this assembly as selected by the \textit{Sortition Foundation}, a UK-based nonprofit that selects sortition panels. A document by the Sortition Foundation gives the following description of this assembly:\footnote{\url{https://docs.google.com/spreadsheets/d/1kwgOpxMX4pwR3Myu4pXku4gjcnOS53bPOKwOGjZNxyI/edit\#gid=0}}
\begin{quote}
    \textit{This Citizens' Assembly will meet across four weekends in early 2020 to consider how the UK can meet the Government’s legally binding target to reduce greenhouse gas emissions to net zero by 2050. The outcomes will be presented to six select committees of the UK parliament, who will form detailed plans on how to implement the assembly's recommendations. These plans will be debated in the House of Commons.}
\end{quote}
In the formation of the panel for this assembly, 30\,000 letters were sent out inviting people to participate. Of these letter recipients, 1\,727 people entered the pool, and 110 people were selected for the panel. The features and corresponding sets of values used for this panel are described in \Cref{tab:uk_minipublic_fv}.

\begin{table}[h!]
        \centering
        \begin{tabular}{lp{9.5cm}}
            \toprule
            Feature ($f \in F$) & Values ($V_f$) \\
            \midrule
             Gender & Male, Female, \sout{Other} \\
             Age & 16-29, 30-44, 45-59, 60+\\
             Region & North East, North West, Yorkshire and the Humber, East Midlands, West Midlands, East of England, London, South East, South West, Wales, Scotland, Northern Ireland\\
             Education Level & No Qualifications/Level 1, Level 2/Level 3/Apprenticeship/Other, Level 4 and above\\
             \sout{Climate Concern Level} & \sout{Very concerned, Fairly concerned, Not very concerned, Not at all concerned, Other} \\
             Ethnicity & White, Black or ethnic minority (BAME)\\
             Urban / Rural & Urban, Rural \\
             \bottomrule\\
        \end{tabular}
        
        \caption{Climate Assembly UK features and values.}
        \label{tab:uk_minipublic_fv}
    \end{table}
Those with value \textit{Other} for gender were dropped from the pool data because an equivalent value could not be constructed in the ESS data. This resulted in us dropping 12 people out of the original 1727, for a pool dataset of final size 1715. Note that dropping these people did not affect our estimate of $\overline{q}$ --- before and after dropping these agents, it was 2.9\%. The Climate Concern Level feature was dropped altogether from the set of features used for analysis because there were too few people in the pool with value \textit{Not at all concerned} to give these agents proportional representation on the panel. 

Due to privacy agreements between the Sortition Foundation and the pool members, we are unable to share this dataset.

\subsection{Background Data} \label{sec:app_data}
We define the size of the ESS dataset to be the sum of the weights of the agents within it.\footnote{This sum should ideally be equal to the number of people in the ESS data, but because we drop a few people, the sum of weights no longer exactly equals the number of people.} For details on weights, see the \textit{Re-weighting} paragraph of this section. In order to use this data as our background sample, we construct feature vectors for each person in the ESS data that correspond to those used in Climate Assembly UK, as defined in \Cref{tab:uk_minipublic_fv}. 

In this section, we describe how we constructed the variables corresponding to the features and their values as specified by the Sortition Foundation. We dropped 44 people out of the original 1959 people in the ESS dataset, and we briefly discuss this decision and its implications. Finally, we describe how we re-weighted the ESS data to correct for sampling and non-response bias to approximate the scenario in which the surveyed individuals were uniformly sampled from the population. This step is important because, in our $q_i$ estimation procedure, we assume that our background sample is uniformly sampled.

\paragraph{Variable construction} Fortunately, the ESS data contained variables and categories that either exactly or very closely corresponded to the features and values specified by the Sortition Foundation. Essentially the only modification to the ESS data we made to construct valid feature vectors was the aggregation over categories in the \textit{Education Level} and \textit{Urban/Rural} ESS variables, which were broken down into more fine-grained categories than those specified in \Cref{tab:uk_minipublic_fv}. In general, for features with values containing the value ``other'', missing data was assigned the value ``other''. Below is a table showing which variables and values from the ESS data were used to construct each feature from the Climate Assembly UK. Exact details on how these variables were used is documented in the code (see \Cref{sec:implementationdetails} for reference to readme).
\begin{center}
\begin{tabular}{l|l}
    Feature (Climate Assembly UK) & Variable (ESS raw data)  \\
    \hline
    Gender & gndr \\
    Age & agea\\
    Region & region \\
    Education Level & edulvlb\\
    Climate Concern Level & wrclmch\\
    Ethnicity & blgetmg\\
    Urban/Rural & domicil\\
\end{tabular}
\end{center}

\paragraph{Dropping people} As described in \Cref{tab:uk_minipublic_fv}, the Climate Assembly's youngest valid age category was 16-29. We therefore dropped all four people in the ESS data who were under 16 years old. Dropping people who fall outside our demographic ranges of interest is not a problem for weights, because the weights of all people of interest (who we want to be fair to) will remain the same relative to each other, and we care only about the composition of this relevant population.\\
There were an additional 40 people who may have been within our demographic range of interest, but who were missing age, race, or urban/rural data. Among these 40 people, 33, 6, and 4 people did not have data for variables corresponding to the features \textit{age}, \textit{ethnicity}, and \textit{urban / rural}, respectively. While dropping these people could affect the weighting scheme, the distribution of weights of those dropped is strongly right-skewed, meaning that those who we dropped belong to groups that tended to be oversampled in the ESS data. These people are therefore likely more numerous in the ESS data overall, and dropping some of them will have a smaller proportional effect.\\
Finally, the ESS did not permit people to answer ``other'' for gender, a category permitted on the Sortition Panel. Without any way to construct the \textit{gender = other} feature-value in the ESS data, we dropped the members of the Climate Assembly pool with this feature-value.

\paragraph{Re-weighting} The ESS recommends re-weighting their data to correct for bias, and they provide multiple sets of possible weighting schemes for doing so\footnote{\url{https://www.europeansocialsurvey.org/docs/methodology/ESS_weighting_data_1.pdf}}. Of the provided options, we elected to apply the Post-Stratification Weights, because these weights account for not only sampling bias, but also non-response bias, by incorporating auxiliary information from other demographic surveys. By this weighting scheme, each person in the ESS data is given a weight $w_i$, representing how much that person should count in the analysis of the ESS data, where the weights are normalized to 1. This weight is encoded in the ESS data as `pspwght'.

\paragraph{Estimation of $\boldsymbol{\overline{q}}$} We bolster the identification of our model with an estimate of $\overline{q}$, the rate of true positives in the population. In our setting, this is the number of people who would ultimately enter the pool if invited. We estimate $\overline{q}$ in Climate Assembly UK data roughly as the fraction of people who joined the pool (1\,715) out of those who were invited (30\,000). These numbers seem to imply that the $\overline{q} \approx 1\,715/30\,000 = 5.7\%$. However, there is a complication: each letter is sent to a \textit{household}, rather than an individual, and any eligible member of an invited household may join the pool. Using the ESS data, we compute (see below) the average number of eligible panel participants per household to be 2.00, implying that in reality, 60\,000 eligible people were invited to participate in the pool. As a result, we estimate $\overline{q}$ to be $\overline{q} = 1\,715/60\,000 \approx 2.9\%$.

Let $ESS$ be the set of agents in the cleaned ESS data. Computing the average number of eligible panel participants per household from the ESS data is not entirely trivial, because sampling \textit{people} uniformly (or in the case of the ESS, approximating uniform sampling by re-weighting) is biased toward larger households. To account for this, for each person $i \in ESS$, we scale their weight $w_i$ by the inverse of the number of eligible people in their household, $\mathit{householdsize}_i$. Then,
\begin{align*}
    \text{average number of eligible people per household} &= \frac{\sum_{i \in ESS} \left(\frac{w_i}{\mathit{householdsize}_i}\right) \cdot \mathit{householdsize}_i}{\sum_{i \in ESS} \left(\frac{w_i}{\mathit{householdsize}_i}\right)}
\end{align*}
We compute $householdsize_i$ for each person $i \in ESS$ using the weighted ESS data. Age is the only feature from the UK Climate Assembly for which the ESS data may contain values rendering a person ineligible (specifically, the ESS data surveys people down to age 15, while the climate assembly accepted only those over 16). To count the number of people in each household who are eligible, we use variables `agea', `pspwght', and `yrbrn2-12', which describe the ages of person $i$'s household members (up to 12 household members).

\subsection{Implementation Details}
\label{sec:implementationdetails}
Our experiments were implemented in Python, using PyTorch for the MLE estimation and Gurobi for solving the linear programs in the column generation.
Our code is contained in the supplementary material and will be made available as open source when published.
The file ``README.md'' in the code gives detailed instructions for reproducibility.

We found the log-likelihood presented in \cref{eq:mle} to be easy to maximize.
For accuracy, we chose a small step size of $10^{-5}$ and a large number $10^5$ of optimization steps.
The final objective was $4157.32345$, and objective changes between iterations $20\,000$ and $100\,000$ were less than $3 \times 10^{-6}$.

Our experiments were run on a 13-inch MacBook Pro (2017) with a 3.1 GHz Dual-Core i5 processor.
Optimizing the log-likelihood took 46 seconds.
Running the column generation took 38 minutes to reach the desired accuracy of $10^{-6}$, which is much smaller than the smallest $\pi_{i,P}$ at around $2\%$.
For the version including climate concern, MLE estimation took 37 seconds reaching a log-likelihood of $4601.01427$, and column generation took 26 minutes.

Sampling 100\,000 pools each and simulating our algorithm for the end-to-end experiments took 30 minutes for $r=10\,000$, 55 minutes for $r=11\,000$, 61 minutes for $r = 12\,000$, 76 minutes for $r=15\,000$, and 95 minutes for $r=60\,000$.
All running times should be seen as upper bounds since other processes were running simultaneously.
Sampling the same number of pools for the case including the climate concern feature took around 410 minutes for $r=600\,000$.
The equivalent experiments with the greedy algorithm took around 19 hours (floor and ceiling quotas) and around 12 hours (no quotas).

\subsection{Results and Validation of $\boldsymbol{\beta},\boldsymbol{q_i}$ Estimation}
\label{sec:app_betavalidation}
\paragraph{Pool and Background Data Composition} First, we examine the frequency at which each feature-value occurs in the pool and the background data. As shown in the figure below, those with the most education are highly over-represented in the Climate Assembly UK pool compared to the background sample, and people with low education are under-represented. Similarly, we see men are slightly over-represented in the pool, and increasing age also seems to increase likelihood of entering the pool.
\begin{center}
    \includegraphics[width=\textwidth]{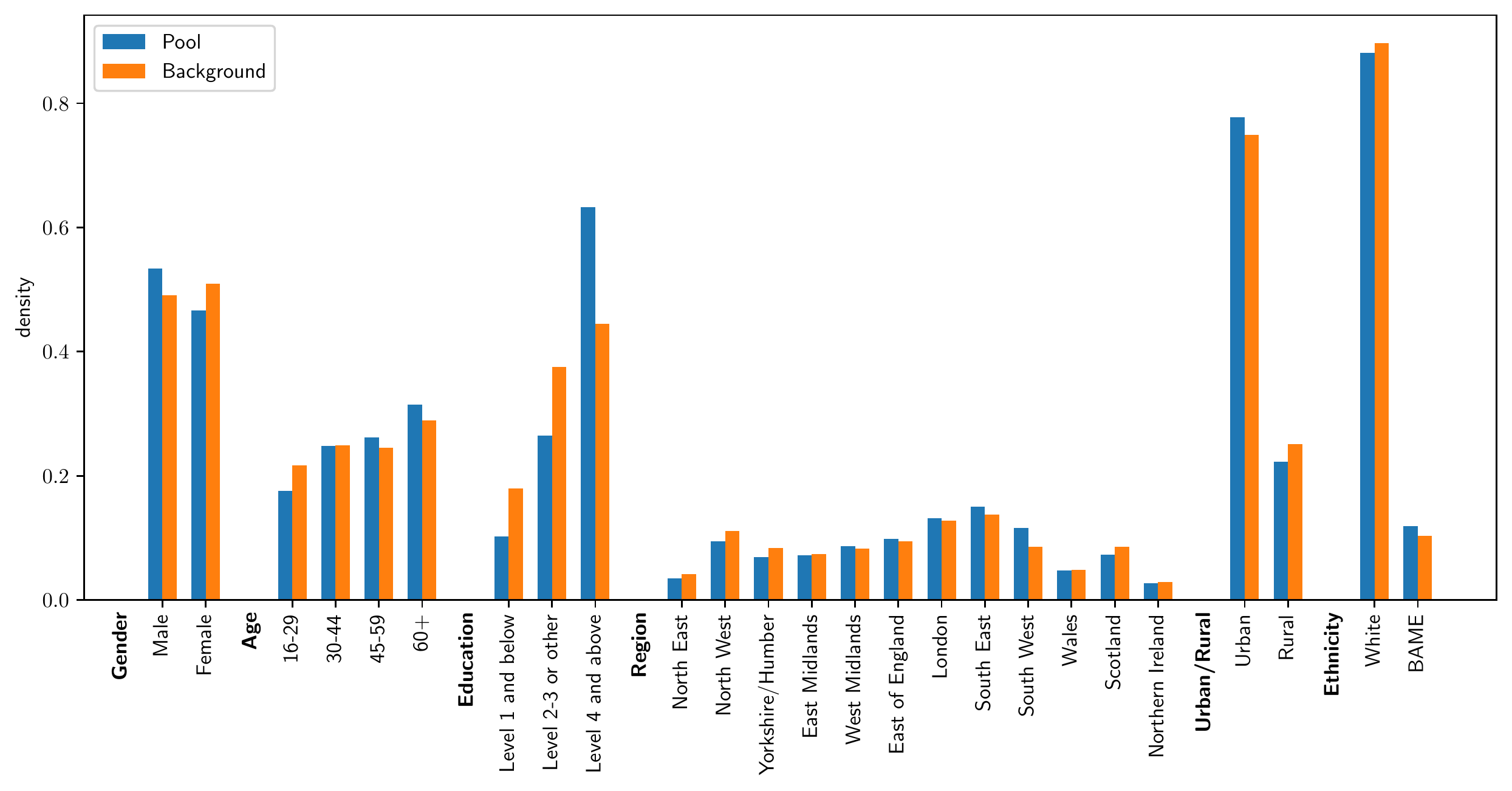}
\end{center}

\paragraph{Estimates of $\boldsymbol{\beta}$}
We find that $\beta_0 = 8.8\%$, meaning that all agents participate with a baseline probability of 8.8\%. In the figure below are estimates of $\beta_{f,v}$ for all feature-values $f,v$. Recall that $1-\beta_{f,v}$ can be interpreted as the probability of not participating due to having value $v$ for feature $f$; in other words if $\beta_{f,v}$ is 1, then feature-value $f,v$ has no adverse effect on whether a person participates.

Notably, these $\beta$ estimates are consistent with the composition of the pool compared to the background data. For example, people of increasing age were increasingly over-represented in the pool compared to the background data, and we see here that $\beta$ associated with age increase with increasing age. Similarly, we see that having low education greatly diminishes a person's likelihood of participation, corresponding to the observation that the pool contained a disproportionately low number of people with the two lower levels of education. In fact, one can confirm that across all feature-values, $\beta$ values correspond with the composition of the pool data compared to the background data, indicating that the $\beta$ values learned with our model are a good fit to the data used to learn them.
\begin{center}
    \includegraphics[width=\textwidth]{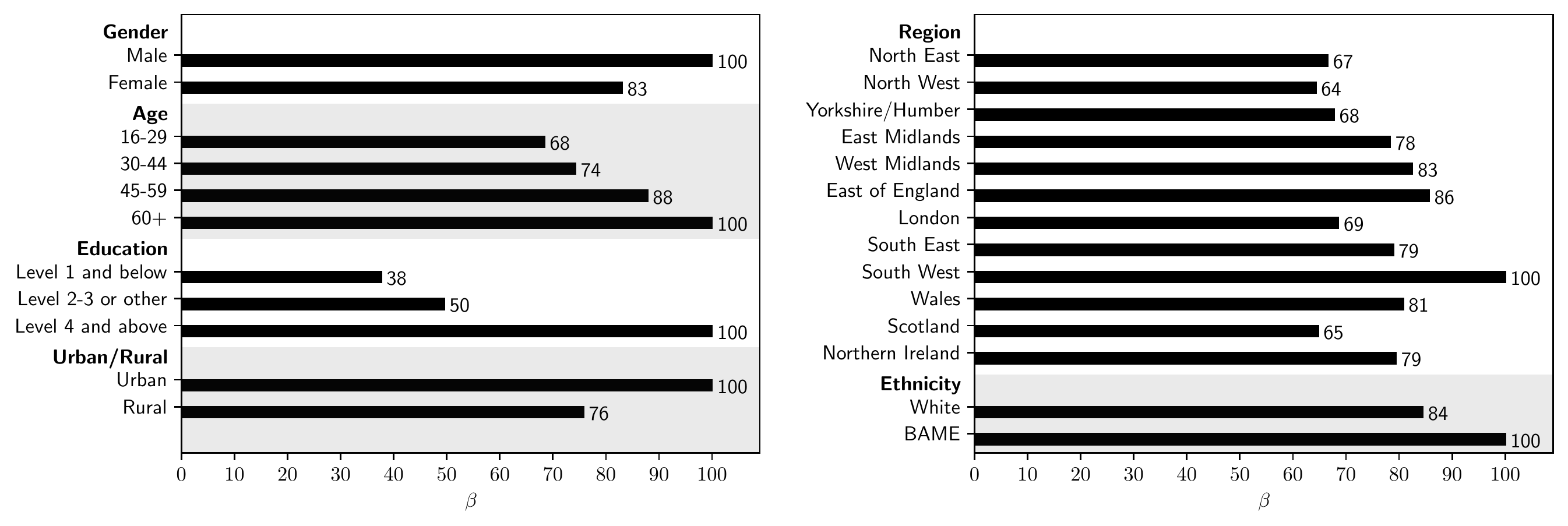}
\end{center}

\paragraph{Estimates of $\boldsymbol{q_i}$}
We compute our $q_i$ estimates based on $\beta$ estimates according to the model in \Cref{sec:app_learning}. We get the following distributions of $q_i$ values in the pool and background datasets.

The data shown in this plot is limited to density of $q_i$ values between 1\% and 8\%, because bins outside this range contain fewer than 7 people, and are withheld to avoid potential privacy issues. Less than $0.3\%$ of agents in either dataset are excluded for this reason.

\begin{center}
    \includegraphics[width=0.6\textwidth]{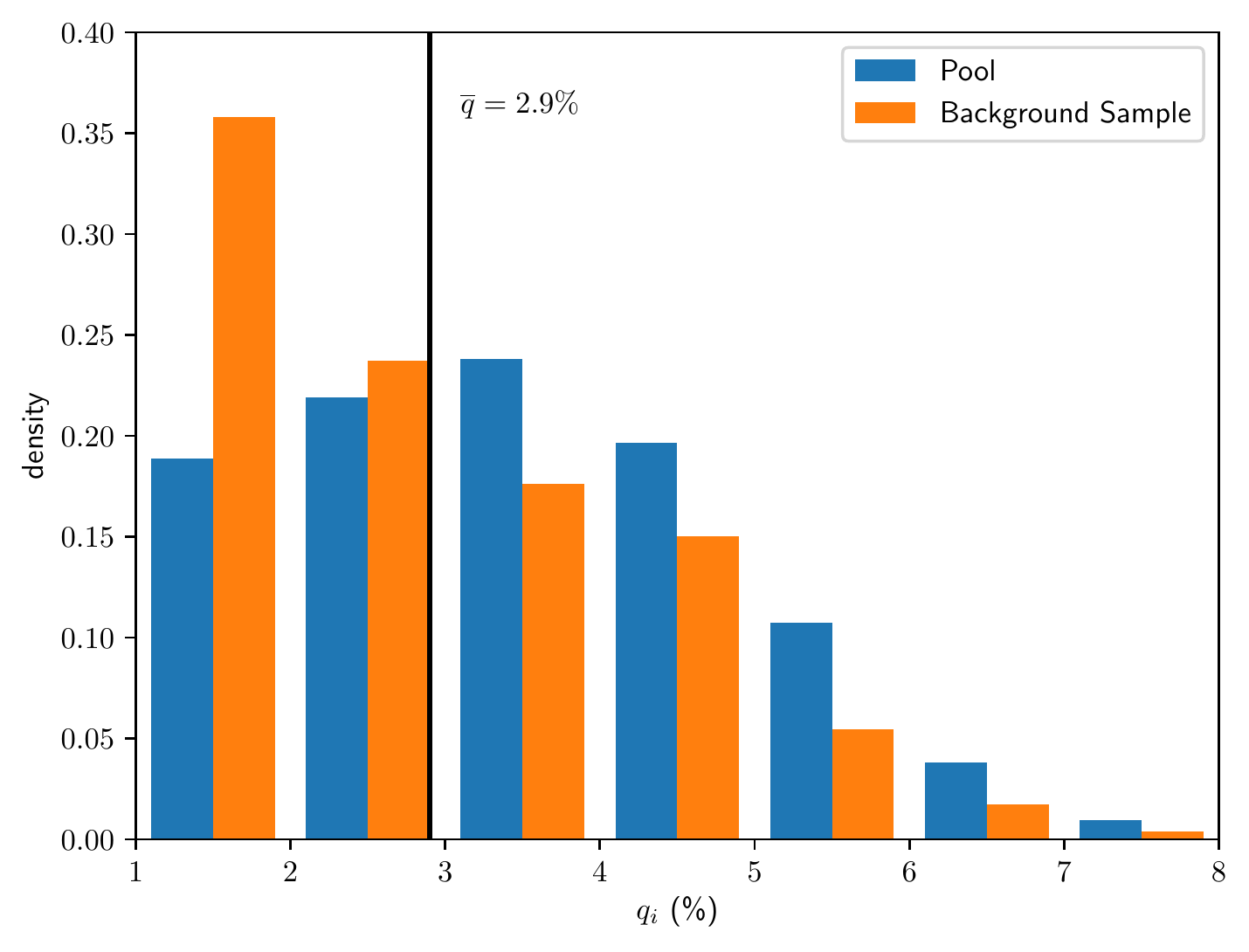}
\end{center}

Not very surprisingly, we find that the pool overrepresents agents with higher participation probability with respect to their share in the background sample.

\paragraph{Test for Calibration of $\boldsymbol{q_i}$ Estimates}
To validate whether our model fits the data well, we form a \emph{hypothetical pool} by imagining that the weighted background sample was selected as the set of recipients and that the members of this set participate with our estimated probability $q_i$.
For some attributes that agents might have or not have, the expected number of agents in the hypothetical pool with this attribute is
\[ \sum_{i \in B : \text{$i$ has attribute}} q_i.\footnote{Of course, all operations on the background sample respect the weights, which we ignore here for the sake of clarity.}\]
Since the set of invitation recipients to the Climate Assembly and the background sample are both assumed to be representative samples of the population, we would expect the above sum to be (close to) proportional to the fraction of pool members with this attribute\,---\,at least if the model fits the data well.

For instance, this idea allows us to re-examine the previous plot of $q_i$ values by letting the orange bars not denote the (scaled) \emph{number} of members in the background sample with $q_i$ in the right range, but instead the (scaled) \emph{sum of $q_i$ values} of members in the background sample with $q_i$ in this range.

The fact that these distributions align fairly well can be seen as our $q_i$ passing a sort of calibration test\,---\,of those agents with a certain $q_i$ value, roughly a $q_i$ proportion would participate when invited.
Relative to our background sample, the Climate Assembly pool does not seem to untypically skew towards agents with low or high values of $q_i$.

Once again, for privacy reasons we display frequencies of $q_i$ values only between 1\% and 8\%. Once again, less than $0.3\%$ of agents in either dataset are excluded for this reason.
\begin{center}
    \includegraphics[width=0.6\textwidth]{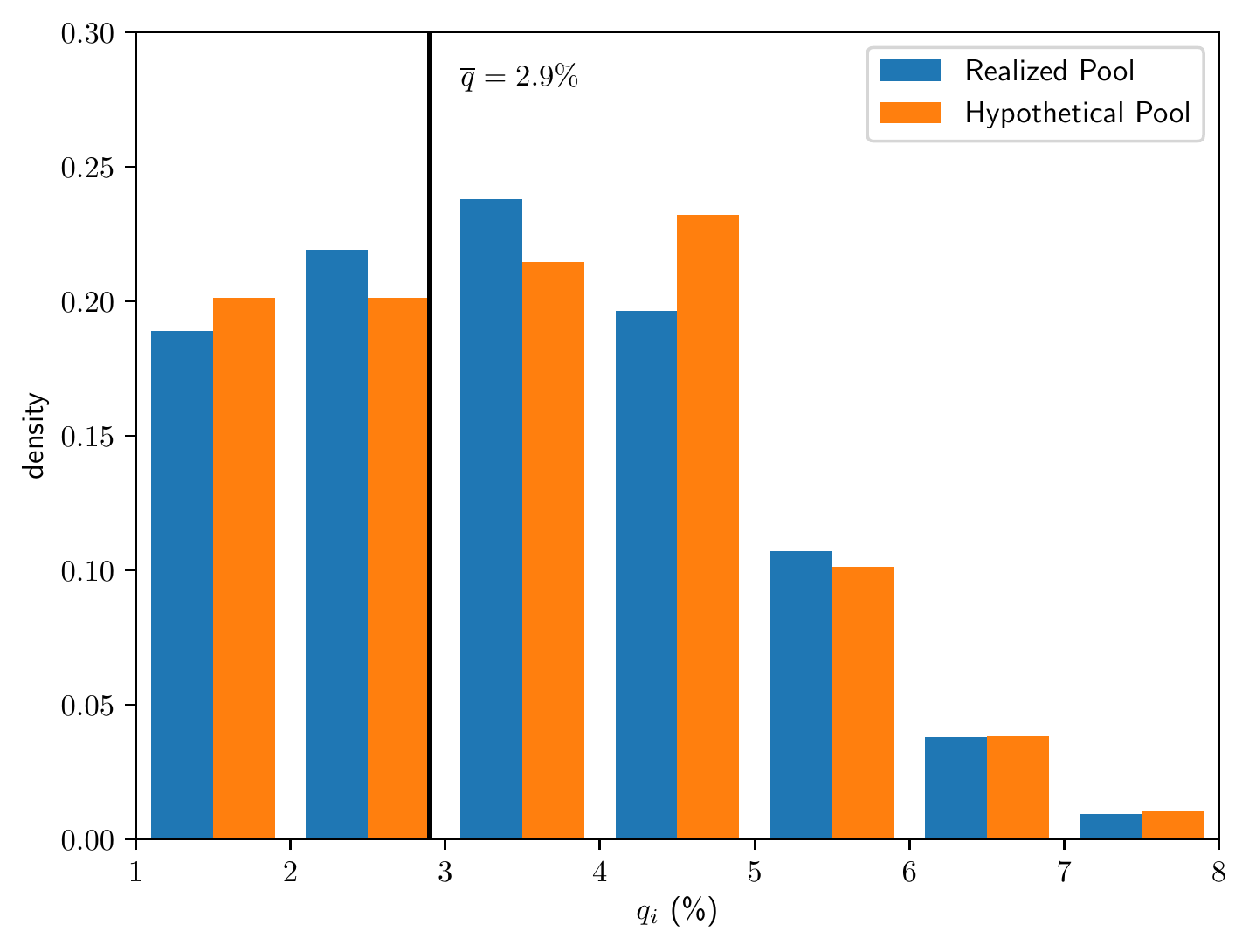}
\end{center}

\paragraph{Comparison of Realized Pool Composition and Hypothetical Pool Composition}
We now plot the same comparison between the Climate Assembly pool and the hypothetical pool but for the prevalence of each feature-value pair.

The figure below shows that if our $\beta$ estimates and the $q_i$ estimates they yield are true for members of the population, then if we sampled the underlying population as was done to form the Climate Assembly UK pool, we would get in expectation a pool that looks almost identical to realized pool. This illustrates in another way that our $\beta$ estimates are a good fit to the data we provided.
\begin{center}
    \includegraphics[width=\textwidth]{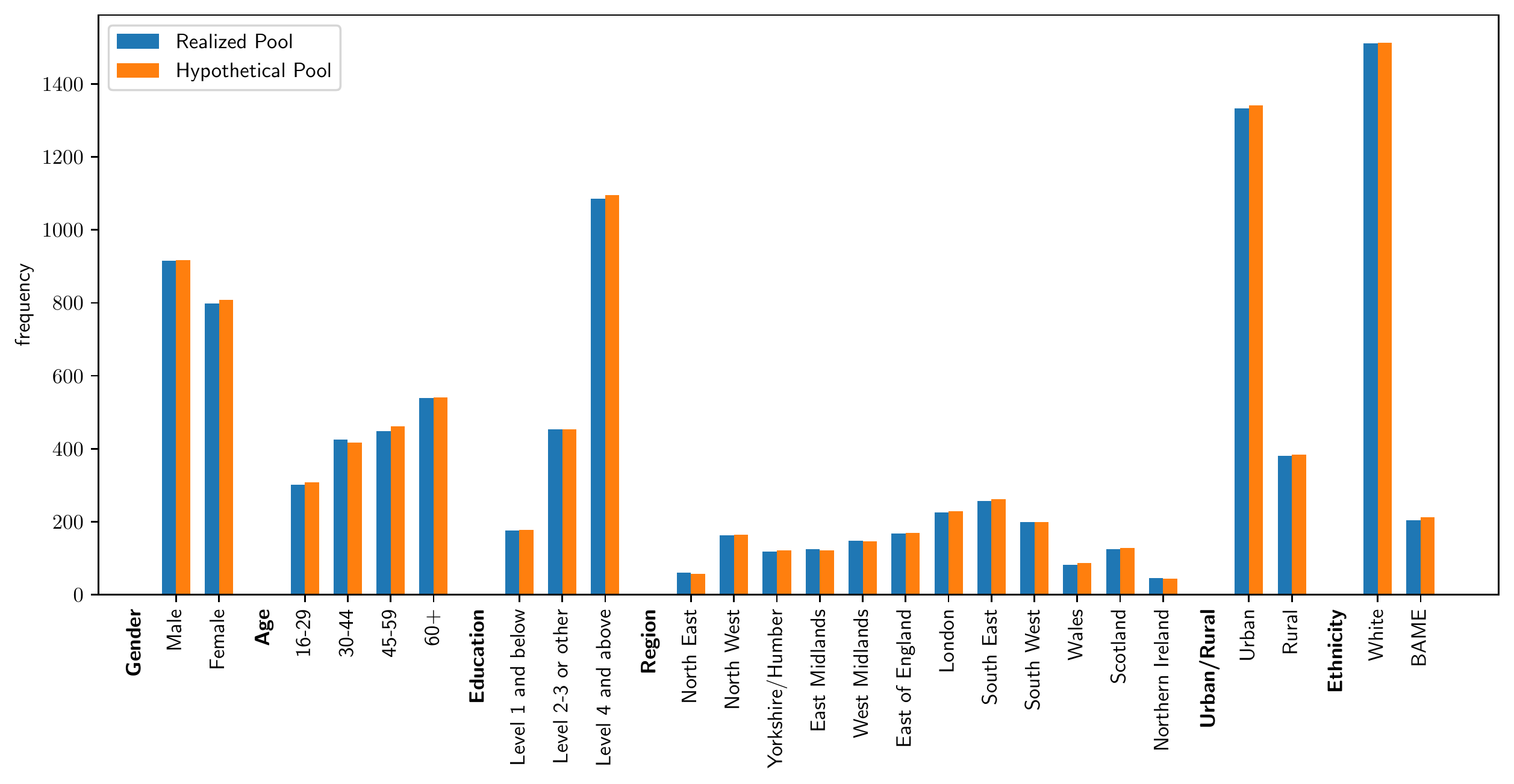}
\end{center}

\paragraph{Testing Model Capture of 2-Correlations}
Our model assumes that each feature-value affects people's probability of participating independently of all other feature-values. This analysis tests whether this causes our model to severely misjudge the participation probability for some group defined by the intersection of \emph{two} feature-value pairs, again comparing the prevalence of these groups in the Climate Assembly UK pool vs. the hypothetical pool that would be drawn from a population with the same composition as the background sample. On the plot below, each point represents an intersection of two feature-values. Each point's $x$ and $y$ coordinates are the fraction of people with that intersection in the Climate Assembly UK pool and the fraction of the hypothetical pool, respectively. We would hope for this relationship to be exactly linear, illustrating that each pair of feature-values occurs at the same rate in the real vs. hypothetical pool.
\begin{center}
    \includegraphics[width=.75\textwidth]{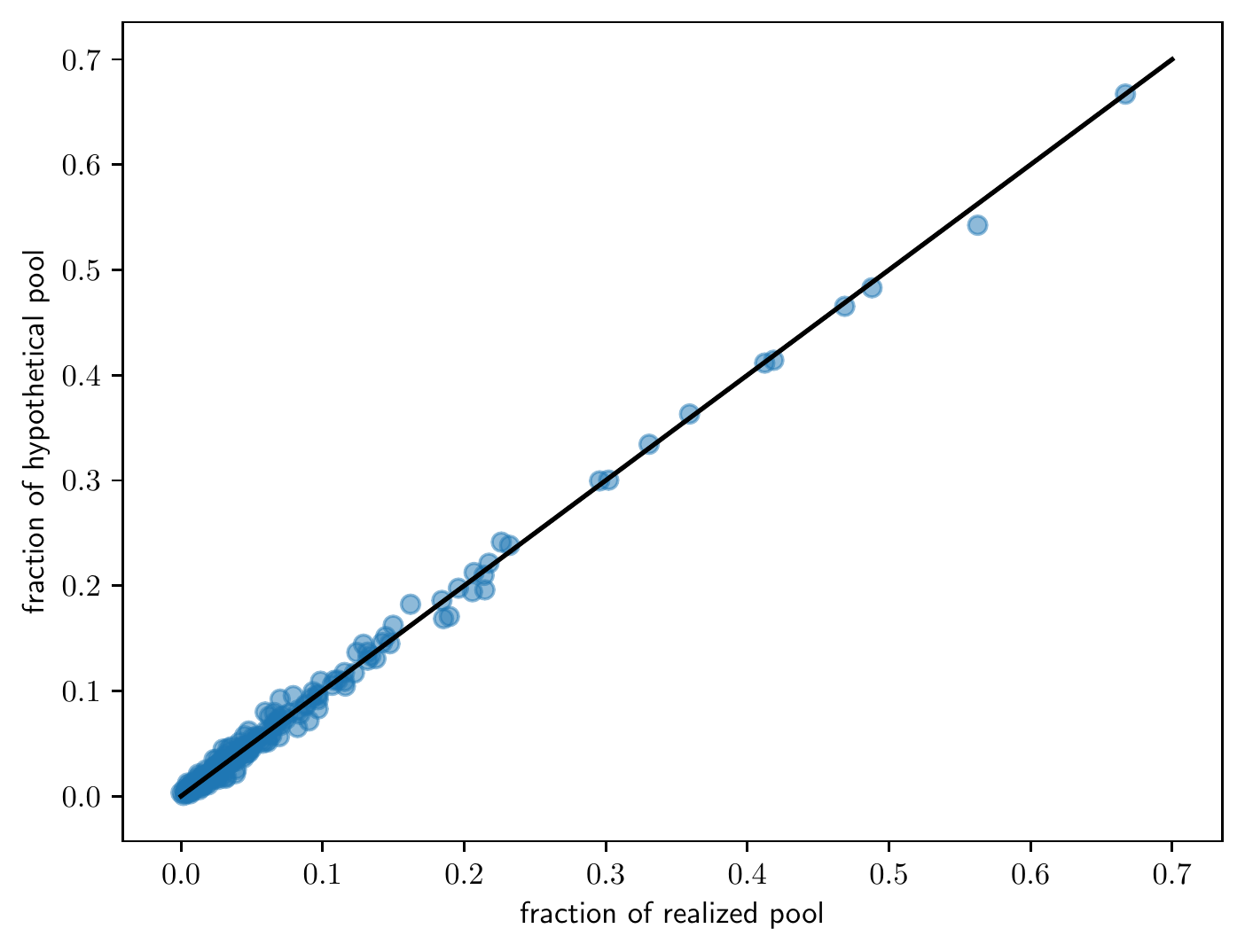}
\end{center}

\subsection{Details on End-To-End Experiment}
\label{app:endtoenddetails}
As described in the body of the paper, we generate a synthetic population by scaling up the ESS participants to a population of 60 million individuals.
The number of copies of a participant is proportional to their weight in the ESS, and is rounded to an integer using the Hamilton apportionment method.
100\,000 times per experiment, we select a set of letter recipients of size $r$ uniformly from the population, and flip a biased coin with probability $q_i$ for each letter recipient to determine whether she joins the pool.
For each pool, we then obtain the selection probabilities of the pool members conditioned on this being the pool (or an unbiased estimate of these probabilities):
\begin{itemize}
    \item For our algorithm, we check whether the pool $P$ is good. If the pool is not good, we (conservatively) assume that no panel is returned, and that pool members have zero probability of being selected. Else, we return the selection probabilities $\pi_{i,P}$.
    \item We use the implementation of the greedy algorithm developed by the Sortition Foundation and available at \url{https://github.com/sortitionfoundation/stratification-app/tree/4a957359b708a327aad0103ab2a59d061aeaeeb4}. Since we do not have a closed form for individual selection probabilities, we run the greedy algorithm 10 times and report the average time that each pool member was selected. While these estimates of selection probabilities are noisy, they are unbiased estimates of the end-to-end probability and independent between pools. Thus, the noise largely averages out over the 100\,000 random pools. In no case did the greedy algorithm fail to satisfy the quotas.
\end{itemize}
Each point in the diagrams corresponds to one agent in the ESS sample and indicates this agents' $q_i$ as well as the average selection probability of its copies, averaged over the different pools and the different copies.
Since both our algorithm and the greedy algorithm treat agents with equal feature vector symmetrically, averaging over the copies of an ESS participant is a valid way to estimate the end-to-end probability of any single copy, which greatly reduces sample variance.

In the body of the paper, we mention the behavior of the greedy algorithm without any quotas.
In this case, the panel members seem to be sampled with near-equal probability from the pool, which leads to end-to-end probabilities that are roughly proportional to $q_i$:

\includegraphics[width=\textwidth]{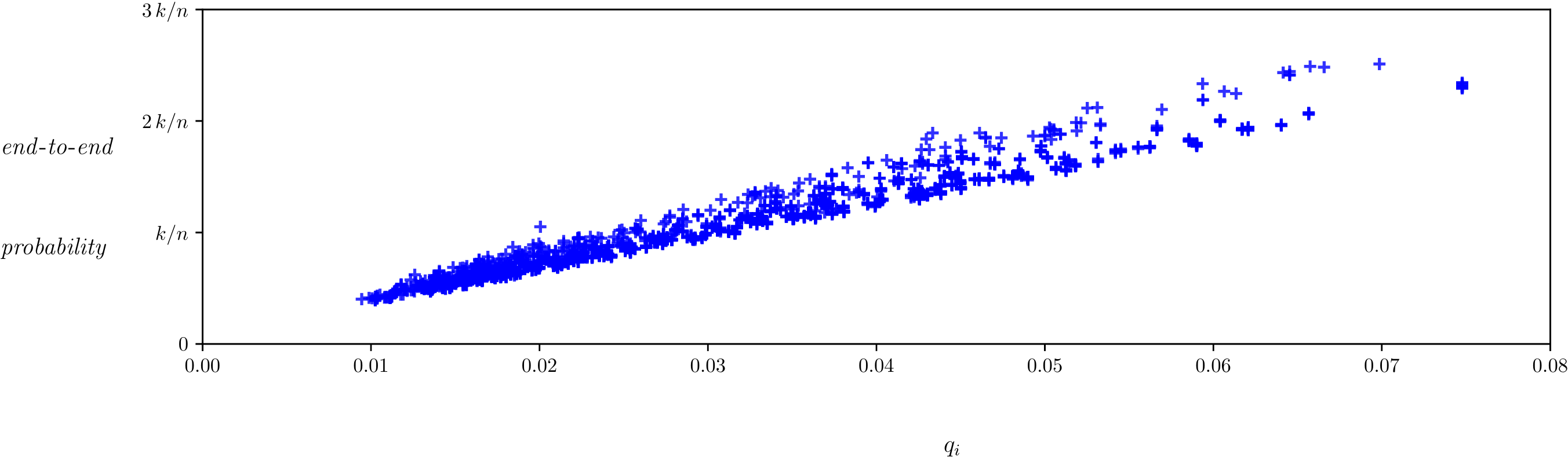}

\subsection{Additional Results for \Cref{sec:empirical}} \label{sec:app_experiments}
\paragraph{End-to-End Fairness Results for Varied $\boldsymbol{r}$ Values}
This plot shows the end-to-end probabilities for all agents in the synthetically-generated population over varied values of $r$.
To recall, we copied the agents in the background sample (in proportion to their weight) to obtain a synthetic population of size 60 million (the order of magnitude of eligible participants for the Climate Assembly).

We display these end-to-end probabilities for $r$ values 11\,000, 12\,000, 13\,000, and 60\,000, where 60\,000 is the $r$ value used to form the real-life Climate Assembly UK pool.
Every point in the scatter plot corresponds to an original member of the background sample, and the point's y-value is the mean selection probabilities averaged over 100\,000 sampled pools and over all copies of this background agent.\footnote{Averaging over the copies of an agent makes use of the fact that the selection process treats copies of the same agent symmetrically, which makes the empirical means converge faster.}

An important question is what we do when a bad pool occurs.
In the corresponding figure in the body of the text (examining only $r = 60\,000$), we did not credit any selection probability to any agent when bad pools occurred.
When we take this approach for multiple $r$ values, the result shows a sharp discontinuity between $r=11\,000$ (when everyone's end-to-end probability is essentially zero) and $r = 12\,000$ (when it is around $95\%$).
As it turns out, the property that makes nearly all pools bad when $r=11\,000$ is \cref{eq:good3}.
Note that this property is the least consequential of the three defining properties of a good pool: if we proceed with Part~II of the algorithm on a pool that satisfies only \cref{eq:good1,eq:good2}, we still satisfy the quotas but just can't bound the end-to-end probabilities.
Since the end-to-end probabilities are what we are measuring here anyway, we will in the following graph count bad pools as good pools if they only violate \cref{eq:good3}.

As shown in the figure below, we see a smooth transition towards the end-to-end guarantee, where higher values of $r$ give better guarantees.
The agents with the lowest selection probabilities are suffering most from low values of $r$, with their end-to-end probability trailing that of the majority of other agents.
From $r = 15\,000$ upwards, however, all agents in the population receive an end-to-end probability that is very close to $k/n$.
This threshold roughly coincides with the point at which $\alpha$ becomes larger than one.
\begin{center}
    \includegraphics[width=\textwidth]{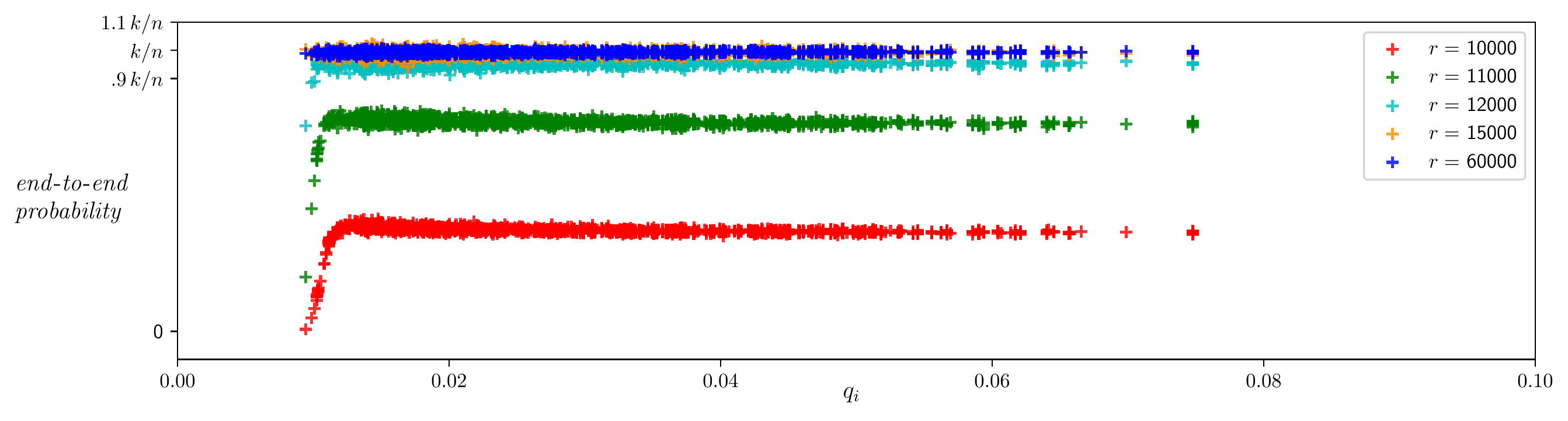}
\end{center}

\subsection{Validation and Results including Climate Concern Feature}
\label{sec:app_climate}
This section includes all the analysis in this paper and appendices, re-done with the climate concern level feature included. Figures in this section are provided in the same order as they were presented in the body of the paper,  \Cref{sec:app_betavalidation}, and \Cref{sec:app_experiments}.

\begin{center}
        \textit{\textbf{(Figures from Paper Body)}}
\end{center}

We omit the figure showing end-to-end probabilities at $r=60,000$, because when the \textit{Climate Concern Level} feature is included, good pools are so rare at this value of $r$ that all end-to-end probabilities are 0.
Similarly, for the greedy algorithm, the floor and ceiling quotas are often not satisfiable. In 754 out of 1\,000 random pools, this is because fewer pool members are ``not at all concerned'' about climate change than the lower quota for this feature, which is 6. In 86 out of the remaining pools, the greedy algorithm fails to identify a valid panel within the first 100 restarts. Only in the remaining 160 pools did the greedy algorithm find a valid panel in fewer than 100 iterations.

\begin{center}
    \hspace*{0.45cm}\includegraphics[width=0.8\textwidth]{realized_representation_legend.pdf}\\[-.5em]
    \includegraphics[width=\textwidth]{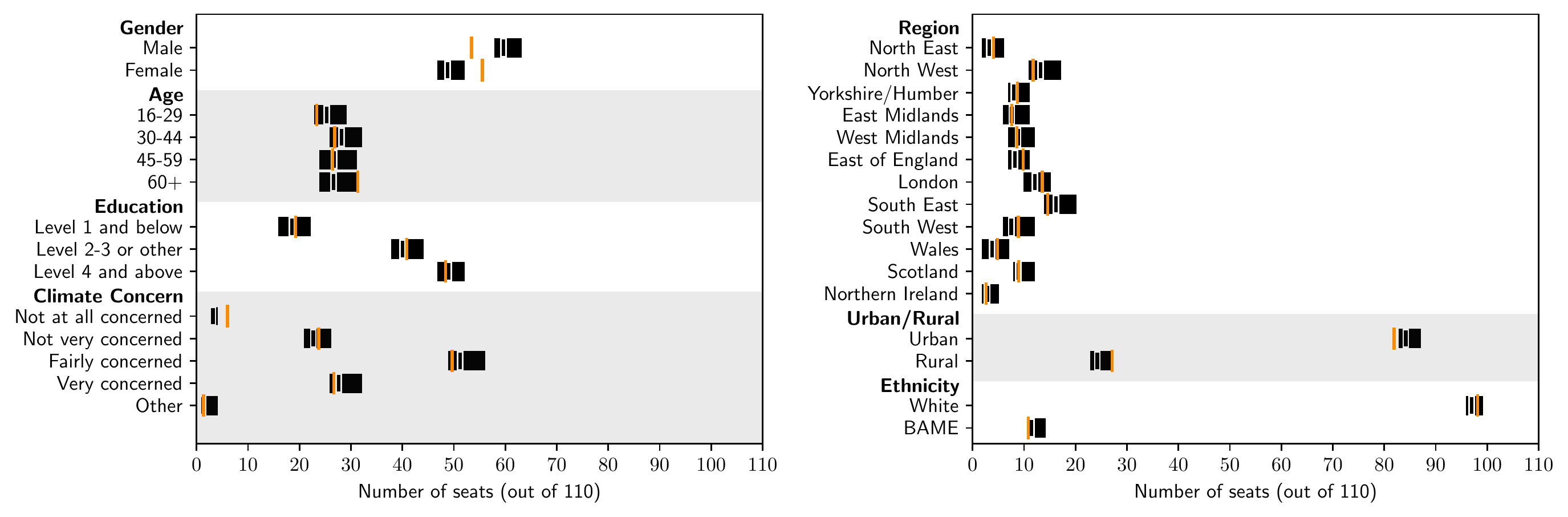}
\end{center}

\pagebreak
\begin{center}
    \textit{\textbf{(Figures from \Cref{sec:app_betavalidation})}}
\end{center}

\paragraph{Pool and Background Data Composition}
\begin{center}
    \includegraphics[width=\textwidth]{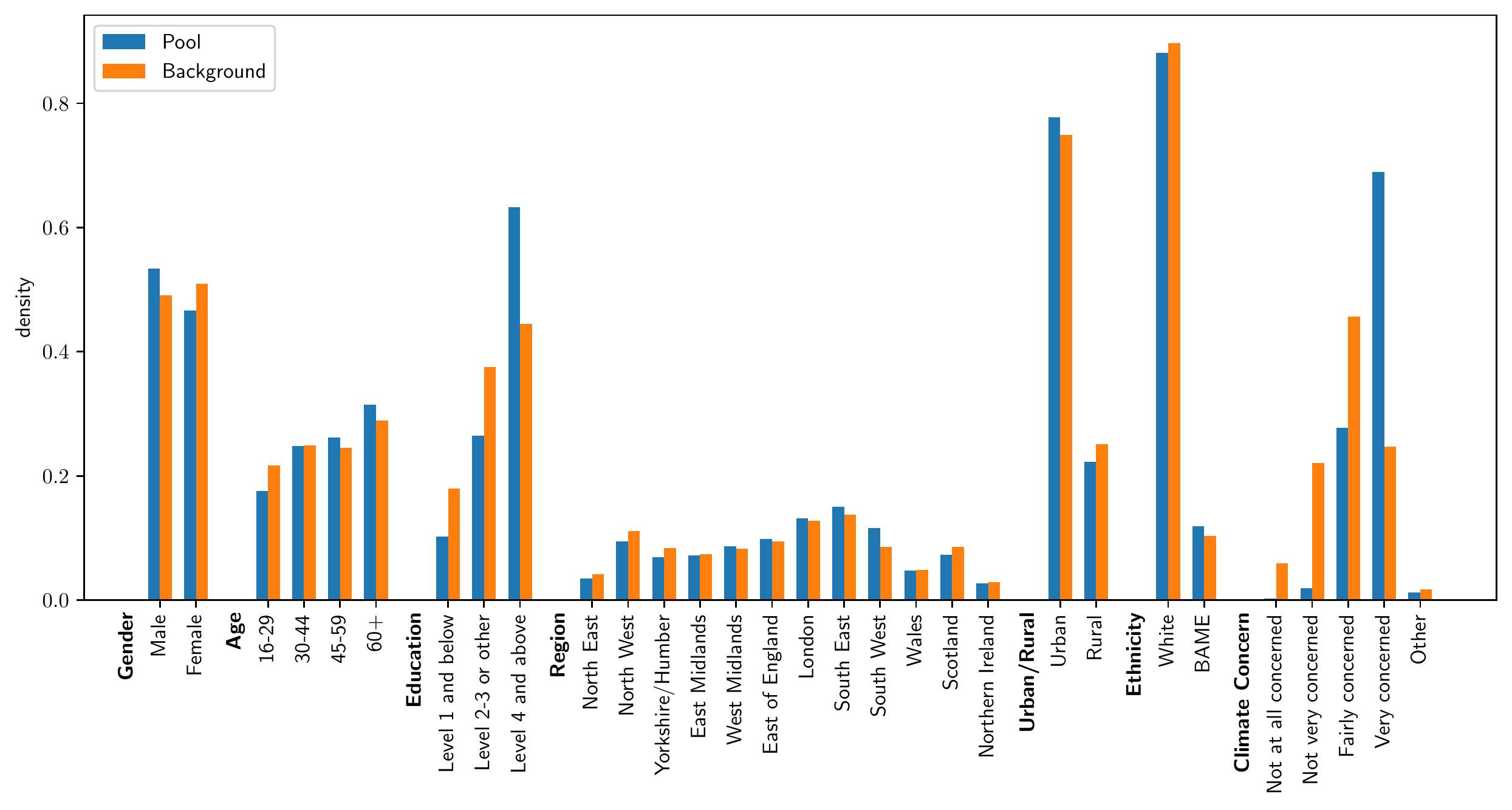}
\end{center}

\paragraph{Estimates of $\boldsymbol{\beta}$}
\begin{center}
    \includegraphics[width=\textwidth]{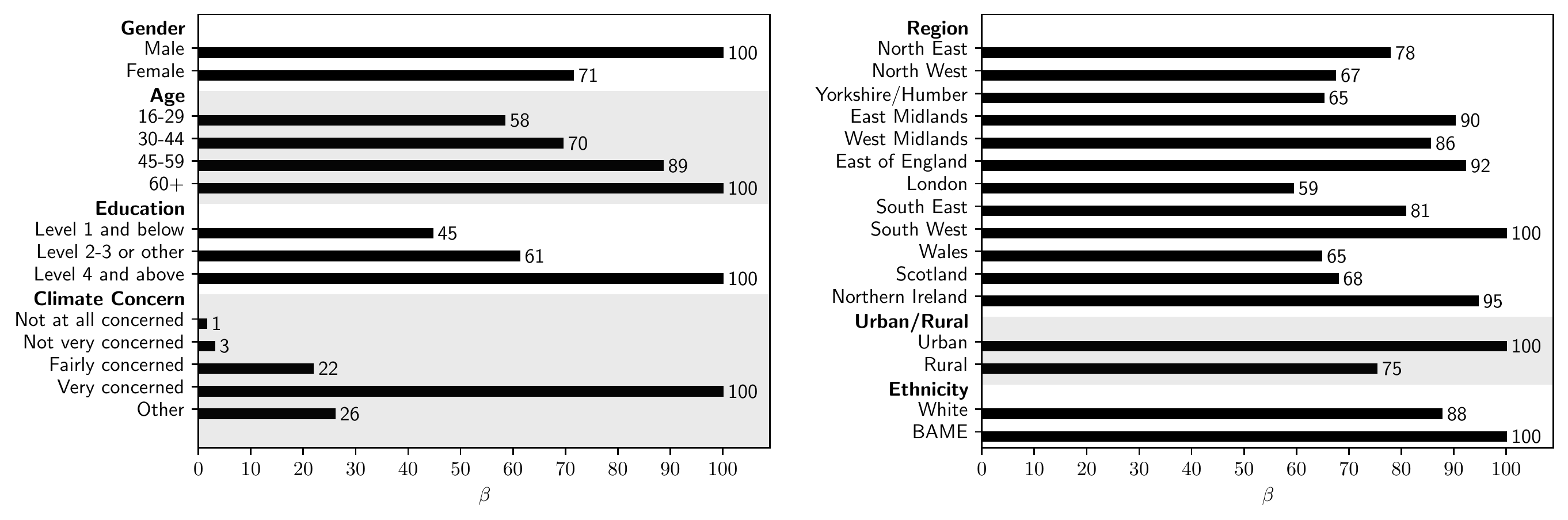}
\end{center}

\paragraph{Estimates of $\boldsymbol{q_i}$} $\beta_0 = 24.3\%$. Frequencies of $q_i$ values above 15\% are not shown due to privacy concerns. 6.8\%, 1\% of agents in pool, background datasets respectively are not presented for this reason.
\begin{center}
    \includegraphics[width=0.6\textwidth]{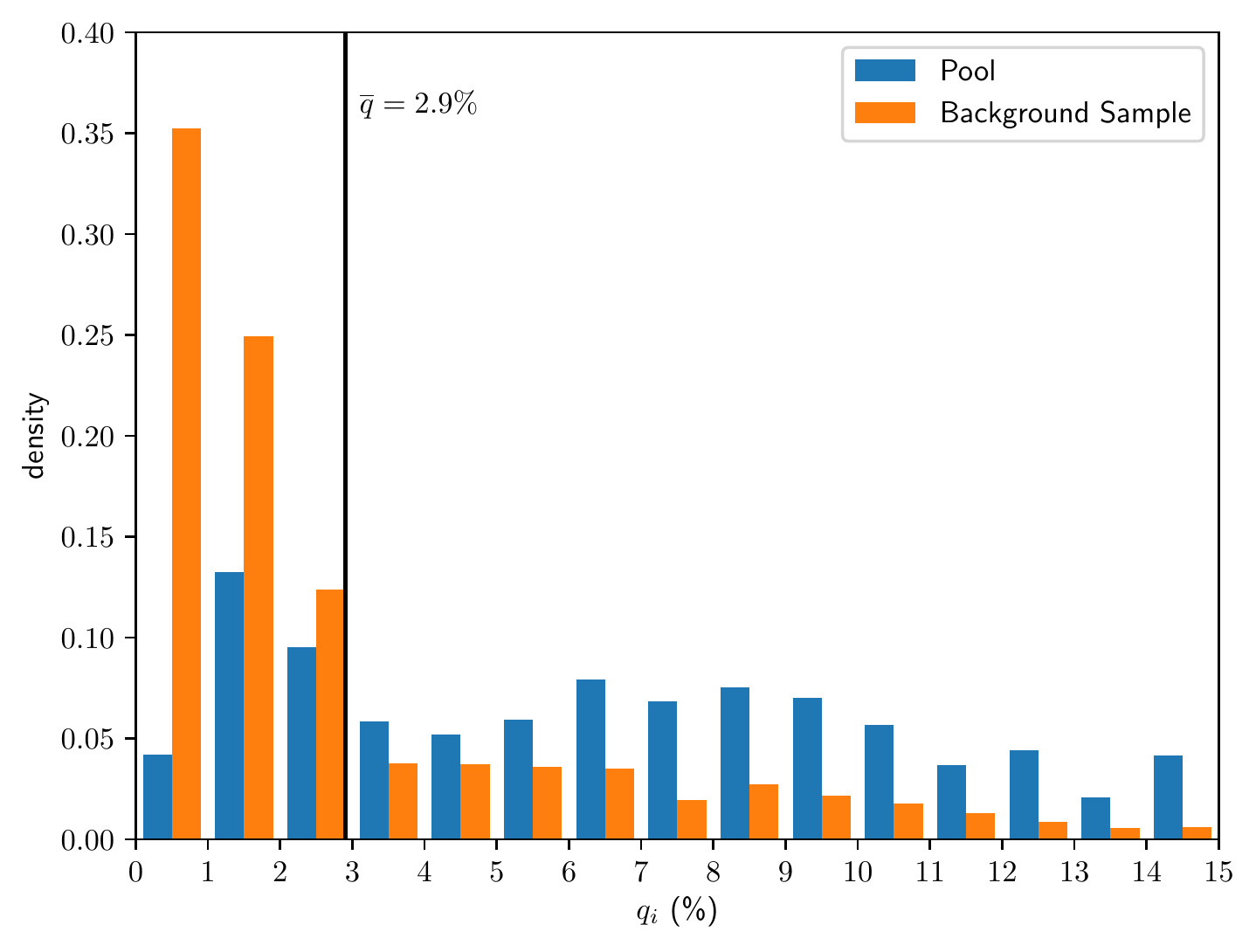}
\end{center}

\paragraph{Test for calibration of $q_i$ estimates} Frequencies of $q_i$ values above 20\% are not shown due to privacy concerns. Less than 0.4\% of agents in either dataset are not presented for this reason.
\begin{center}
    \includegraphics[width=0.6\textwidth]{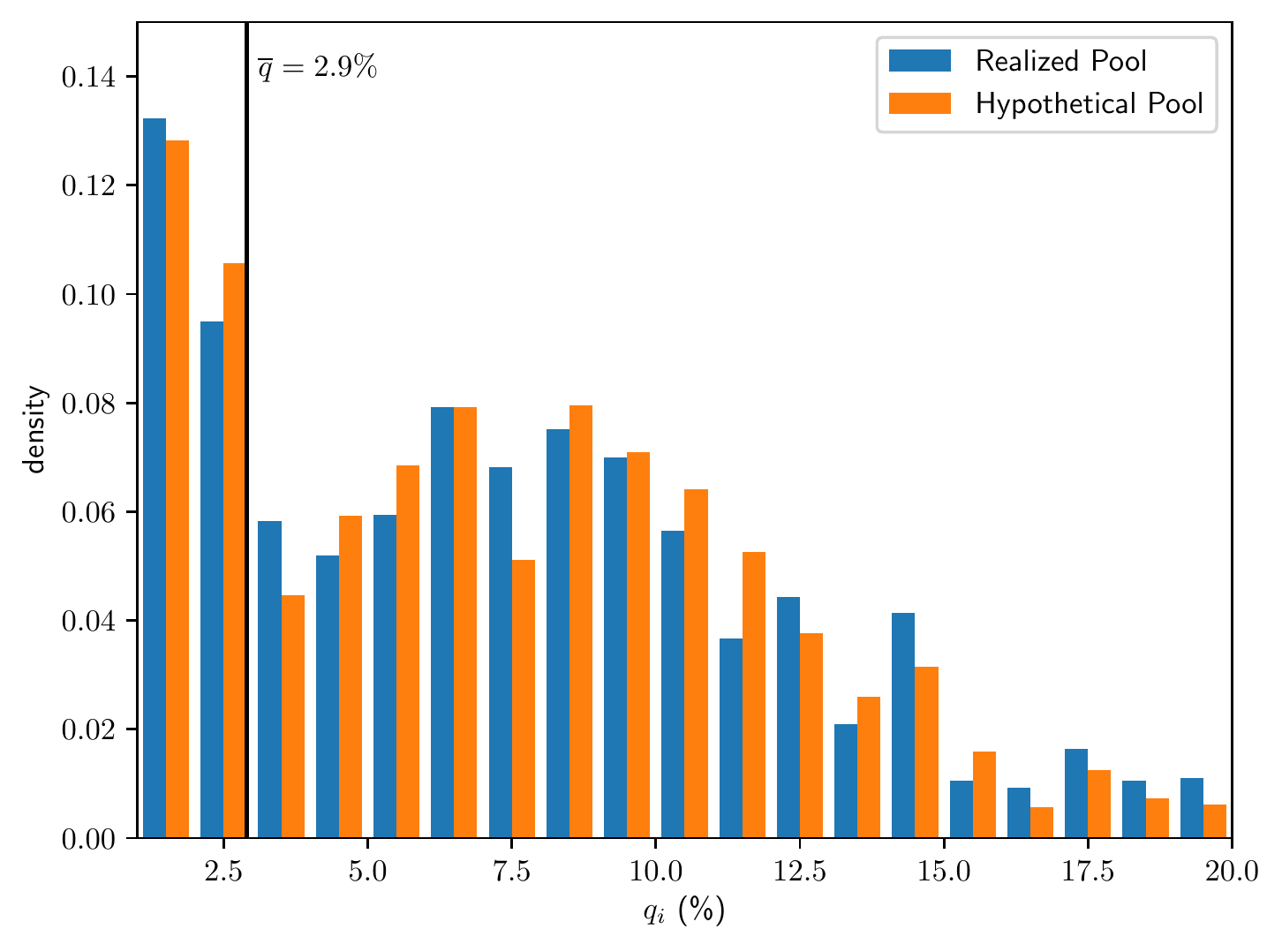}
\end{center}

\paragraph{Comparison of Realized Pool Composition and Hypothetical Pool Composition}

\begin{center}
    \includegraphics[width=\textwidth]{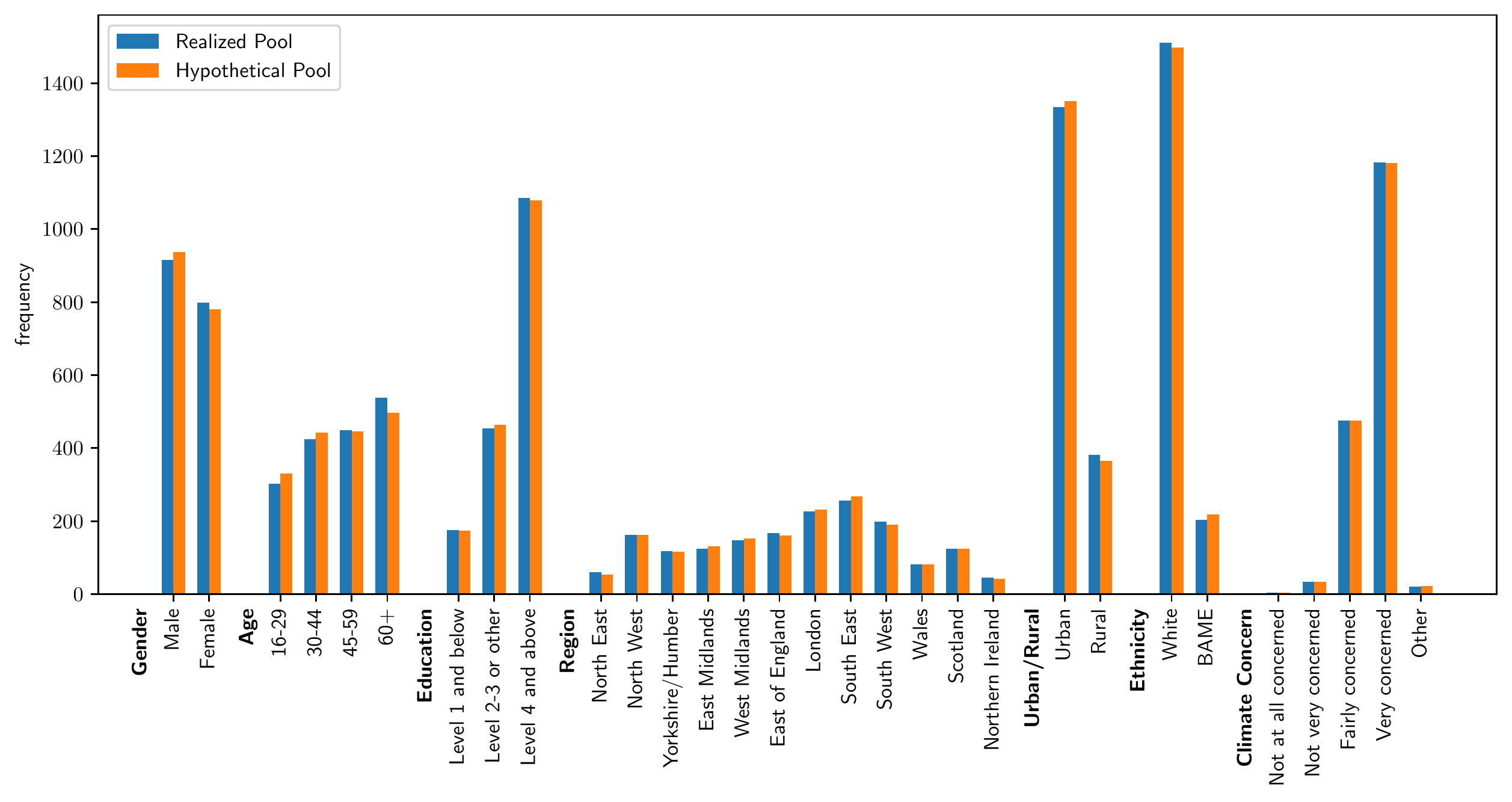}
\end{center}

\paragraph{Testing model capture of 2-correlations}
\begin{center}
    \includegraphics[width=.75\textwidth]{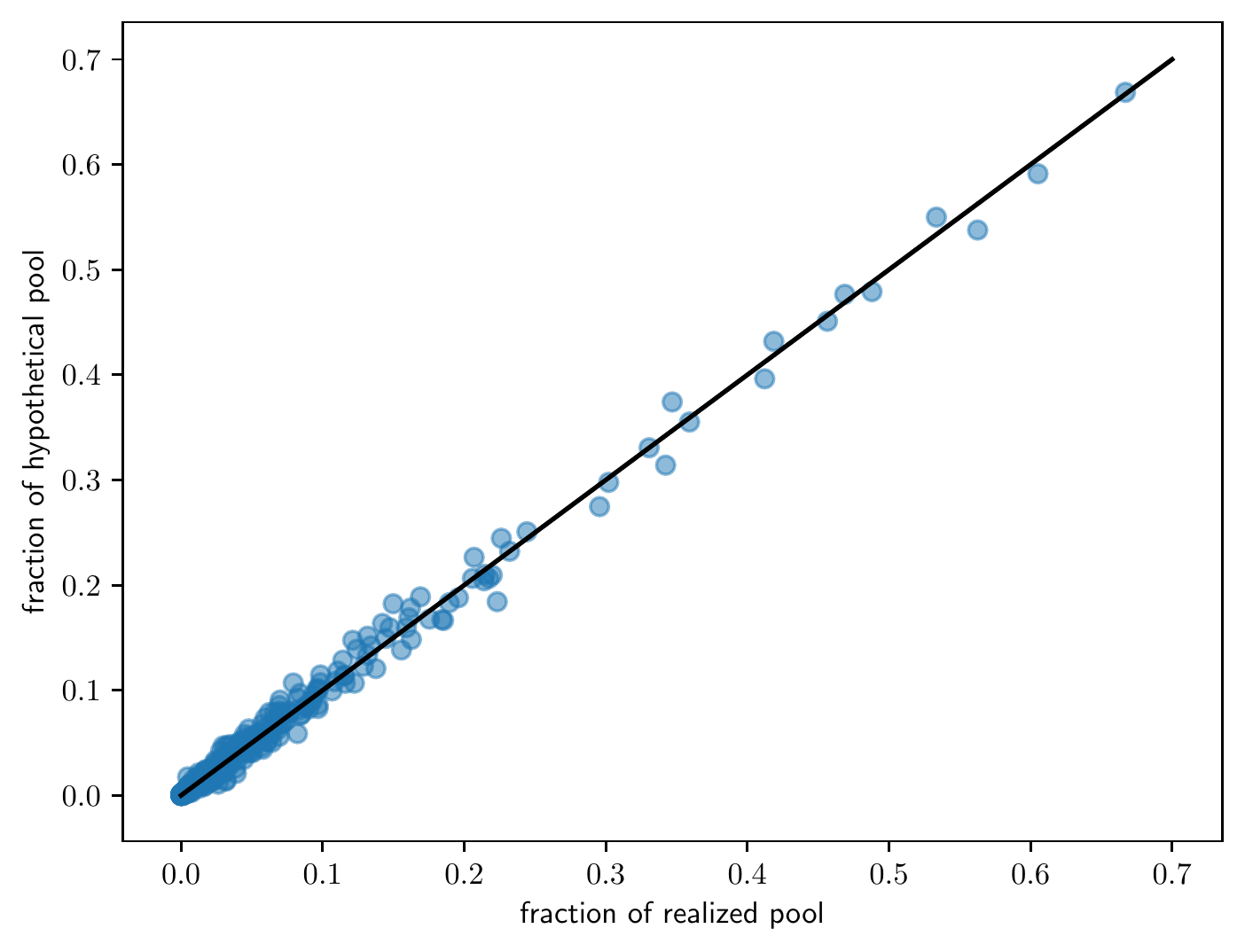}
\end{center}

\begin{center}
    \textit{\textbf{(Figures from \Cref{sec:app_experiments})}}
\end{center}

\paragraph{End-to-End Fairness Results for Varied $\boldsymbol{r}$ Values} This figure demonstrates that, for large enough $r$, we can get $k/n$ end-to-end probability for all agents in the synthetic population when we include the Climate Concern Level feature. We only include analysis for only one $r$ value because the $r$ values must be extremely large to give any end-to-end guarantees when the Climate Concern Feature is included, and running the analysis with such large $r$ costs substantial computational time.
\begin{center}
    \includegraphics[width=\textwidth]{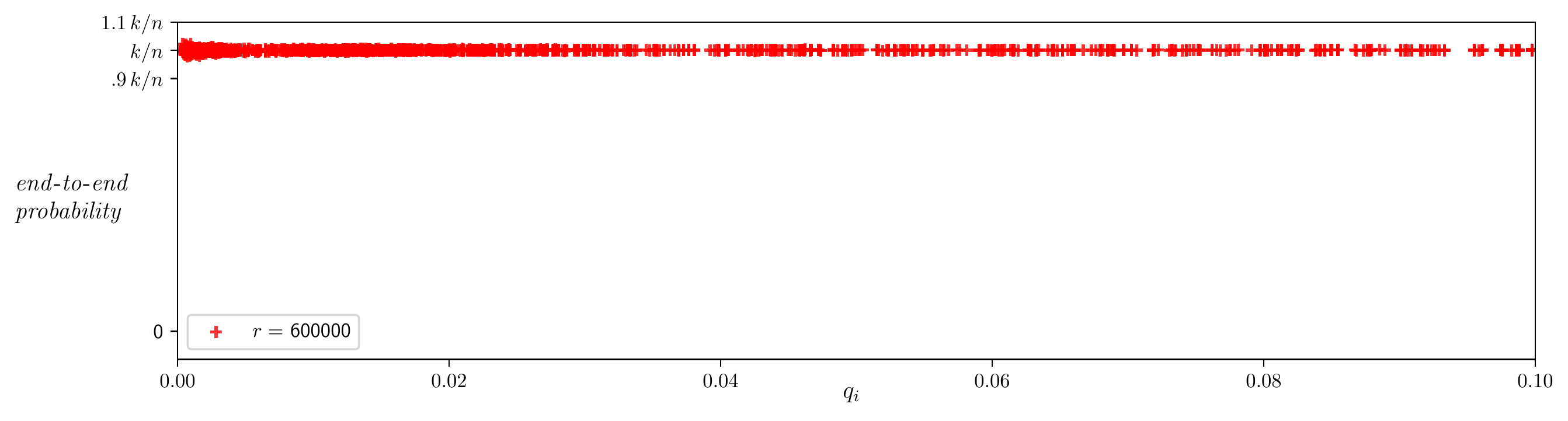}
\end{center}

\end{document}